\documentclass[prx,aps,10pt,twocolumn,superscriptaddress,floatfix,nofootinbib,longbibliography]{revtex4-2}
\pdfoutput=1 
\usepackage{amsmath,amssymb,amsfonts,amsthm}
\usepackage{graphicx}
\usepackage{enumitem}
\usepackage{bm}
\usepackage{rotating}
\usepackage[colorlinks=true,linkcolor=blue, citecolor=blue, urlcolor=blue, bookmarks]{hyperref}
\usepackage{pbox}
\usepackage{array}
\usepackage{mathtools}
\usepackage{leftidx}
\usepackage{lmodern}
\usepackage[normalem]{ulem}
\usepackage{braket}
\usepackage{tensor}
\usepackage[usenames,dvipsnames]{xcolor}
\usepackage{float}
\usepackage{footnote}
\usepackage{dsfont}
\usepackage{mathrsfs}
\usepackage{bbm}
\usepackage{tikz}
\usepackage{comment}
\usepackage[many]{tcolorbox}
\usepackage{booktabs}

\usetikzlibrary{calc,matrix,positioning,shapes}

\DeclareMathOperator{\diag}{diag}

\newcommand{\R}{\mathbb{R}}
\newcommand{\C}{\mathbb{C}}

\newcommand{\dd}{\mathrm{d}}
\newcommand{\rr}[1]{\left(#1\right)}

\DeclareMathOperator{\Tr}{Tr}
\renewcommand{\tilde}{\widetilde}
\renewcommand{\bar}{\overline}

\newcommand{\N}{\mathbb{N}}
\renewcommand{\Re}{\mathrm{Re}}
\renewcommand{\Im}{\mathrm{Im}}
\renewcommand{\openone}{\mathbbm{1}}

\newcommand{\ketbra}[2]{{\left| {#1} \right\rangle \!\!\left\langle {#2} \right|}}

\DeclareMathOperator{\rank}{rank}

\DeclareMathOperator{\im}{im}

\DeclareMathOperator{\Span}{span}
\newtheorem{proposition}{Proposition}

\newtheorem{lemma}{Lemma}
\newtheorem{definition}{Definition}
\newtheorem{corollary}{Corollary}
\theoremstyle{definition}
\newtheorem{example}{Example}
\theoremstyle{definition}

\newcommand{\ordprod}{%
    \mathop{\overrightarrow{\prod}}\displaylimits
}

\newcommand{\exampleqed}{\hfill{$\blacklozenge$}}
\makeatother

\begin{document}

\title{Gaussian-augmented bosonic matrix-product states: theory and applications}

\author{Erickson Tjoa}
\affiliation{Max-Planck-Institut f\"ur Quantenoptik, Hans-Kopfermann-Stra\ss e 1, D-85748 Garching, Germany}
\affiliation{Munich Center for Quantum Science and Technology (MCQST), Schellingstraße 4, D-80799 Munich, Germany}

\author{J. Ignacio Cirac}
\affiliation{Max-Planck-Institut f\"ur Quantenoptik, Hans-Kopfermann-Stra\ss e 1, D-85748 Garching, Germany}
\affiliation{Munich Center for Quantum Science and Technology (MCQST), Schellingstraße 4, D-80799 Munich, Germany}

\begin{abstract}

    We propose and analyze the structure of a family of bosonic quantum many-body states that have the following features: (i) they include all pure Gaussian states and finite-dimensional matrix-product states as subclasses; (ii) their expectation values can be efficiently computed, allowing them to be used, among other things, for variational calculations; (iii) they admit exact parent Hamiltonians expressed as simple functions of the bosonic creation and annihilation operators.

\end{abstract}

\maketitle

\section{Introduction}

Many-body bosonic systems appear naturally throughout physics, from condensed-matter systems and ultracold atomic gases to quantum optics and quantum field theory \cite{Bloch2008ultracold,Jaksch1998OpticalLattices,Fisher1989BosonLocalization,Cazilla2011bosons1D,lewenstein2012ultracold,Treps2020quantumoptics,Plenio2008cavityarrays,GlimmJaffee1968phi4-1}. One of the main challenges in modern physics is understanding the phenomena of strongly-interacting quantum many-body systems beyond the perturbative regimes. Thus there are extensive efforts in developing methods to tackle both static and dynamical problems involving strongly-interacting bosonic systems. 

Two standard approaches to address these questions are Monte Carlo methods \cite{Ceperley1995MonteCarlo,DuBois2001variationalMonteCarlo,Flottat2015bosonicKondo,Bosetti2015montecarlophi4,Bronzin2015montecarlophi4} and variational wavefunctions \cite{gross1961structure,pitaevskii1961vortex,Pitaevskii1999,pethick2008BEC,Rokshar1991gutzwiller,Shi2018NonGaussianStates,Guaiata2019gaussianTDVP,hackl2020geometry,Qian2023NonGaussianBosons}. Among the variational methods, a widely used class is based on Gaussian states \cite{Weedbrook2012GaussianQI}: they are pure states with Gaussian wavefunctions and can be efficiently parametrized by $O(N^2)$ parameters for $N$ modes. As a variational family, Gaussian states can be used to describe Bose–Einstein condensation \cite{gross1961structure,pitaevskii1961vortex,Pitaevskii1999,pethick2008BEC}, quasiparticle excitations, and (non-)equilibrium dynamics \cite{Guaiata2019gaussianTDVP}. However, their efficient description comes at a cost of having very constrained correlations: due to Wick's theorem, Gaussian states are completely characterized by their one- and two-point correlation functions. Consequently, some non-Gaussian variational families have been introduced by applying certain types of non-Gaussian transformations on Gaussian states \cite{Shi2018NonGaussianStates,Qian2023NonGaussianBosons}. 

Another complementary variational approach is provided by tensor networks  \cite{Cirac2021TensorNetworks,PerezGarcia2007MPSRepresentations,Verstraete2008MPS,Orus2014TensorNetworks,Bridgeman2017TensorNetworks}. In particular, in one-dimensional settings the so-called matrix-product states (MPS) provide state-of-the-art variational methods through the density-matrix renormalization group (DMRG) algorithm and related algorithms \cite{White1992DMRG,Verstraete2004dmrg,Schollwoeck2011DMRGReview,Vidal2004TEBD}. However, for bosonic systems these methods require truncation of the local Hilbert space that may limit its applicability and obscure the physics under study. Several ideas have been put forward in recent years to improve the numerical methods, e.g., by using local basis optimization \cite{Jeckelmann1998HolsteinPolaron,Zhang1998dmrg,Brockt2015LocalBasis,Stolpp2021BosonicLocalBasis} or displacing the Hamiltonian before truncation \cite{Guo2012displaced}. Several proposals adapt a combination of Gaussian and MPS frameworks \cite{Iblisdir2007mps,Schuch2008GaussianMPS,frenzel2013matrix,Michelsen2025FunctionalMPS,janik2019exact}. 

Besides their usefulness as a computational tool, MPS also provide an important analytical framework for quantum many-body physics. Many physically relevant lattice models are translationally invariant, and a translationally invariant MPS is fully specified by a single local tensor and possibly some boundary data for finite chains. Within the MPS framework, properties of the local tensor can be used to classify one-dimensional gapped phases, including symmetry-breaking and symmetry-protected topological phases \cite{Chen2011CompleteClassification,Schuch2011ClassifyingPhases,Pollmann2012SymmetryProtection}. Furthermore, for a given MPS one can systematically associate a local frustration-free parent Hamiltonian for which the MPS is an exact ground state. This construction provides physical models whose ground-state properties can be characterized directly from the tensors \cite{PerezGarcia2007MPSRepresentations,Cirac2021TensorNetworks}.

In this paper we propose and analyze the structure of a rich family of bosonic many-body states called Gaussian-augmented bosonic matrix-product states (GA-BMPS) that have the following features: (i) they include all pure Gaussian states and truncated (finite-dimensional) MPS; (ii) expectation values can be efficiently computed and hence, among other things, they can be used for variational calculations; (iii) they have exact parent Hamiltonians that are simple functions of the canonical operators. Our work unifies previous constructions \cite{PerezGarcia2007MPSRepresentations,Schuch2008GaussianMPS,Weedbrook2012GaussianQI,frenzel2013matrix} as special cases, with its non-Gaussian expressivity controlled by the bond dimension.

This paper is organized as follows. In Section~\ref{sec: GA-BMPS} we review the required notions of Gaussian states and finite-dimensional MPS and introduce the ansatz. In Section~\ref{sec: structure} we analyze its structure, expressivity, transfer-matrix calculus, alternative representations, gauge freedom, and natural extensions. In Section~\ref{sec: parent-hamiltonian} we construct parent Hamiltonians for these states. In Section~\ref{sec: variational} we present numerical tests of their viability as a variational family. We conclude in Section~\ref{sec: discussions} with extensions and open directions.

\section{Gaussian-augmented bosonic MPS}
\label{sec: GA-BMPS}

We consider a 1D lattice of $N$ sites with one bosonic mode per site. The generalizations to include multiple modes per site and/or spins are straightforward (Sec.~\ref{subsec: extensions}). The local Hilbert space is described by the Fock space
\begin{align*}
    \mathcal{F}
    =
    \overline{\Span_\C\{\ket{n}:n\in\N_0\}}
    \cong
    L^2(\R)\,,
\end{align*}
spanned by orthonormal basis vectors $\ket{n}$ (Fock states) $\ket{n}=\tfrac{1}{\sqrt{n!}}(a^\dagger)^n\ket{0}$. The canonical operators at each site $j$ are given by the annihilation and creation operators (collectively called ladder operators) $a_j,a_j^\dagger$ satisfying the canonical commutation relation (CCR) $[a_i,a_j^\dagger] = \delta_{ij}\openone$. The quadrature operators are related to the ladder operators as $x_j\coloneqq \frac{1}{\sqrt{2}}(a_j+a^\dagger_j)$ and $ p_j\coloneqq\frac{i}{\sqrt{2}}(a_j^\dagger-a_j)$. 

\subsection{Gaussian states}

Let $\mathbf a=(a_1,\ldots,a_N)^T$ be the vector of annihilation
operators. Gaussian unitaries are generated by Hamiltonians that are
at most quadratic in the bosonic creation and annihilation operators.
For a time-independent generator, we write
\begin{equation}
    \begin{aligned}
        U_{\mathsf G}
        &=e^{i\chi}e^{-iH_{\mathsf G}},
        \qquad \chi\in\R,\\
        H_{\mathsf G}
        &=
        \mathbf a^\dagger h\mathbf a
        +
        \frac{1}{2}\left(
            \mathbf a^\dagger\Delta\mathbf a^{\dagger T}
            +
            \mathbf a^T\bar{\Delta}\mathbf a
        \right)
        +
        \mathbf a^\dagger\boldsymbol{\eta}
        +
        \boldsymbol{\eta}^\dagger\mathbf a,
    \end{aligned}
\end{equation}
where $h=h^\dagger$, $\Delta=\Delta^T$, and
$\boldsymbol{\eta}\in\C^N$. Defining the quadratures $\mathbf{R} \coloneqq (x_1...,x_N,p_1,...,p_N)^T$, the CCR takes the form $[R_i,R_j]=i\Omega_{ij}\openone$ where $\Omega$ is the symplectic form
\begin{align}
    \Omega = \begin{bmatrix}
        0 & \openone_N \\ -\openone_N & 0
    \end{bmatrix}\,.
\end{align}
In the Heisenberg picture, $U_\mathsf{G}$ implements affine symplectic transformation on $\mathbf{R}$ \cite{Weedbrook2012GaussianQI},
\begin{align}
    U_{\mathsf G}^\dagger\mathbf{R} U_{\mathsf G}
    =
    S\mathbf{R}+\mathbf{d}\,,
    \label{eq: Gaussian-affine-map}
\end{align}
where $S\in Sp(2N,\R)$ are symplectic matrices (they satisfy $S\Omega S^T=\Omega$) and $\mathbf{d}\in \R^{2N}$. 

A pure Gaussian state is, up to global phase, given by
\begin{align}
    \ket{\Psi_{\mathsf G}}= U_{\mathsf{G}}\ket{0}^{\otimes N}\,,
    \label{eq: Gaussian-states}
\end{align}
where $\ket{0}$ is the Fock vacuum.

\subsubsection{Single-mode Gaussian states} 

Two particularly important families of single-mode Gaussian states are \textit{coherent states} and \textit{squeezed states},
\begin{equation}
    \begin{aligned}
        \ket{\alpha}  &= D(\alpha)\ket{0}\,,\quad D(\alpha) = e^{\alpha a^\dagger - \alpha^*a}\,,\\
        \ket{\zeta} &= S(\zeta)\ket{0}\,,\,\quad S(\zeta) =  e^{\frac{1}{2}\rr{\zeta^* a^2 - \zeta(a^\dagger)^2}}\,,
    \end{aligned}
\end{equation}
defined for $\alpha,\zeta\in \C$. Up to a global phase, any single-mode Gaussian unitary can be written as
\begin{align}
    U_\mathsf{G} = D(\alpha)S(\zeta)e^{-i\theta a^\dagger a}
    \label{eq: single-mode-gaussian-unitary}
\end{align}
and hence any pure Gaussian state, up to a global phase, can be written as \textit{squeezed coherent states}
\begin{align}
    \ket{\Psi_{\mathsf{G}}} = D(\alpha)S(\zeta)\ket{0}\,.
\end{align}

For our purposes, it will be useful to define a non-unitary parametrization of squeezed coherent states. By writing $\zeta=re^{i\phi}$, we note that
\begin{align}
    \ket{\kappa,\ell} \coloneqq e^{\kappa (a^\dagger)^2+\ell a^\dagger}\ket{0}\propto D(\alpha)S(\zeta)\ket{0}\,,
    \label{eq: squeezed-coherent-state}
\end{align}
where $\kappa = -\frac{1}{2}e^{i\phi}\tanh r$ (hence $|\kappa|<\frac{1}{2}$) and $\ell = \alpha -2\kappa \alpha^*$. The unnormalized squeezed coherent state $\ket{\kappa,\ell}$ has squared norm
\begin{align}
    \braket{\kappa,\ell|\kappa,\ell} = \frac{1}{\sqrt{1-4|\kappa|^2}}\exp\rr{\frac{|\ell|^2+\kappa \bar{\ell}^2 + \bar{\kappa}\ell^2}{{1-4|\kappa|^2}}}\,.
    \label{eq: normalization-squeezed-coherent-state}
\end{align}

Given single-mode Gaussian states, we can define the so-called \textit{photon-added} Gaussian states
\begin{align}
    \ket{{n,\kappa,\ell}}
    &\coloneqq (a^\dagger)^ne^{\kappa (a^\dagger)^2+\ell a^\dagger}\ket{0} \,,
    \label{eq: photon-added-Gaussian-states}
\end{align}
where the photon addition is also unnormalized for convenience. For fixed $\kappa$ and $\ell$, states with different values of $n$ are generally not orthogonal. When $\kappa=\ell=0$ they reduce to Fock states and they are Gaussian only when $n=0$. 

\subsubsection{Multimode Gaussian states}

A pure multimode Gaussian state is likewise obtained by applying a
Gaussian unitary to the multimode vacuum. By the Bloch-Messiah
decomposition (Appendix~\ref{appendix: Bloch-Messiah}), every such state
can be written, up to a global phase, as
\begin{align}
    \ket{\Psi_{\mathsf G}^N}
    =
    \mathsf{U}
    \bigotimes_{j=1}^N
    \left[
        D_j(\alpha_j)S_j(\zeta_j)\ket{0}_j
    \right],
    \label{eq: bloch-messiah-decomposition-states}
\end{align}
where $\mathsf{U}$ is a passive linear-optical unitary of the form
\begin{align}
    \mathsf{U}
    \coloneqq
    e^{
        i\sum_{j,k}
        \mathsf M_{jk}a_j^\dagger a_k
    },
    \label{eq: PLO}
\end{align}
where $\mathsf M$ is Hermitian. In particular, $\mathsf{U}$ preserves
the total particle number,
\begin{align*}
    \mathsf{U}^\dagger n_{\mathsf{tot}}\mathsf{U}
    &=
    n_{\mathsf{tot}}\,,\quad 
    n_{\mathsf{tot}}
    \coloneqq
    \sum_{j=1}^N a_j^\dagger a_j
\end{align*}

By Wick's theorem, a Gaussian state is completely characterized by its
displacement vector and covariance matrix
\begin{align}
    r_i
    \coloneqq
    \braket{R_i}\,,\quad \Gamma_{ij}
    \coloneqq
    \frac{1}{2}
    \braket{
        \left\{
            R_i-r_i,
            R_j-r_j
        \right\}
    }\,.
\end{align}
Consequently, one can use the real parameters $\{\Gamma_{ij},r_i\}$, rather than $\{\mathsf M_{jk},\alpha_j,\zeta_j\}$, to parametrize the Gaussian
state.

To state translational invariance in terms of lattice sites, we group the
quadratures at each site into
\begin{align}
    \mathbf R^{(j)}
    &\coloneqq
    \begin{pmatrix}
        x_j\\
        p_j
    \end{pmatrix},
    \qquad
    \mathbf r^{(j)}
    \coloneqq
    \braket{\mathbf R^{(j)}}
\end{align}
so that $\mathbf{R}^{(j)}_1=x_j$ and $\mathbf{R}^{(j)}_2 = p_j$, and similarly let $\Gamma^{(j,k)}$ denote the $2\times2$ covariance
block between sites $j$ and $k$, with entries
\begin{align}
    \bigl[\Gamma^{(j,k)}\bigr]_{\mu\nu}
    \coloneqq
    \frac{1}{2}
    \braket{
        \{
            R^{(j)}_\mu-r^{(j)}_\mu,
            R^{(k)}_\nu-r^{(k)}_\nu
        \}
    }\,,
\end{align}
where $\mu,\nu\in\{1,2\}$. For a translationally invariant Gaussian state,
$\mathbf r^{(j)}=\mathbf r$ is independent of the site, while
$\Gamma^{(j,k)}=\gamma(j-k)$ depends only on the lattice separation.

\subsection{Finite-dimensional MPS}
Consider a one-dimensional lattice of $N$ sites with local Hilbert space 
$\C^d$ and orthonormal basis $\{\ket{i}\}_{i=0}^{d-1}$. A matrix product 
state (MPS) with bond dimension $D$ is
\begin{align}
    \ket{\Psi_N(A)}
    =
    \sum_{i_1,\ldots,i_N=0}^{d-1}
    \Tr\rr{BA^{i_1}_1\cdots A^{i_N}_N}
    \ket{i_1\cdots i_N}\,,
    \label{eq: MPS-qudit}
\end{align}
where $A_j^i\in M_D(\C)$ and $B\in M_D(\C)$ is a boundary matrix. We say that the MPS has uniform tensors if $A_j^i=A^i$ for all $j$, and that it has periodic boundary condition (PBC) if $B=\openone$.

For a uniform MPS, expectation values can be computed using the
transfer matrices
\begin{align}
    E
    &=
    \sum_{i=0}^{d-1}
    \bar{A^i}\otimes A^i\,,
    \quad
    E_O
    =
    \sum_{m,n=0}^{d-1}
    \braket{m|O|n}
    \bar{A^m}\otimes A^n\,,
\end{align}
where $O$ is a local operator. For non-uniform tensors, one instead
uses the corresponding site-dependent transfer matrices $E_j$ and
$E_{O,j}$. The new issue in the infinite-dimensional setting is that
the sums over physical indices must be well-defined and explicitly
computable.

We will use the standard notion of injectivity
\cite{PerezGarcia2007MPSRepresentations,Fannes1992FinitelyCorrelated}.
A uniform tensor $A$ is injective after blocking $l$ sites if
\begin{align}
    \Span_\C
    \{A^{i_1}\cdots A^{i_l}\}
    =
    M_D(\C)\,.
\end{align}
For an injective uniform MPS tensor, the standard finite-range parent
Hamiltonian has the periodic MPS as its unique ground state for all
sufficiently large system sizes and is uniformly gapped. On an open
chain, its ground space generally contains additional boundary degrees
of freedom. More generally, after sufficient blocking, a tensor in canonical form decomposes into injective blocks. The corresponding block-injective
parent Hamiltonian has a finite ground-state degeneracy that is independent of the system size for all sufficiently large system sizes.

\subsection{Gaussian-augmented bosonic MPS (GA-BMPS)}
\label{subsec: ansatz}

\begin{definition}[GA-BMPS]
\label{def: CV-MPS}
Let $B,V_j,K_j,L_j\in M_D(\C)$ for all $j=1,2,...,N$. The Gaussian-augmented bosonic MPS ansatz \emph{(GA-BMPS)} is defined to be
\begin{align}
    \ket{\psi_N}
    =
    \mathsf{U}\Tr_D\left[
    B\ordprod_{j=1}^N
    V_j e^{K_j\otimes (a_j^\dagger)^2}e^{L_j\otimes a_j^\dagger}
    \right]\ket{0}^{\otimes N}
    \label{eq: exponential-ansatz}
\end{align}
where $\mathsf{U}$ is the passive linear unitary \eqref{eq: PLO} and the path-ordered product is
\begin{align*}
    &\ordprod_{j=1}^N
    V_j e^{K_j\otimes (a_j^\dagger)^2}
    e^{L_j\otimes a_j^\dagger}
    \notag\\
    &\qquad \coloneqq
    V_1 e^{K_1\otimes (a_1^\dagger)^2}
    e^{L_1\otimes a_1^\dagger}
    \cdots
    V_N e^{K_N\otimes (a_N^\dagger)^2}
    e^{L_N\otimes a_N^\dagger}\,.
\end{align*}
Here $\Tr_D$ denotes the trace over the $D$-dimensional auxiliary space on which $B,V_j,K_j,L_j$ act. We say that 
\begin{enumerate}[label=\emph{(\roman*)},leftmargin=*]
    \item $\ket{\psi_N}$ has uniform tensors if $V_j=V,K_j=K,L_j=L$ for all $j$, and it has periodic boundary condition if $B=\openone$. 
    
    \item $\ket{\psi_N}$ has commuting generators if $[K_j,L_j]=0$ for all $j$. 

    \item $\ket{\psi_N}$ is a bosonic MPS \emph{(BMPS)} if $\mathsf{U}=\openone$. 
\end{enumerate}
\noindent We impose $\rho(K_j)<\frac{1}{2}$ for all $j$ which is sufficient for the state to have finite norm.
\end{definition}
\noindent This state has finite norm since the transfer matrix (Sec.~\ref{sec: structure}) is finite (Appendix~\ref{appendix: ansatz-absolute-convergence}). To reduce notational clutter, in what follows we assume GA-BMPS with uniform tensors unless otherwise stated. We mention that this family contains the subfamily $\mathsf U=\openone$, $K_j=0$, and $V_j=\openone$ in \cite{frenzel2013matrix}.

While the passive linear-optical unitary $\mathsf{U}$ comes from the Gaussian formalism, the BMPS part 
\begin{align}
    \ket{\Psi_N} \coloneqq  \Tr_D\left[
    B\ordprod_{j=1}^N
    V_j e^{K_j\otimes (a_j^\dagger)^2}e^{L_j\otimes a_j^\dagger}
    \right]\ket{0}^{\otimes N}
    \label{eq: exponential-ansatz-BMPS}
\end{align}
can be shown to be an infinite-dimensional uniform MPS: indeed, we can re-express it in the Fock basis 
\begin{align}
    \ket{\Psi_N}
    =
    \sum_{n_1,\ldots,n_N = 0}^\infty
    \Tr\rr{BA^{n_1}\cdots A^{n_N}}
    \ket{n_1\cdots n_N}\,.
    \label{eq: MPS-infinite-physical-index}
\end{align}
where the uniform bulk tensors are
\begin{align}
    A^n = \sqrt{n!}VC_n\,,\qquad C_n= \sum_{m=0}^{\lfloor n/2\rfloor}
    \frac{K^mL^{n-2m}}{m!(n-2m)!}\,.
\end{align}
Thus Def.~\ref{def: CV-MPS} defines a well-defined infinite-dimensional bosonic MPS. For $D=1$, the BMPS part is a product of single-mode Gaussian states, whereas for $D\geq2$ the ansatz can describe non-Gaussian states.

\section{Structure of GA-BMPS}
\label{sec: structure}

\subsection{Connection with Gaussian states and MPS}

The GA-BMPS family of states contains well-known variational families of states as special cases, namely Gaussian states and (Fock-encoded) finite-dimensional MPS.

First, by setting the bond dimension $D=1$ and identifying $K_j=\kappa_j,L_j=\ell_j,V_j=v_j,B=b\in \C$ with $|\kappa_j|<\frac{1}{2}$, we obtain
\begin{align}
    \ket{\psi_N} = b\prod_{j=1}^N v_j\cdot \mathsf{U}\bigotimes_{j=1}^Ne^{\kappa_j (a^\dagger_j)^2+\ell_j a^\dagger_j}\ket{0}^{\otimes N} \,,
\end{align}
that is, the family of multimode Gaussian states with the non-unitary parametrization of the squeezed coherent states at each site $j$. Thus one can view the GA-BMPS as essentially the Bloch-Messiah formulation of Gaussian states but with  matrix-valued coherent amplitude and squeezing parameter. 

Second, we can recover the family of finite-dimensional qudit MPS by embedding them into the BMPS ansatz. Let $\mathcal{A}=\sum_{j=0}^{d-1}\mathcal{A}^j\otimes \ket{j}$ be an MPS tensor where $\mathcal{A}^j\in M_D(\C)$ and let the boundary matrix $\mathcal{B}\in M_D(\C)$. Given the ansatz \eqref{eq: exponential-ansatz}, now set
\begin{equation}
\begin{aligned}
    \mathsf{U} &= \openone\,,\qquad B\coloneqq \mathcal{B}\otimes \openone_d\,,\qquad  K = 0\,,\\
    V&\coloneqq \sum_{j=0}^{d-1}\sqrt{j!}\mathcal{A}^j\otimes\ketbra{0}{j}\,,\quad 
    L\coloneqq \openone_D\otimes \sum_{j=0}^{d-2}\ketbra{j+1}{j}\,.
\end{aligned}
\label{eq: MPS-embedding}
\end{equation}
Since $L$ is nilpotent with nilpotency index $d$, we recover the qudit MPS with bond dimension $D$ as a subfamily of BMPS with bond dimension $Dd$. The bond dimension of the BMPS is necessarily larger in order to embed $d$ matrices $\{\mathcal{A}^j\}$ into a single matrix $V$ and use the $d$-dimensional sector to encode the qudit Hilbert space using the nilpotency of $L$.

\subsection{Expressivity}

The GA-BMPS is expressive in that it can approximate any state in the full $N$-mode Fock space. This is because superposition of multimode coherent states is contained in GA-BMPS by setting $B=\openone, K_j=0$, and $L_j$ diagonal matrices. Since finite linear combinations of coherent states are dense in Fock space, allowing $D$ to increase enables us to approximate any state to arbitrary accuracy.

Furthermore, every GA-BMPS can be represented as a commuting-generator GA-BMPS with larger bond dimension. Hence, by allowing $D$ to increase, the commuting-generator family is as expressive as the full family. For simplicity, we illustrate the construction for PBC ($B=\openone$) and diagonalizable $K,L$; the general argument is given in Sec.~\ref{subsec: alternative-forms}. Note that we are only concerned with the BMPS part and the passive unitary $\mathsf{U}$ is unaffected by this embedding.

For simplicity assume that we have uniform tensors. Starting from the state $\ket{\Psi_N}$ in \eqref{eq: exponential-ansatz-BMPS}, we first diagonalize $K,L$ to obtain
\begin{align*}
    K &= XD^{[K]}X^{-1}\,,\quad D^{[K]}=\diag(\kappa_1,...,\kappa_D)\,,\\
    L &= YD^{[L]}Y^{-1} \,,\quad\,\,\,\, D^{[L]}=\diag(\ell_1,...,\ell_D)\,,
\end{align*}
so that writing $Z=X^{-1}Y$ we have
\begin{align*}
    Ve^{K\otimes (a^\dagger)^2}e^{L\otimes a^\dagger}\ket{0}
    &= VXe^{D^{[K]}\otimes (a^\dagger)^2}Ze^{D^{[L]}\otimes a^\dagger}Y^{-1}\ket{0}\,. 
\end{align*}
Now define  
\begin{align}
    Z_{ij} \coloneqq \braket{\kappa_i|Z|\ell_j}\,,\quad W^{ij}\coloneqq \braket{\ell_i|Y^{-1}VX|\kappa_j}
\end{align}
and let $\mathsf{V,K,L}\in M_{D^2}(\C)$ be given by
\begin{equation}
    \begin{aligned}
        \mathsf{K} &= K\otimes \openone_D\,,\quad \mathsf{L} = \openone_D\otimes L\,,\\
        \mathsf{V} &\coloneqq (X\otimes Y)\sum_{ijkl=1}^D Z_{ij}W^{jk}\ketbra{ij}{kl}(X^{-1}\otimes Y^{-1}) \,.
    \end{aligned}
\end{equation}
Then the resulting BMPS is given by
\begin{align}
    \ket{\Psi_N} &= \Tr_{D^2}\left[\ordprod_{j=1}^N\mathsf{V}e^{\mathsf{K}\otimes (a_j^\dagger)^2}e^{\mathsf{L}\otimes a_j^\dagger}
    \right]\ket{0}^{\otimes N}\,,
\end{align}
which has commuting generator $[\mathsf{K},\mathsf{L}] = 0$ with bond dimension $D^2$. 

In what follows, without loss of generality we will focus on the commuting-generator GA-BMPS as a variational ansatz without losing expressivity from the non-commuting $K,L$ family. We will discuss some other alternative forms of the GA-BMPS family in Sec.~\ref{subsec: alternative-forms} including the case when $K,L$ are non-diagonalizable.

\subsection{Explicit computation}

Next, we show that we can calculate the transfer matrices of the GA-BMPS family explicitly, which in turn allows us to calculate physically relevant quantities. 

We first separate the BMPS transfer-matrix calculation from the
effect of the passive unitary. Since $\mathsf{U}$ is unitary, the
expectation value of a local observable $O_j$ is
\begin{align}
    \braket{O_j}_{\psi_N}
    &=
    \frac{
        \braket{
            \Psi_N|
            \mathsf{U}^\dagger O_j\mathsf{U}
            |\Psi_N
        }
    }{
        \braket{\Psi_N|\Psi_N}
    }\,,
    \label{eq: local-expectation-value}
\end{align}
where $\ket{\Psi_N}$ is the BMPS defined in
Eq.~\eqref{eq: exponential-ansatz-BMPS}. Let $\mathcal{U}\in \mathrm{U}(N)$ be the single-particle unitary associated with the passive unitary $\mathsf{U}$, i.e., 
\begin{align}
    \mathsf{U}^\dagger a_j\mathsf{U}
    =
    \sum_{k=1}^N \mathcal{U}_{jk}a_k\,.
    \label{eq: passive-observable-transformation}
\end{align}
It follows that if $O_j$ is a polynomial of fixed degree in $a_j,a_j^\dagger$, then $\mathsf{U}$ maps it to a polynomial of the same degree but potentially all modes $k$. Each resulting monomial can nonetheless be evaluated by standard transfer-matrix contraction, hence polynomial observables remain tractable. The situation is slightly different for Gaussian operators: for displacement operators,
\begin{align}
    \mathsf{U}^\dagger D_j(\alpha)\mathsf{U}
    =
    \bigotimes_{k=1}^N
    D_k\left(\bar{\mathcal{U}}_{jk}\alpha\right),
    \label{eq: passive-displacement-transformation}
\end{align}
while a single-mode squeezing operator becomes
\begin{align}
    \mathsf{U}^\dagger S_j(\zeta)\mathsf{U}
    =
    e^{\frac{1}{2}
        (
        \bar\zeta\, b_j^2
        -
        \zeta (b_j^\dagger)^2
        )}\,,
    \quad
    b_j
    \coloneqq
    \sum_{k=1}^N \mathcal{U}_{jk}a_k\,.
\end{align}
This shows that certain non-polynomial observables, notably displacement operators, remain tractable after conjugation by the passive unitary $\mathsf{U}$. In the remainder of this subsection we derive the local transfer matrices of the BMPS part by setting $\mathsf{U}=\openone$: the passive unitary can subsequently be included whenever $\mathsf{U}^\dagger O\mathsf{U}$ allows for tractable transfer-matrix contractions.

Since $\mathsf{U}$ is unitary, the norm of the GA-BMPS is equal to that
of its BMPS part:
\begin{align}
    \braket{\psi_N|\psi_N}
    =
    \braket{\Psi_N|\Psi_N}
    =
    \Tr\rr{
        (\bar B\otimes B)E^N
    },
    \label{eq: norm-convergence}
\end{align}
where the BMPS transfer matrix is given by
\begin{align}
    E
    =
    \sum_{n=0}^\infty
    \bar{A^n}\otimes A^n
    \label{eq: transfer-matrix-general}
\end{align}
The series converges absolutely when
$\rho(K)<1/2$, as shown in Appendix~\ref{appendix: ansatz-absolute-convergence}.

We would like to do better by computing $E$ explicitly as a function of $V,K,L$. We now show that for the commuting-generator BMPS family, there is a simple closed-form expression in terms of these matrices.
\begin{proposition}[Transfer matrix]
    \label{prop: transfer-matrix}
    Consider the family of commuting-generator BMPS with $[K,L]=0$ and $\rho(K)<\frac{1}{2}$. Then the transfer matrix $E$ is given by
    \begin{equation}
        \begin{aligned}
            E &=  \bar V \otimes{V}\mathcal{E}(\bar{L},\bar{K},K,L) \,,\\
            \mathcal{E}&\equiv \mathcal{E}(\bar{L},\bar{K},K,L)=\Delta^{\frac{1}{2}}
                e^{
                    \Delta\rr{
                        \bar K\otimes L^2
                        +\bar L\otimes L
                        +\bar L^2\otimes K
                    }
                }\,,\\
            \Delta
            &=
            \rr{
                \openone\otimes\openone
                -4\bar K\otimes K
            }^{-1}\,.
        \end{aligned}
        \label{eq: transfer-matrix-bosons}
    \end{equation}
    Here $\Delta^{1/2}$ denotes the principal matrix square root.
\end{proposition}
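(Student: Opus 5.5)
Since $E=\sum_n \bar{A^n}\otimes A^n$ with $A^n=\sqrt{n!}\,VC_n$, we have $E=(\bar V\otimes V)\sum_n n!\,\bar C_n\otimes C_n$. The plan is therefore to show that $\mathcal{E}:=\sum_n n!\,\bar C_n\otimes C_n$ equals the stated closed form. The key observation is that $C_n=\frac{1}{n!}\braket{n|\ldots}$-type coefficients are generated by the operator-valued vector $e^{K\otimes (a^\dagger)^2}e^{L\otimes a^\dagger}\ket{0}=\sum_n \sqrt{n!}\,C_n\otimes\ket{n}$. Hence
\begin{align*}
\mathcal{E}=\bra{0}e^{\bar L\otimes a}e^{\bar K\otimes a^2}\,e^{K\otimes (a^\dagger)^2}e^{L\otimes a^\dagger}\ket{0},
\end{align*}
where the first factor lives in the bra copy and the second in the ket copy of $M_D\otimes M_D$. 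In this expression $\bar K,\bar L$ act as $\bar K\otimes\openone$, $\bar L\otimes\openone$, and $K,L$ act as $\openone\otimes K$, $\openone\otimes L$. All four matrices $\bar K\otimes\openone,\bar L\otimes\openone,\openone\otimes K,\openone\otimes L$ commute pairwise, because $[K,L]=0$ and they act on different tensor factors. The matrix coefficients therefore behave like commuting scalars, and the problem reduces to the scalar overlap $\braket{\kappa',\ell'|\kappa,\ell}$ of two unnormalized squeezed coherent states, evaluated in a commutative algebra.

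Next I compute the scalar overlap, generalizing \eqref{eq: normalization-squeezed-coherent-state} to distinct bra and ket parameters. Using the Bargmann (coherent-state) representation, $\braket{f|g}=\frac1\pi\int \bar f(\bar z) g(\bar z)\, e^{-|z|^2}d^2z$ with $f(z)=e^{\kappa' z^2+\ell' z}$ and $g(z)=e^{\kappa z^2+\ell z}$, this is a Gaussian integral. It gives
\begin{align*}
(1-4\bar\kappa'\kappa)^{-1/2}\exp\!\Big(\frac{\bar\kappa'\ell^2+\bar\ell'\ell+\bar\ell'^2\kappa}{1-4\bar\kappa'\kappa}\Big),
\end{align*}
which reduces to \eqref{eq: normalization-squeezed-coherent-state} at $\kappa'=\kappa$, $\ell'=\ell$. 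I would double-check this formula by expanding to low order in $\ell,\ell'$. Equivalently, one can avoid integrals: expand both sides as power series in the four commuting variables and compare coefficients via $\sum_n n!\,\bar c_n c_n$. That comparison is an identity of formal/analytic power series valid for $|\kappa'\kappa|<1/4$.

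Finally I lift the scalar identity to matrices. Both sides are analytic functions of the four commuting variables on the polydisc where $|4\bar\kappa'\kappa|<1$ and the $\ell$'s are arbitrary. In that region the power series $\sum_n n!\,\bar c_n c_n$ converges absolutely, as in Appendix~\ref{appendix: ansatz-absolute-convergence}, and equals the closed form. The holomorphic functional calculus for commuting matrices then substitutes $\bar K\otimes\openone$, $\openone\otimes K$, and so on. The condition $\rho(\bar K\otimes K)=\rho(K)^2<1/4$ puts the joint spectrum inside the domain, so $\Delta=(\openone-4\bar K\otimes K)^{-1}$ exists. Moreover, the branch $(1-w)^{-1/2}$ on $|w|<1$ corresponds to the principal square root $\Delta^{1/2}$, since $\sigma(\Delta)$ lies in the half-plane $\Re>1/2$.

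I expect the main obstacle to be this last step: justifying that the series in matrix variables converges and equals the functional-calculus expression. I would handle it by expanding $\Delta^{1/2}$ and the exponential as power series in $\bar K\otimes K$ and the other commuting generators. Absolute convergence would follow from bounding each matrix by norms in a basis where $\|K\|<1/2+\epsilon$ (choosing a norm adapted to $\rho(K)<1/2$). Coefficientwise agreement then follows from the scalar identity.
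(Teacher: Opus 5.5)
Your proposal is correct. Its reduction step is the same as the paper's: both write $E=(\bar V\otimes V)\sum_n n!\,\bar C_n\otimes C_n$ and recognize the sum as $\braket{0|e^{Pa}e^{Qa^2}e^{R(a^\dagger)^2}e^{Sa^\dagger}|0}$. Here $P=\bar L\otimes\openone$, $Q=\bar K\otimes\openone$, $R=\openone\otimes K$, $S=\openone\otimes L$ commute pairwise.

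The difference is in how the Gaussian is evaluated with matrix coefficients. The paper stays at the operator level. It uses the push-through identity (Corollary~\ref{cor: push-through-operators}) to move $e^{Pa}$ to the right, which produces a prefactor $e^{RP^2+SP}$ and the shifted linear coefficient $S+2RP$. It then writes the remaining vacuum element as a matrix-valued coherent-state integral. After checking integrability and $\rho(4QR)<1$, it asserts this equals $\Delta^{1/2}e^{\Delta Q(S+2RP)^2}$, and simplifies using $\Delta^{-1}=\openone-4QR$.

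You instead compute the four-parameter scalar overlap once. Your formula is exactly the paper's $\mathcal{E}(P,Q,R,S)$ under $P\to\bar\ell'$, $Q\to\bar\kappa'$, $R\to\kappa$, $S\to\ell$. You then lift it to matrices by matching Taylor coefficients and proving absolute convergence in an adapted norm. This makes explicit the ``commuting matrices behave like scalars'' step that the paper leaves implicit. The lift closes easily: $(1-4w)^{-1/2}$, $(1-4w)^{-1}$, $\exp$, and the polynomial in the exponent all have nonnegative Taylor coefficients, so the majorant is just the closed form evaluated at the norms.

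The paper's route is shorter, and its general kernel is reused directly in Propositions~\ref{prop: differentiate-E} and~\ref{prop: transfer-matrix-exponential-operators}. Your identity in four independent variables lifts to any commuting quadruple just as well, so nothing is lost.

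There are three slips to fix.
\begin{enumerate}
\item The Bargmann integrand should be $\bar f(z)\,g(\bar z)$ (or $\overline{f(z)}\,g(z)$). As written, $\bar f(\bar z)\,g(\bar z)$ is antiholomorphic and integrates to $\bar f(0)g(0)=1$.
\item The set $\{|4\bar\kappa'\kappa|<1\}$ is not a polydisc. Convergence there needs the rescaling $(\kappa,\ell)\mapsto(t\kappa,\sqrt t\,\ell)$, $(\kappa',\ell')\mapsto(\kappa'/t,\ell'/\sqrt t)$, which leaves each term $n!\,\bar c_n c_n$ unchanged, followed by Cauchy--Schwarz. For this proposition, though, $|\kappa|,|\kappa'|<1/2$ suffices.
\item The adapted norm must satisfy $\lVert K\rVert\le\rho(K)+\epsilon<1/2$, not merely $\lVert K\rVert<1/2+\epsilon$. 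Otherwise $\lVert\bar K\otimes\openone\rVert\,\lVert\openone\otimes K\rVert$ can exceed $1/4$ and the majorant diverges.
\end{enumerate}
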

\noindent The proof involves straightforward but somewhat tedious algebraic manipulation involving the CCR algebra (Appendix~\ref{appendix: derivation-transfer}).

Next, the expectation value of local observables in standard MPS theory can be computed using the $O$-transfer matrix that we require to be convergent:
\begin{align}
    E_O
    =
    \sum_{i,j=0}^\infty
    \braket{i|O|j}
    \bar{A^i}\otimes A^j
\end{align}
provided that the defining series converges absolutely. In particular, we would like to be able to compute $E_O$ when $O=\mathsf{poly}(a,a^\dagger)$ or simple functions of $a,a^\dagger$, such as the displacement  $D(\alpha)$ or squeezing $S(\zeta)$. 

First, we show that we can compute exactly the expectation values of any arbitrary anti-normal-ordered monomials in $a,a^\dagger$:
\begin{proposition}[$O$-transfer matrix for $a^m(a^\dagger)^n$] 
    \label{prop: differentiate-E}
    Consider a ``sourced'' transfer operator
    \begin{align}
        E(s,t)
        \coloneqq
        (\bar V\otimes V)
        \mathcal{E}
        (
            \bar{L}+s\openone_D,
            \bar{K},
            K,
            L+t\openone_{D}
        )\,,
        \label{eq: source-dependent-transfer-matrix}
    \end{align}
    where $\mathcal{E}$ is given in Eq.~\eqref{eq: transfer-matrix-bosons}. 
    Then the $O$-transfer matrix $E_O =  \bar V \otimes{V}\mathcal{E}_O$ for anti-normal ordered operator $O=a^m(a^\dagger)^n$ is
    \begin{align}
        {E}_{a^m(a^\dagger)^n} 
        &= 
        \partial_s^m\partial_t^n E(s,t)\Bigr|_{s=t=0}\,.
        \label{eq: derivatives-transfer-matrix}
    \end{align}
\end{proposition}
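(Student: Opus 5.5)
The plan is to view the sourced transfer operator as a transfer matrix between two \emph{displaced} tensors, and then read off the monomials by differentiating at zero source. Write the local tensor as a vector-valued state $\ket{A}\coloneqq V e^{K\otimes (a^\dagger)^2}e^{L\otimes a^\dagger}\ket{0}$, so that
\begin{align*}
    E=\sum_n \bar{A^n}\otimes A^n=\braket{\bar A|A},
\end{align*}
with the physical inner product taken between the $\bar{\cdot}$ copy and the unbarred copy. The key observation is that shifting $L\to L+t\openone_D$ multiplies the ket by the scalar generating factor $e^{t a^\dagger}$; this factor is central in the auxiliary space, so it commutes with $V$ and with the exponentials:
\begin{align*}
    V e^{K\otimes (a^\dagger)^2}e^{(L+t)\otimes a^\dagger}\ket{0}=e^{t a^\dagger}\ket{A}.
\end{align*}
Similarly, shifting $\bar L\to \bar L+s\openone_D$ in the bra copy replaces $\bra{\bar A}$ by $\bra{\bar A}e^{s a}$, since the conjugate of $e^{\bar s a^\dagger}$ acting on the bra side is $e^{s a}$ (the parameter $s$ is treated as a formal/real variable attached to the barred slot). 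Both shifts preserve $[K,L]=0$, so Proposition~\ref{prop: transfer-matrix} still applies to the shifted tensors. This gives the identity
\begin{align*}
    E(s,t)=\sum_{i,j}\braket{i|e^{sa}e^{ta^\dagger}|j}\,\bar{A^i}\otimes A^j=E_{e^{sa}e^{ta^\dagger}},
\end{align*}
where $\mathcal E$ evaluated at the shifted arguments is exactly the closed form for these modified bra/ket tensors.

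Next expand $e^{sa}e^{ta^\dagger}=\sum_{m,n}\frac{s^m t^n}{m!\,n!}a^m(a^\dagger)^n$. The ordering here is essential: the anti-normal order arises precisely because the bra factor $e^{sa}$ sits to the left of the ket factor $e^{ta^\dagger}$. The coefficient of $s^mt^n/(m!\,n!)$ is therefore $E_{a^m(a^\dagger)^n}$, and applying $\partial_s^m\partial_t^n$ at $s=t=0$ yields Eq.~\eqref{eq: derivatives-transfer-matrix}. Factoring out $\bar V\otimes V$ gives $\mathcal E_O$ as the corresponding derivative of $\mathcal E$.

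The main obstacle is analytic rather than algebraic: we must justify exchanging the sums over $i,j,m,n$ and differentiating term by term. I would first show that $E(s,t)$ is entire in $(s,t)$. Its closed form depends on $s,t$ only through a polynomial inside an exponential, with $\Delta$ independent of the sources, and $\rho(K)<\tfrac12$ is unaffected by the shift. Then I would bound $\sum_{i,j}|\braket{i|a^m(a^\dagger)^n|j}|\,\|\bar{A^i}\otimes A^j\|$. Only the entries with $j=i+m-n$ are nonzero, with magnitude at most of order $(i+\max(m,n))^{(m+n)/2}$. Combining this with the decay estimate for $\|A^n\|$ from Appendix~\ref{appendix: ansatz-absolute-convergence} (geometric decay, governed by $\rho(K)<\tfrac12$, times $\sqrt{n!}$-type factors that cancel) gives absolute convergence. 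Dominated convergence then legitimizes the rearrangement and the identification of Taylor coefficients.
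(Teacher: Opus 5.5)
Your proposal is correct and follows essentially the paper's route: the paper differentiates the kernel $\langle 0|e^{(\bar L+s)a}e^{\bar K a^2}e^{K(a^\dagger)^2}e^{(L+t)a^\dagger}|0\rangle$ directly, using that the sources bring down $a$ and $a^\dagger$ which commute past the neighbouring exponentials, and it justifies this by absolute convergence near $s=t=0$. This is the same as your factorization through $e^{sa}e^{ta^\dagger}$ followed by reading off Taylor coefficients, except for one imprecision: after the shifts the bra and ket tensors differ (bra shifted by $s$, ket by $t$), so the closed form you need is the appendix identity for $\mathcal{E}(P,Q,R,S)$ with pairwise-commuting $P,Q,R,S$, rather than Proposition~\ref{prop: transfer-matrix} applied to a single shifted tensor.
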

\noindent The proof is by direct computation (Appendix~\ref{appendix: derivation-transfer}). This result extends to all $\mathsf{poly}(a,a^\dagger)$ using linearity and the CCR algebra. 

\begin{example}
Let $O = a, a^\dagger$. Then 
\begin{align*}
    E_a 
    &= \partial_{s}E(s,t)\bigr|_{s,t=0} =
    (\bar V\otimes V)
    \Delta
    \rr{
        \openone\otimes L
        +2\bar L\otimes K
    }
    \mathcal{E}\,,\\
    E_{a^\dagger}
    &= \partial_tE(s,t)\bigr|_{s,t=0} = 
    (\bar V\otimes V)
    \Delta
    \rr{
        \bar L\otimes\openone
        +2\bar K\otimes L
    }
    \mathcal{E}\,.
\end{align*}
For number operator $O=a^\dagger a$ we can proceed similarly. Using Proposition~\ref{prop: differentiate-E}, CCR algebra and linearity of the transfer matrix $E_{O_1+\alpha O_2} =E_{O_1} + \alpha E_{O_2}$, we have
    \begin{align*}
        E_n = E_{aa^\dagger} - E_\openone 
        &=
        (\partial_s\partial_t-1)E(s,t)
        \bigr|_{s=t=0}\,,
    \end{align*}
    where $\mathcal{E}$ is given in Eq.~\eqref{eq: transfer-matrix-bosons}. 
    Similarly, we have $n^2 = a^2(a^\dagger)^2 - 3aa^\dagger + \openone$ and hence
    \begin{align*}
        E_{n^2} = (\partial_s^2\partial_t^2
        -3\partial_s\partial_t+1)E(s,t)\bigr|_{s,t=0}\,.
    \end{align*}
    These calculations are sufficient to give the expectation value of local terms in the Bose-Hubbard model \cite{lewenstein2012ultracold}. 
\exampleqed
\end{example}

The transfer matrix can also be evaluated for certain non-polynomial operators, including single-mode Gaussian unitaries.
\begin{proposition}
    \label{prop: transfer-matrix-exponential-operators}
    Let $O=U_{\mathsf{G}}$ where $U_{\mathsf{G}}$ is a single-mode Gaussian unitary in Eq.~\eqref{eq: single-mode-gaussian-unitary}. Then writing
    $\tau=e^{i\phi}\tanh r$ and $c=\cosh r$, we have
    \begin{align}
        E_{U_{\mathrm G}}
        &=
        e^{i\chi}\sqrt{c}\,
        e^{\frac{1}{2}\rr{|\alpha|^2+\bar\tau\alpha^2}}\tilde{E}
    \end{align}
    where $\tilde{E}$ is the transfer matrix $E$ in Eq.~\eqref{eq: transfer-matrix-bosons} but with the following substitution
    \begin{align*}
        \bar{L}\otimes \openone &\mapsto (\bar{L}-(\bar\alpha+\bar\tau\alpha)\openone )\otimes \openone,\\
        \bar{K}\otimes \openone &\mapsto \rr{\bar{K}+\frac{1}{2}\bar\tau\openone} \otimes \openone\,,\\
        \openone \otimes K&\mapsto   
                \openone \otimes c^2\rr{e^{-2i\theta}K-\frac{1}{2}\tau\openone}
            \,,\\
        \openone\otimes L&\mapsto \openone\otimes ( c e^{-i\theta}L+\alpha\openone)\,.
    \end{align*}
\end{proposition}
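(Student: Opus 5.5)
The approach is to reduce $E_{U_{\mathsf G}}$ to an ordinary transfer matrix of the form in Proposition~\ref{prop: transfer-matrix}, with modified ket-side and bra-side generators. By definition,
\begin{align*}
E_{U_{\mathsf G}}=\sum_{m,n}\braket{m|U_{\mathsf G}|n}\,\bar{A^m}\otimes A^n .
\end{align*}
Working at the level of the generating vectors, this equals
\begin{align*}
(\bar V\otimes V)\,\big\langle 0\big|e^{\bar L\otimes a}e^{\bar K\otimes a^2}\;U_{\mathsf G}\;e^{K\otimes (a^\dagger)^2}e^{L\otimes a^\dagger}\big|0\big\rangle ,
\end{align*}
where the bra-side operators act on the first tensor factor and the ket-side operators on the second. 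Since $[K,L]=0$, all operators on a given side commute, so they can be treated as commuting scalars, for instance after simultaneous triangularization; the final identity then extends by continuity or analyticity.

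The key step is to push $U_{\mathsf G}=e^{i\chi}D(\alpha)S(\zeta)e^{-i\theta a^\dagger a}$ onto the ket $e^{K(a^\dagger)^2+La^\dagger}\ket{0}$, so that the result is again a squeezed-coherent vector in the non-unitary parametrization \eqref{eq: squeezed-coherent-state}. I would proceed through the three factors in turn.
\begin{itemize}
\item \emph{Phase rotation.} Since $e^{-i\theta a^\dagger a}a^\dagger e^{i\theta a^\dagger a}=e^{-i\theta}a^\dagger$ and the vacuum is invariant, this rescales $K\mapsto e^{-2i\theta}K$ and $L\mapsto e^{-i\theta}L$.
\item \emph{Squeezing.} Use the disentangling (normal-ordering) formula
\begin{align*}
S(\zeta)=e^{-\frac{\tau}{2}(a^\dagger)^2}\,c^{-(a^\dagger a+\frac12)}\,e^{\frac{\bar\tau}{2}a^2}.
\end{align*}
The anti-creation factor $e^{\bar\tau a^2/2}$ could be moved through $e^{K(a^\dagger)^2+La^\dagger}$, but this generates Möbius-type denominators. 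A cleaner route is to move it to the bra side as $e^{\frac{\bar\tau}{2}a^2}$ multiplying $e^{\bar K a^2}$, which is exactly the substitution $\bar K\mapsto\bar K+\tfrac12\bar\tau$. I expect this is why the bra side gets shifted in the statement.
\item \emph{Ordering issue.} In $D(\alpha)S(\zeta)$ the squeezing acts after the displacement as seen from the bra. I would therefore first commute $D(\alpha)$ through $S(\zeta)$, or equivalently write $U_{\mathsf G}$ in the order $S\,D(\alpha')$, and then normal-order the displacement as $D(\alpha)=e^{-|\alpha|^2/2}e^{\alpha a^\dagger}e^{-\bar\alpha a}$. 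The $e^{-\bar\alpha a}$ factor, propagated to the bra side, shifts $\bar L$ by $-\bar\alpha$. The remaining cross term $-\bar\tau\alpha$ and the scalar $e^{\frac12(|\alpha|^2+\bar\tau\alpha^2)}$ come from commuting $e^{\frac{\bar\tau}{2}a^2}$ past $e^{\alpha a^\dagger}$, using $e^{\frac{\bar\tau}{2}a^2}a^\dagger e^{-\frac{\bar\tau}{2}a^2}=a^\dagger+\bar\tau a$.
\item \emph{Scaling factor.} The factor $c^{-a^\dagger a}$ acting on the ket rescales $K\mapsto c^{-2}K$ and $L\mapsto c^{-1}L$.
\end{itemize}
The statement instead has $c^{2}$ and $c$, so I would recheck the convention for which of the two shifts, $\alpha$ or $-\tfrac12\tau$, is scaled. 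The form $c^2(e^{-2i\theta}K-\tfrac12\tau)$ suggests the combined ket factor is $e^{-\frac{\tau}{2}(a^\dagger)^2}$ followed by a rescaling. I would fix this by carefully tracking the ordering with $S$ on the left.

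The main obstacle is the bookkeeping of the scalar prefactors $\sqrt{c}$ and $e^{i\chi}e^{\frac12(|\alpha|^2+\bar\tau\alpha^2)}$ and getting the exact $c$-powers right, since several BCH steps must be combined consistently. I would verify the final formula in the scalar case $D=1$ by computing $\braket{\bar\kappa',\bar\ell'|U_{\mathsf G}|\kappa,\ell}$ directly via the Gaussian integral behind \eqref{eq: normalization-squeezed-coherent-state}. Once the ket is reduced to $e^{K'(a^\dagger)^2+L'a^\dagger}\ket0$ and the bra to $\bra0 e^{\bar L' a}e^{\bar K' a^2}$, with all primed generators still pairwise commuting on each side, the vacuum overlap is exactly the quantity evaluated in the proof of Proposition~\ref{prop: transfer-matrix}. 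That proof uses commutativity only within each side, so it yields $\tilde E$.
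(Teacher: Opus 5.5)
Your overall architecture matches the paper's. You rewrite $U_{\mathsf G}$ so that a pure-annihilation block sits next to $\bra{0}e^{\bar L a}e^{\bar K a^2}$, and a number-operator power followed by a pure-creation block sits next to $e^{K(a^\dagger)^2}e^{La^\dagger}\ket{0}$. You then reuse the closed form $\mathcal{E}(P',Q',R',S')$.

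The gap is in the central step. You disentangle the squeezer in normal order, $S(\zeta)=e^{-\frac{\tau}{2}(a^\dagger)^2}c^{-(n+\frac12)}e^{\frac{\bar\tau}{2}a^2}$. There $e^{\frac{\bar\tau}{2}a^2}$ is on the ket side and $e^{-\frac{\tau}{2}(a^\dagger)^2}$ on the bra side. So the move you call cleaner, absorbing $e^{\frac{\bar\tau}{2}a^2}$ into the bra to get $\bar K\mapsto\bar K+\frac12\bar\tau$, is not available. The factor would have to pass through $c^{-n}$ and $e^{-\frac{\tau}{2}(a^\dagger)^2}$, with which it does not commute, and this reintroduces the M\"obius-type denominators you wanted to avoid. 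Your $c^{-2}$, $c^{-1}$, $c^{-1/2}$, versus $c^{2}$, $c$, $\sqrt{c}$ in the statement, are a symptom of this wrong ordering, not a convention issue, and your proposal leaves them unresolved. The suggested reordering to $S(\zeta)D(\alpha')$ is also unnecessary; it would express the result through a Bogoliubov-transformed amplitude rather than $\alpha$.

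The missing identity is the anti-normal-ordered disentangling $S(\zeta)=\sqrt{c}\,e^{\frac{\bar\tau}{2}a^2}c^{n}e^{-\frac{\tau}{2}(a^\dagger)^2}$. Use it together with $D(\alpha)=e^{|\alpha|^2/2}e^{-\bar\alpha a}e^{\alpha a^\dagger}$, keeping the original order $D(\alpha)S(\zeta)e^{-i\theta n}$. Push $e^{\alpha a^\dagger}$ rightwards through $e^{\frac{\bar\tau}{2}a^2}$ (via $e^{\alpha a^\dagger}f(a)=f(a-\alpha)e^{\alpha a^\dagger}$) and through $c^n$ (via $e^{\alpha a^\dagger}c^n=c^ne^{\alpha a^\dagger/c}$), and push $e^{-i\theta n}$ leftwards. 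This gives
\[
U_{\mathsf G}=e^{i\chi}\sqrt{c}\,e^{\frac12(|\alpha|^2+\bar\tau\alpha^2)}\,e^{\frac{\bar\tau}{2}a^2-(\bar\alpha+\bar\tau\alpha)a}\,(ce^{-i\theta})^{n}\,e^{-\frac{\tau}{2}e^{2i\theta}(a^\dagger)^2+\frac{\alpha e^{i\theta}}{c}a^\dagger}.
\]
The left exponential is absorbed into the bra, giving the stated $\bar L$ and $\bar K$ substitutions. On the ket side, use $q^n f(a^\dagger)\ket{0}=f(qa^\dagger)\ket{0}$ with $q=ce^{-i\theta}$. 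This turns the right block times $e^{K(a^\dagger)^2}e^{La^\dagger}\ket{0}$ into $e^{c^2(e^{-2i\theta}K-\frac12\tau)(a^\dagger)^2+(ce^{-i\theta}L+\alpha)a^\dagger}\ket{0}$, which is exactly the stated $K$ and $L$ substitutions. The prefactor $\sqrt{c}$ comes from $c^{n+\frac12}$. Your $D=1$ overlap check is a good sanity test but cannot replace this step.
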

\noindent The proof is given in Appendix~\ref{appendix: derivation-transfer}. Operators involving exponentials of $a,a^\dagger$ arise naturally in physical models such as the bosonized Schwinger model \cite{ohata2023montecarlo}.

\subsection{Alternative forms}
\label{subsec: alternative-forms}

There is an alternative representation of the BMPS ansatz in Def.~\ref{def: CV-MPS}, which motivates the following definition. Note that we are ignoring the Gaussian passive unitary $\mathsf{U}$ since it is not relevant for the discussion below.
\begin{definition}[MPS of photon-added Gaussian states]
    \label{def: photon-addition-MPS}
    Consider a family of photon-added Gaussian states
    \begin{align}
        \ket{I}
        \equiv
        \ket{q,\kappa_\lambda,\ell_\lambda},
        \qquad
        q=0,1,\ldots,\nu_\lambda,
    \end{align}
    where $\lambda=1,\ldots,m$ labels distinct pairs
    $(\kappa_\lambda,\ell_\lambda)$ satisfying
    $|\kappa_\lambda|<1/2$. We say that a bosonic many-body state is an
    MPS of photon-added Gaussian states if it takes the form
    \begin{align}
        \ket{\Psi_N^{\mathsf{PAG}}}
        =
        \sum_{I_1,\ldots,I_N=1}^{\mathsf d}
        \Tr\rr{
            B\mathsf{A}^{I_1}\cdots\mathsf{A}^{I_N}
        }
        \ket{I_1\cdots I_N},
        \label{eq: photon-added-MPS}
    \end{align}
    where $I=(q,\lambda)$, $B,\mathsf{A}^I\in M_D(\C)$, and
    the effective physical dimension is $\mathsf d \coloneqq \sum_{\lambda=1}^m(\nu_\lambda+1)$.
\end{definition}
\noindent Clearly, this definition also allows us to perform explicit computations: for example, the transfer matrix involves finite sums
\begin{align}
    E = \sum_{I,J=1}^{\mathsf{d}} \braket{I|J}\bar{\mathsf{A}^{I}}\otimes \mathsf{A}^J
\end{align}
where the cross terms arise due to non-orthogonality of $\ket{I}$. Expectation value of local observables involving $a,a^\dagger$ can also be computed efficiently due to properties of photon-added Gaussian states. 

The following proposition shows that the two definitions define the same set of bosonic quantum many-body states, hence we can use either formulation interchangeably depending on the problem at hand (Appendix~\ref{appendix: set-inclusion}).
\begin{proposition}
    \label{prop: CV-MPS-equivalence}
    Let $\ket{\Psi_N^{\exp}(D)}$ denote the exponential family of BMPS in Def.~\ref{def: CV-MPS} with $\mathsf{U}=\openone$ and $\ket{\Psi_{N}^{\mathsf{PAG}}(\mathsf{d},D)}$ the MPS of photon-added Gaussian states with effective physical dimension $\mathsf{d}$. 
    \begin{enumerate}[leftmargin=*,label=\emph{(\roman*)}]
        \item Every $\ket{\Psi_N^{\exp}(D)}$ can be expressed as $\ket{\Psi_N^{\mathsf{PAG}}(\mathsf{d},D)}$ with $\mathsf{d} \leq 2D^2-D$. 
        
        \item Conversely, every $\ket{\Psi_N^{\mathsf{PAG}}(\mathsf{d},D)}$ can be expressed as $\ket{\Psi_N^{\exp}(D\mathsf{d})}$ with commuting generators.
    \end{enumerate}
    Therefore, by taking the union over all finite bond dimensions and accounting for the passive unitary $\mathsf{U}$, the ansätze in
    Def.~\ref{def: CV-MPS} and \ref{def: photon-addition-MPS} generate the same class of bosonic many-body states.
\end{proposition}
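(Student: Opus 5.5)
\textbf{Part (i).} I will treat a single site and rewrite the local object $Ve^{K\otimes(a^\dagger)^2}e^{L\otimes a^\dagger}\ket{0}$, a $D\times D$ matrix of vectors in $\mathcal F$, as $\sum_I \mathsf A^I\otimes\ket{I}$ with finitely many photon-added Gaussian vectors $\ket{I}$. The trace structure of \eqref{eq: exponential-ansatz-BMPS} then gives \eqref{eq: photon-added-MPS} with the same $B$ and bond dimension $D$. Since $K,L$ need not commute and need not be diagonalizable, I will use separate Jordan decompositions $K=XJ_KX^{-1}$ and $L=YJ_LY^{-1}$.

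Let $J_L$ have blocks with eigenvalues $\ell_\mu$ and sizes $t_\mu$, $\mu=1,\dots,M$. The entries of $e^{J_La^\dagger}$ are $e^{\ell_\mu a^\dagger}(a^\dagger)^{q'}/q'!$ with $q'\le t_\mu-1$. Similarly, let $J_K$ have blocks with eigenvalues $\kappa_\lambda$ and sizes $s_\lambda$, $\lambda=1,\dots,\Lambda$. The entries of $e^{J_K(a^\dagger)^2}$ are $e^{\kappa_\lambda(a^\dagger)^2}(a^\dagger)^{2p}/p!$ with $p\le s_\lambda-1$. Here $\rho(K)<\tfrac12$ gives $|\kappa_\lambda|<\tfrac12$.

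Inserting $X^{-1}Y$ between the two factors and expanding in the Jordan matrix units, every term is a constant matrix $VX(\cdots)X^{-1}Y(\cdots)Y^{-1}$ times a vector $(a^\dagger)^{2p+q'}e^{\kappa_\lambda(a^\dagger)^2+\ell_\mu a^\dagger}\ket0=\ket{2p+q',\kappa_\lambda,\ell_\mu}$. Collecting terms by $I=(q,(\lambda,\mu))$ defines $\mathsf A^I$. For a fixed pair, $q$ ranges over $0,\dots,2s_\lambda+t_\mu-3$, which gives $2s_\lambda+t_\mu-2$ values. Summing over pairs, and using $\sum_\lambda s_\lambda=\sum_\mu t_\mu=D$, yields
\[
\mathsf d\le 2DM+D\Lambda-2\Lambda M .
\]
This is bilinear in $(\Lambda,M)\in[1,D]^2$, so it is maximized at a corner. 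The corners give $3D-2$, $D^2$, $D^2$ and, at $(\Lambda,M)=(1,D)$, $2D^2-D$, so $\mathsf d\le 2D^2-D$.

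If two pairs $(\kappa_\lambda,\ell_\mu)$ coincide, I merge them, which only lowers $\mathsf d$; this also makes the labels satisfy the distinctness requirement of Def.~\ref{def: photon-addition-MPS}. I expect the main obstacle to be this bookkeeping: handling the non-commuting, non-diagonalizable case cleanly, and checking that the count is exactly maximized by the corner $(1,D)$.

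\textbf{Part (ii).} I will mimic the qudit embedding \eqref{eq: MPS-embedding}. Take the auxiliary space $\C^D\otimes\C^{\mathsf d}$, with $\C^{\mathsf d}$ spanned by $\ket{(q,\lambda)}$. Define
\[
\mathsf K=\openone_D\otimes\bigoplus_\lambda\kappa_\lambda\openone_{\nu_\lambda+1},\qquad \mathsf L=\openone_D\otimes\bigoplus_\lambda\bigl(\ell_\lambda\openone+S_\lambda\bigr),
\]
where $S_\lambda$ is the shift $\ket{(q,\lambda)}\mapsto\ket{(q+1,\lambda)}$, which is nilpotent on the block. Then $[\mathsf K,\mathsf L]=0$, because $\mathsf K$ is scalar on each $\lambda$-block, and $\rho(\mathsf K)=\max_\lambda|\kappa_\lambda|<\tfrac12$.

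With $\ket w=\sum_\lambda\ket{(0,\lambda)}$, one gets
\[
\bra{(q,\lambda)}e^{\mathsf K(a^\dagger)^2}e^{\mathsf La^\dagger}\ket{w}=\tfrac1{q!}\ket{q,\kappa_\lambda,\ell_\lambda}.
\]
So I set $\mathsf V=\sum_I q!\,\mathsf A^I\otimes\ketbra{w}{I}$ and keep the boundary $B\otimes\openone_{\mathsf d}$. The rank-one structure of $\mathsf V$ in the $\C^{\mathsf d}$ factor makes the trace over $\C^D\otimes\C^{\mathsf d}$ factorize into $\Tr(B\mathsf A^{I_1}\cdots\mathsf A^{I_N})$ times the product of the vectors above. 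This reproduces \eqref{eq: photon-added-MPS} with bond dimension $D\mathsf d$.

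Finally, the equality of the two classes follows by taking unions over finite $D$ and applying the same passive unitary $\mathsf U$ on both sides.
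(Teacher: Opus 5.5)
Your proposal is correct and follows the paper's proof essentially step by step. In (i), the paper also expands $e^{K\otimes(a^\dagger)^2}$ and $e^{L\otimes a^\dagger}$ using Jordan data, though via spectral projectors and nilpotent parts rather than explicit Jordan blocks, so its pairs $(\kappa_r,\ell_s)$ are automatically distinct and no merging is needed; it reaches the same count $2Ds_{\max}+Dr_{\max}-2r_{\max}s_{\max}\le 2D^2-D$, whose corner maximization you make explicit. In (ii), the paper uses the same embedding $\mathsf K=\openone_D\otimes\bigoplus_\lambda\kappa_\lambda\openone$, $\mathsf L=\openone_D\otimes\bigoplus_\lambda(\ell_\lambda\openone+\mathfrak N_\lambda)$ with a rank-one $\mathsf V$; the only cosmetic difference is that it takes the weighted shift $\mathfrak N_\lambda$ (with $\mathfrak N_\lambda^q\ket{e_{0,\lambda}}=q!\ket{e_{q,\lambda}}$) instead of absorbing the factor $q!$ into $\mathsf V$.
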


At this point, it is natural to ask whether annihilation operators can
also be included in the exponential ansatz. To isolate this question,
consider the $K=0$ case
\begin{align}
    \ket{\tilde{\Psi}_N}
    \coloneqq
    \Tr_D\left[
        B\ordprod_{j=1}^N
        V e^{R\otimes a_j}e^{L\otimes a_j^\dagger}
    \right]\ket{0}^{\otimes N}.
\end{align}
If $[R,L]=0$, the factor involving $R$ can immediately be absorbed
into $V$ using the BCH formula. It turns out that even without the commutativity assumption, we can also absorb $R$ into the redefinition of $V$. Indeed,
\begin{align}
    e^{R\otimes a}e^{L\otimes a^\dagger}\ket{0}
    &=
    \sum_{m,n=0}^\infty
    \frac{R^mL^n}{m!n!}
    \otimes
    a^m(a^\dagger)^n\ket{0}.
\end{align}
Since
\begin{align}
    a^m(a^\dagger)^n\ket{0}
    &=
    \begin{cases}
        \displaystyle
        \frac{n!}{(n-m)!}
        (a^\dagger)^{n-m}\ket{0},
        & n\geq m,\\[6pt]
        0,
        & n<m,
    \end{cases}
\end{align}
setting $\ell=n-m$ gives
\begin{align}
    e^{R\otimes a}e^{L\otimes a^\dagger}\ket{0}
    &=
    \sum_{m,\ell=0}^\infty
    \frac{R^mL^{m+\ell}}{m!\ell!}
    \otimes(a^\dagger)^\ell\ket{0}
    \notag\\
    &=
    W e^{L\otimes a^\dagger}\ket{0},
    \label{eq: annihilation-absorption}
\end{align}
where
\begin{align}
    W
    \coloneqq
    \sum_{m=0}^\infty
    \frac{R^mL^m}{m!}.
\end{align}
The series defining $W$ converges absolutely for arbitrary $R,L\in M_D(\C)$ and can be evaluated through a finite-dimensional matrix exponential after vectorization. Substitution into the many-body ansatz yields
\begin{align}
    \ket{\tilde{\Psi}_N}
    =
    \Tr_D\left[
        B\ordprod_{j=1}^N
        (VW)e^{L\otimes a_j^\dagger}
    \right]\ket{0}^{\otimes N}.
\end{align}
Thus, for $K=0$, the factor $e^{R\otimes a_j}$ does not enlarge the
ansatz and can be absorbed through the redefinition $V\mapsto VW$.
When $[R,L]=0$, one has $W=e^{RL}$.

\subsection{Gauge freedom}

Understanding the gauge freedom of finite-dimensional MPS is important
for many reasons, including the classification of phases of matter in
one-dimensional systems
\cite{Schuch2011ClassifyingPhases,Chen2011CompleteClassification,
Pollmann2012SymmetryProtection}. It is also useful in variational
calculations, where a suitable gauge choice removes redundant parameters
and can improve numerical conditioning.

As in the finite-dimensional MPS setting, the matrices defining a
GA-BMPS do not uniquely specify the physical state. For any invertible
$X\in\mathrm{GL}(D,\C)$, the simultaneous transformations
\begin{equation}
    \begin{aligned}
        K&\mapsto XKX^{-1},
        &\qquad
        L&\mapsto XLX^{-1},\\
        V&\mapsto XVX^{-1},
        &
        B&\mapsto XBX^{-1}\,,
    \end{aligned}
\end{equation}
leaves the BMPS invariant. This virtual similarity transformation is independent of the dimension of the local physical Hilbert space. In the alternative representation of Def.~\ref{def: photon-addition-MPS}, the gauge freedom takes the standard MPS form
\begin{align}
    \mathsf{A}^I\mapsto X\mathsf{A}^I X^{-1},
    \qquad
    B\mapsto XBX^{-1}.
\end{align}
The nonorthogonality of the local states $\ket I$ modifies the transfer
matrix through their Gram matrix but does not affect this virtual gauge
transformation.

This gauge freedom is also useful in the variational calculations of Sec.~\ref{sec: variational}. In the commuting-generator subfamily, if
$K$ and $L$ are diagonalizable, then $[K,L]=0$ implies that they can be
simultaneously diagonalized. This simplifies the finite-dimensional
matrix functions entering the transfer matrix. If the joint eigenvalue
pairs $(\kappa_i,\ell_i)$ are distinct, the remaining gauge freedom
consists of permutations and invertible diagonal similarities. We can
order the joint eigenvalue pairs and impose suitable normalization
conditions on $V$ to fix this residual freedom in the generic case.

\subsection{Extensions}
\label{subsec: extensions}

The tensor-network structure of the BMPS allows several natural extensions. We briefly mention two of them.

First, the ansatz can accommodate multiple species of bosons per site.
Let $\{a_{\nu,j}^\dagger\}$ be bosonic creation operators for mode
$\nu=1,\ldots,M$ at site $j$, satisfying the CCR
\begin{align}
    [a_{\mu,j},a_{\nu,k}^\dagger]
    =
    \delta_{\mu\nu}\delta_{jk}\openone\,.
\end{align}
A simple multimode generalization of the BMPS is
\begin{align}
    &\ket{\Psi_N^M}
    \notag\\
    &\coloneqq
    \Tr_D\left[
        B\ordprod_{j=1}^N
        V_j\prod_{\nu=1}^M
        e^{K_{\nu,j}\otimes(a_{\nu,j}^\dagger)^2}
        e^{L_{\nu,j}\otimes a_{\nu,j}^\dagger}
    \right]\ket{0}^{\otimes N}\,,
    \label{eq: multispecies-BMPS}
\end{align}
where $B,V_j,K_{\nu,j},L_{\nu,j}\in M_D(\C)$, and $\ket{0}$
denotes the local $M$-mode vacuum, so that
$a_{\nu,j}\ket{0}=0$ for all $\nu,j$. The ordered products follow the
convention introduced in Eq.~\eqref{eq: exponential-ansatz}. As before,
the conditions $\rho(K_{\nu,j})<1/2$ are sufficient for
normalizability. For practical calculations, we restrict to
$[K_{\nu,j},L_{\nu,j}]=0$ for every $\nu,j$, which allows the transfer
operators to be computed directly. For uniform tensors, $V_j=V$, $K_{\nu,j}=K_\nu$, and $L_{\nu,j}=L_\nu$, the transfer matrix becomes
\begin{align}
    E
    =
    (\bar V\otimes V)
    \mathcal{E}_1\cdots\mathcal{E}_M\,,
\end{align}
where $\mathcal{E}_\nu$ is defined as in
Eq.~\eqref{eq: transfer-matrix-bosons}, with $K,L$ replaced by
$K_\nu,L_\nu$. More general on-site tensors may also contain cross-mode
quadratic terms proportional to $K_{\mu\nu,j}\otimes a_{\mu,j}^\dagger a_{\nu,j}^\dagger$ which give more complicated transfer operators: we restrict here to the product form in Eq.~\eqref{eq: multispecies-BMPS}, for which the transfer matrix factorizes over the modes. The Gaussian augmentation
is generalized by taking the passive unitary $\mathsf{U}$ to act on all
$MN$ modes.

Another natural extension is to mixed species. For instance, consider a
chain in which the odd sites are qudits and the even sites are bosonic
modes. Assuming that $N$ is even, a spin-boson MPS can be written as
\begin{align}
    &\ket{\Psi_N^{\mathsf{sb}}}
    \notag\\
    &\coloneqq 
    \sum_{\bm n}
    \Tr_D\left[
        B\ordprod_{r=1}^{N/2}
        \mathcal A^{n_r}
        e^{K\otimes(a_{2r}^\dagger)^2}
        e^{L\otimes a_{2r}^\dagger}
    \right]
    \ket{\bm n}_{\mathsf s}
    \ket{\bm 0}_{\mathsf b}\,,
\end{align}
where $\bm n=(n_1,\ldots,n_{N/2})\in\{0,\ldots,d-1\}^{N/2}$, and we use the shorthand
\begin{align}
    \ket{\bm n}_{\mathsf s}
    \ket{\bm 0}_{\mathsf b}
    \coloneqq
    \bigotimes_{r=1}^{N/2}
    \left(
        \ket{n_r}_{2r-1}\otimes\ket{0}_{2r}
    \right)\,.
\end{align}
Here $\mathcal A^n\in M_D(\C)$ is the MPS tensor associated with the
qudit basis state $\ket n$, while $K,L\in M_D(\C)$ specify the bosonic
tensor. A separate propagator matrix $V$ is unnecessary here because it
can be absorbed into the matrices $\mathcal A^n$. The transfer matrix
of one two-site cell factorizes as
\begin{align}
    E
    &=
    E_{\mathsf s}\mathcal{E}\,,
    \quad 
    E_{\mathsf s}
    =
    \sum_{n=0}^{d-1}
    \bar{\mathcal{A}^n}\otimes\mathcal A^n\,,
\end{align}
where $\mathcal{E}$ is defined in Eq.~\eqref{eq: transfer-matrix-bosons}. Alternatively, one may encode each qudit in the first $d$ Fock levels
of an auxiliary bosonic mode using Eq.~\eqref{eq: MPS-embedding}, which 
produces a two-site-periodic BMPS of bond dimension $Dd$. On the
encoded-qudit sites, $K=0$ and the matrices $V$ and $L$ are chosen as
in Eq.~\eqref{eq: MPS-embedding}, while the matrices on the bosonic
sites are extended trivially to the additional $d$-dimensional
auxiliary space.

\section{Parent Hamiltonian}
\label{sec: parent-hamiltonian}

In the standard finite-dimensional MPS framework, it is possible to construct parent Hamiltonians, for which an MPS is (one of) its exact ground states, that is local and frustration-free simply from the knowledge of the local tensors of the MPS. Since the bosonic MPS inherits most of the technology from the MPS, it is possible to construct the parent Hamiltonian the same way. Here for bosonic systems we seek a construction in terms of the ladder operators $a,a^\dagger$ using polynomials and, where necessary, exponentials of these operators. 

In this section we are interested in constructing the parent Hamiltonian for the BMPS $\ket{\Psi_N}$ without the passive unitary $\mathsf{U}$ (cf. Eq.~\eqref{eq: exponential-ansatz-BMPS}). If $H$ is a parent Hamiltonian for $\ket{\Psi_N}$, then $H_{\mathsf{U}} = \mathsf{U}H\mathsf{U}^\dagger$ is a parent Hamiltonian for the GA-BMPS family $\ket{\psi_N}=\mathsf{U}\ket{\Psi_N}$. The Hamiltonian $H_{\mathsf{U}}$ need not be local for a general passive unitary: strict locality is preserved, for example, if $\mathsf{U}$ is a finite-depth geometrically local Gaussian circuit.

\subsection{Parent Hamiltonian for MPS}

We first review the standard MPS construction
\cite{Fannes1992FinitelyCorrelated,PerezGarcia2007MPSRepresentations}.
For an MPS tensor $A=\{A^i\}_{i=1}^d$, define the $l$-site support space
\begin{align}
    G_l(A)
    \coloneqq
    \Span_\C
    \left\{
        \ket{\psi_l^{ab}(A)}
        :
        a,b=1,\ldots,D
    \right\},
\end{align}
where, for the matrix units
$E_{ab}\equiv\ketbra{a}{b}\in M_D(\C)$,
\begin{align}
    \ket{\psi_l^{ab}(A)}
    \coloneqq
    \sum_{i_1,\ldots,i_l}
    \Tr\left(
        E_{ab}A^{i_1}\cdots A^{i_l}
    \right)
    \ket{i_1\cdots i_l}.
\end{align}
By construction, $\dim G_l(A)\leq D^2$. 

Suppose that $l_0$ is an injectivity length, so that
\begin{align}
    \Span_\C
    \{
        A^{i_1}\cdots A^{i_{l_0}}
    \}
    =
    M_D(\C).
\end{align}
It follows that $\dim G_l(A)=D^2$ for every $l\geq l_0$. To obtain a
nonzero local parent term, one chooses $l\geq l_0$ such that $d^l>D^2$, 
so that $G_l(A)$ is a proper subspace of the $l$-site physical Hilbert
space. If $d^{l_0}=D^2$, then $G_{l_0}(A)$ fills the entire physical
space at the injectivity length, and one may take $l=l_0+1$.

\begin{definition}[Parent Hamiltonian for MPS]
    Let $\ket{\psi_N(A)}$ be an MPS and choose an interaction length
    $l$ such that $G_l(A)$ is a proper subspace of the $l$-site
    physical Hilbert space. Let $P_l$ be the orthogonal projector onto
    $G_l(A)^\perp$. For a periodic chain, the corresponding standard
    parent Hamiltonian is
    \begin{align}
        H
        \coloneqq
        \sum_{j=1}^{N}h_j,
        \qquad
        h_j
        \coloneqq
        \tau_j(P_l),
    \end{align}
    where $\tau_j(P_l)$ acts on sites
    $j,\ldots,j+l-1$, with the site labels understood modulo $N$.
\end{definition}

\noindent
Since every local reduced state of $\ket{\psi_N(A)}$ is supported on
$G_l(A)$,
\begin{align}
    h_j\ket{\psi_N(A)}=0
\end{align}
for every $j$. Thus $H\geq0$ is frustration-free and
$H\ket{\psi_N(A)}=0$.

If $A$ is injective on the present lattice and $d^2>D^2$, one may take
$l=2$. If injectivity is obtained only after blocking, the corresponding
two-site construction acts on two blocked sites. For a block-injective tensor, one may likewise take $l=2$ if each block is already injective without further blocking and $G_2(A)$ is a proper subspace. For sufficiently large periodic chains, the standard parent Hamiltonian of an injective MPS has a unique ground state and a spectral gap bounded below uniformly in $N$. If the tensor has $g$ distinct injective blocks, the ground-state degeneracy is $g$ independent of $N$. Open chains may have additional boundary degeneracy.

In principle, the projector construction remains valid when the local Hilbert space is infinite-dimensional. For bosonic systems, however, the projector onto
$G_l(A)^\perp$ need not have a useful expression in terms of $a,a^\dagger$. We therefore construct a positive local term with the same kernel directly in terms of $a,a^\dagger$.

\subsection{The abstract bosonic construction}
\label{subsec: abstract-parent}

Given a Fock space $\mathcal{F}$, consider an $n$-dimensional subspace
\begin{align}
    \mathcal{V}
    =
    \Span_\C
    \left\{
        \ket{e_j}:j=1,\ldots,n
    \right\}
\end{align}
where $\ket{e_j}$ are linearly independent vectors, and let
\begin{align}
    \mathcal{W}
    =
    \Span_\C
    \left\{
        \ket{\psi_i}
        =
        \sum_{j=1}^n c_{ij}\ket{e_j}
        :
        i=1,\ldots,m
    \right\}
\end{align}
be a proper subspace of $\mathcal{V}$ where $\ket{\psi_i}$ are $m$ linearly independent vectors with $m<n$. The parent Hamiltonian construction for a state $\ket{\psi}\in \mathcal{W}$ asks for a positive operator $h$ satisfying
\begin{align}
    \ker h=\mathcal{W}\,.
\end{align}

The construction consists of two parts and does not depend on any tensor-network assumptions. First, suppose that there exists a positive operator $h_0$ such that
\begin{align}
    \ker h_0=\mathcal{V}\,,
    \label{eq: ambient-Hamiltonian}
\end{align}
where we require in addition that $h_0$ is expressible as functions of creation and annihilation operators. As we will see later, the explicit construction of $h_0$ would make use of the structure of the state we are building the parent Hamiltonian for. 

Next, we want to single out the proper subspace $\mathcal{W}$ inside $\mathcal{V}$. For this, we define a matrix $C$ to be the coefficient matrix appearing in the definition of $\mathcal{W}$, i.e., 
\begin{align}
    C=[c_{ij}]\in M_{m,n}(\C)\,.
\end{align}
Since $\rank C=m$, we may choose a matrix $R\in M_{n-m,n}(\C)$ of rank $n-m$, called the \textit{check matrix}, such that
\begin{align}
    RC^T=0\,.
    \label{eq: check-matrix}
\end{align}
It follows that $ \ker R=\im C^T$. Now suppose, in addition, that there exist operators $F_j$ satisfying
\begin{align}
    F_j\ket{e_k}
    =
    \delta_{jk}\ket{\Omega}\,,
    \qquad
    j,k=1,\ldots,n\,.
    \label{eq: coefficient-extraction}
\end{align}
where $\ket{\Omega}$ is a fixed nonzero vector. For $\alpha=1,\ldots,n-m$, define
\begin{align}
    O_\alpha
    &\coloneqq
    \sum_{j=1}^n
    R_{\alpha j}F_j\,,\quad 
    h_R
    \coloneqq
    \sum_{\alpha=1}^{n-m}
    O_\alpha^\dagger O_\alpha\,.
    \label{eq: check-Hamiltonian}
\end{align}
Then for any
\begin{align}
    \ket{\phi}
    =
    \sum_{j=1}^n x_j\ket{e_j}
    \in\mathcal{V}\,,
\end{align}
we have
\begin{align}
    O_\alpha\ket{\phi}
    =
    (Rx)_\alpha\ket{\Omega}
\end{align}
and consequently,
\begin{align}
    \ker h_R\cap\mathcal{V}
    &=
    \left\{
        \sum_{j=1}^n x_j\ket{e_j}
        :
        x\in\ker R
    \right\}
    =
    \mathcal{W}\,.
\end{align}
Since $h_0$ and $h_R$ are positive,
\begin{align}
    \ker\left(h_0+h_R\right)
    &=
    \ker h_0\cap\ker h_R
    =
    \mathcal{W}
\end{align}
and therefore we can take $h=h_0+h_R$ to be the parent Hamiltonian for $\ket{\psi}\in\mathcal{W}$ with ground space $\mathcal{W}$. 

We will see in what follows that the structure of BMPS determines the target space $\mathcal W$ and provides us with explicit constructions of $h_0$ and $F_j$. All expressions involving unbounded operators are understood on a suitable common dense domain.

\subsection{Examples}

We first illustrate the construction for a single mode, as these examples
will provide us the single-mode operators used in the $N$-site BMPS
construction that follows. We omit the passive unitary $\mathsf{U}$ whose
effect has already been discussed.

\subsubsection{Single mode}
\label{subsec: single-modes}

After absorbing the boundary matrix into $V$, the single-mode state is
\begin{align}
    \ket{\psi}
    =
    \Tr_D\left[
        V e^{K\otimes(a^\dagger)^2}
        e^{L\otimes a^\dagger}
    \right]\ket{0}\,.
\end{align}

\begin{example}[Superposition of coherent states]
    \label{example: cats}

    Let $K=0$ and
    \begin{align}
        L
        =
        \operatorname{diag}(\ell_1,\ldots,\ell_D)\,,
        \qquad
        V\in M_D(\C)\,,
    \end{align}
    where the $\ell_j$ are distinct. Writing $c_j\coloneqq V_{jj}$, we
    obtain
    \begin{align}
        \ket{\psi}
        =
        \sum_{j=1}^D c_j\ket{\ell_j}\,,
        \qquad
        \ket{\ell_j}
        \coloneqq
        e^{\ell_j a^\dagger}\ket{0}\,,
    \end{align}
    where the coherent states are unnormalized. The ambient and target
    spaces are
    \begin{align*}
        \mathcal{V}
        &=
        \Span_\C
        \left\{
            \ket{\ell_j}:j=1,\ldots,D
        \right\}\,,
        &
        \mathcal{W}
        &=
        \Span_\C\{\ket{\psi}\}\,.
    \end{align*}

    We first construct $h_0$ whose kernel is the ambient space
    $\mathcal{V}$. Since $(a-\ell_j)\ket{\ell_j}=0$, we define
    \begin{align}
        Q(a)
        &\coloneqq
        \prod_{j=1}^D(a-\ell_j)\,,
        \quad 
        h_0
        \coloneqq
        Q(a)^\dagger Q(a)\,.
    \end{align}
    On the one hand, each factor $a-\ell_j$ has a one-dimensional kernel and
    $\dim\ker(AB)\leq\dim\ker A+\dim\ker B$, therefore we have $\dim\ker Q\leq D$. On the other hand, $\ker Q$ contains the $D$ linearly independent eigenvectors $\ket{\ell_1},\ldots,\ket{\ell_D}$ of $a$. Therefore
    \begin{align}
        \ker h_0
        =
        \ker Q
        =
        \mathcal{V}\,.
    \end{align}

    Having fixed the ambient kernel, we next construct $h_R$ to select the
    target subspace $\mathcal{W}\subset\mathcal{V}$. Since our goal is to construct $F_j$ in Eq.~\eqref{eq: coefficient-extraction}, a natural construction is to use Lagrange interpolation polynomials
    \begin{align}
        \mathsf{P}_j(z)
        \coloneqq
        \prod_{k\neq j}
        \frac{z-\ell_k}{\ell_j-\ell_k}
    \end{align}
    that satisfy
    \begin{align}
        \mathsf{P}_j(a)\ket{\ell_k}
        =
        \delta_{jk}\ket{\ell_j}\,.
    \end{align}
    We may therefore take
    \begin{align}
        F_j
        \coloneqq
        e^{-\ell_j a^\dagger}\mathsf{P}_j(a)\,,
        \qquad
        F_j\ket{\ell_k}
        =
        \delta_{jk}\ket{0}\,.
    \end{align}
    The coefficient matrix is
    \begin{align}
        C
        =
        \begin{bmatrix}
            c_1&c_2&\cdots&c_D
        \end{bmatrix}\,.
    \end{align}
    Assuming $\ket{\psi}\neq0$, we may relabel the coherent states so
    that $c_D\neq0$. One possible choice of $R$ is
    \begin{align}
        R
        =
        \begin{bmatrix}
            c_D & 0   & \cdots & 0   & -c_1 \\
            0   & c_D & \cdots & 0   & -c_2 \\
            \vdots & \vdots & \ddots & \vdots & \vdots \\
            0   & 0   & \cdots & c_D & -c_{D-1}
        \end{bmatrix}\,,
    \end{align}
    which has rank $D-1$ and satisfies $RC^T=0$. It follows that
    \begin{align}
        h_R
        =
        \sum_{\alpha=1}^{D-1}
        \left(
            c_DF_\alpha-c_\alpha F_D
        \right)^\dagger
        \left(
            c_DF_\alpha-c_\alpha F_D
        \right)\,.
    \end{align}
    Thus $h=h_0+h_R$ is positive and satisfies $\ker h=\mathcal{W}$.
    \exampleqed
\end{example}

\begin{example}[Non-diagonalizable $L$]
    \label{example: non-diagonalizable-cats}

    Let $K=0$ and
    \begin{equation}
    \begin{aligned}
        J_2(\ell_1)
        &\coloneqq
        \begin{bmatrix}
            \ell_1&1\\
            0&\ell_1
        \end{bmatrix}\,,
        \qquad
        L
        =
        J_2(\ell_1)\oplus J_1(\ell_2)\,,
        \\
        V
        &=
        \begin{bmatrix}
            c_1/2&0&0\\
            c_2&c_1/2&0\\
            0&0&c_3
        \end{bmatrix}\,,
    \end{aligned}
    \end{equation}
    where $\ell_1\neq\ell_2$ and $c_j\neq0$. We then obtain
    \begin{align}
        \ket{\psi}
        =
        c_1\ket{\ell_1}
        +
        c_2a^\dagger\ket{\ell_1}
        +
        c_3\ket{\ell_2}\,.
    \end{align}
    Defining
    \begin{align*}
        \ket{e_1}
        &\coloneqq
        \ket{\ell_1}\,,
        \quad
        \ket{e_2}
        \coloneqq
        a^\dagger\ket{\ell_1}\,,
        \quad 
        \ket{e_3}
        \coloneqq
        \ket{\ell_2}\,,
    \end{align*}
    the ambient and target spaces are
    \begin{align*}
        \mathcal{V}
        &=
        \Span_\C
        \left\{
            \ket{e_1},\ket{e_2},\ket{e_3}
        \right\}\,,
        &
        \mathcal{W}
        &=
        \Span_\C\{\ket{\psi}\}\,.
    \end{align*}

    We first construct $h_0$ with $\ker h_0=\mathcal{V}$. Since
    \begin{align}
        (a-\ell_1)\ket{\ell_1}
        &=
        0\,,
        \quad 
        (a-\ell_1)^2a^\dagger\ket{\ell_1}
        =
        0\,,
    \end{align}
    we take
    \begin{align}
        Q(a)
        &\coloneqq
        (a-\ell_1)^2(a-\ell_2)\,,
        \quad
        h_0
        \coloneqq
        Q(a)^\dagger Q(a)\,.
    \end{align}
    We have $\dim\ker Q\leq3$ and $Q$ annihilates $\ket{e_1},\ket{e_2},\ket{e_3}$ which are linearly independent. Therefore $\ker h_0 = \ker Q = \mathcal{V}$.

    Next, we construct $h_R$ to single out the target subspace $\mathcal{W}\subset\mathcal{V}$. We first distinguish the generalized eigenspace associated with $\ell_1$ from the eigenspace associated with $\ell_2$. Let $\Delta \ell=\ell_1-\ell_2$ and define
    \begin{align}
        \mathsf{P}_1(z)
        &\coloneqq
        1-\frac{(z-\ell_1)^2}{(\Delta\ell)^2}\,,
        &
        \mathsf{P}_2(z)
        &\coloneqq
        \frac{(z-\ell_1)^2}{(\Delta\ell)^2}\,.
    \end{align}
    These polynomials satisfy
    \begin{equation*}
        \begin{aligned}
            \mathsf{P}_1(a)\ket{e_1}
            &=
            \ket{e_1}\,,
            &
            \mathsf{P}_1(a)\ket{e_2}
            &=
            \ket{e_2}\,,
            &
            \mathsf{P}_1(a)\ket{e_3}
            &=
            0\,,
            \\
            \mathsf{P}_2(a)\ket{e_1}
            &=
            0\,,
            &
            \mathsf{P}_2(a)\ket{e_2}
            &=
            0\,,
            &
            \mathsf{P}_2(a)\ket{e_3}
            &=
            \ket{e_3}
        \end{aligned}
    \end{equation*}
    and hence the $\mathsf{P}_j(a)\ket{e_k}=0$ whenever their coherent amplitudes differ. 

    It remains to distinguish $\ket{e_1}$ from the photon-added state
    $\ket{e_2}$ within the same generalized eigenspace with coherent amplitude $\ell_1$. On their span, the operator
    \begin{align}
        \mathsf{N}_1
        \coloneqq
        a^\dagger(a-\ell_1)
    \end{align}
    satisfies
    \begin{align}
        \mathsf{N}_1\ket{e_1}
        &=
        0\,,
        &
        \mathsf{N}_1\ket{e_2}
        &=
        \ket{e_2}\,.
    \end{align}
    We may therefore take
    \begin{equation}
    \begin{aligned}
        F_{1,0}
        &\coloneqq
        e^{-\ell_1a^\dagger}
        (1-\mathsf{N}_1)\mathsf{P}_1(a)\,,
        \\
        F_{1,1}
        &\coloneqq
        a e^{-\ell_1a^\dagger}
        \mathsf{N}_1\mathsf{P}_1(a)\,,
        \\
        F_{2,0}
        &\coloneqq
        e^{-\ell_2a^\dagger}\mathsf{P}_2(a)\,.
    \end{aligned}
    \end{equation}
    These operators obey
    \begin{equation}
    \begin{aligned}
        F_{1,0}\ket{e_k}
        &=
        \delta_{1k}\ket{0}\,,
        \\
        F_{1,1}\ket{e_k}
        &=
        \delta_{2k}\ket{0}\,,
        \\
        F_{2,0}\ket{e_k}
        &=
        \delta_{3k}\ket{0}\,.
    \end{aligned}
    \end{equation}

    The coefficient matrix is
    \begin{align}
        C
        =
        \begin{bmatrix}
            c_1&c_2&c_3
        \end{bmatrix}\,.
    \end{align}
    One possible choice of $R$ is
    \begin{align}
        R
        =
        \begin{bmatrix}
            c_3&0&-c_1\\
            0&c_3&-c_2
        \end{bmatrix}\,,
    \end{align}
    which has rank $2$ and satisfies $RC^T=0$. It follows that
    \begin{align}
        h_R
    	&=
    	\left(
    	c_3F_{1,0}-c_1F_{2,0}
    	\right)^\dagger
    	\left(
    	c_3F_{1,0}-c_1F_{2,0}
    	\right)
    	\notag\\
        &+
    	\left(
    	c_3F_{1,1}-c_2F_{2,0}
    	\right)^\dagger
    	\left(
    	c_3F_{1,1}-c_2F_{2,0}
    	\right)\,.
    \end{align}
    Thus $h=h_0+h_R$ is positive and satisfies
    $\ker h=\mathcal{W}$.

    \exampleqed
\end{example}

When $K$ is nonzero, the ambient space may contain states with different squeezing parameters that makes the parent Hamiltonian slightly more complicated. That said, one can still construct the parent Hamiltonian systematically and we provide a general construction for the operators $Q$ and $F_j$ in Appendix~\ref{appendix: parent-details}.

\subsubsection{Multimode BMPS}
\label{subsec: MPS-parents}

The extension from single-mode to full BMPS over $N$ sites is straightforward with some minor modifications. For this, we first set up the notation to make the prescription manifest. 

Write the BMPS in the photon-added form
\begin{align}
	\ket{\psi_N}
	=
	\sum_{I_1,\ldots,I_N=1}^{\mathsf d}
	\Tr\rr{
		B\mathsf{A}^{I_1}\cdots\mathsf{A}^{I_N}
	}
	\ket{I_1\cdots I_N}\,,
\end{align}
where the one-site states $\{\ket I\}_{I=1}^{\mathsf d}$ are linearly independent. Their $l$-site products define an ambient space
\begin{align}
	\mathcal{V}_l
	\coloneqq
	\Span_\C
	\left\{
		\ket{I_1\cdots I_l}:I_j=1,\ldots,\mathsf d
	\right\}\cong \mathcal{V}_1^{\otimes l}
\end{align}
where $n\coloneqq \dim \mathcal{V}_l = \mathsf{d}^l$. For each $X\in M_D(\C)$, define 
\begin{align}
	\ket{\Psi_l(X)}
	\coloneqq
	\sum_{I_1,\ldots,I_l=1}^{\mathsf d}
	\Tr\rr{
		X\mathsf{A}^{I_1}\cdots\mathsf{A}^{I_l}
	}
	\ket{I_1\cdots I_l}\,.
\end{align}
The corresponding $l$-site local MPS subspace is
\begin{align}
	G_l
	\coloneqq
	\left\{
		\ket{\Psi_l(X)}
		:
		X\in M_D(\C)
	\right\}
	\subseteq
	\mathcal{V}_l\,.
\end{align}
Choose $l$ such that $m\coloneqq\dim G_l<n$ so that $G_l^\perp\cap \mathcal{V}_l\neq 0$. 

Since the states
$\{\ket{\Psi_l(E_{ab})}\}_{a,b=1}^{D}$ span $G_l$, where $E_{ab}$ are the matrix units in $M_D(\C)$, we may choose matrices $X_1,\ldots,X_m$ such that
\begin{align}
	\ket{\Psi_\mu}
	\coloneqq
	\ket{\Psi_l(X_\mu)}\,,
	\qquad
	\mu=1,\ldots,m,
\end{align}
form a basis of $G_l$. Writing $\boldsymbol I=(I_1,\ldots,I_l)$, define the coefficient matrix
\begin{align}
	C_{\mu,\boldsymbol I}
	\coloneqq
	\Tr\rr{
		X_\mu
		\mathsf{A}^{I_1}\cdots\mathsf{A}^{I_l}
	}\,.
\end{align}
Then $C\in M_{m,n}(\C)$ has rank $m$.

Since the one-site states $\{\ket I\}_{I=1}^{\mathsf d}$ form a finite linearly independent family of photon-added squeezed coherent states, Appendix~\ref{appendix: parent-details} provides one-site
operators $Q$ and $F_I$ satisfying
\begin{equation}
    \begin{aligned}
        \ker Q
    	&=\mathcal{V}_1\,,\quad 
        F_I\ket J
    	=
    	\delta_{IJ}\ket{0}
    \end{aligned}
\end{equation}
as discussed in the single-mode setting. On $l$ sites, set
\begin{align}
	h_0
	&=
	\sum_{r=1}^{l}
	Q_r^\dagger Q_r\,,
	\quad 
	F_{\boldsymbol I}^{[l]}
	=
	F_{I_1}\otimes\cdots\otimes F_{I_l}\,,
\end{align}
where $Q_r$ acts on the $r$th site. It follows that
\begin{align}
	\ker h_0
	&=
	\mathcal{V}_l\,,
	\quad 
	F_{\boldsymbol I}^{[l]}
	\ket{\boldsymbol J}
	=
	\delta_{\boldsymbol I,\boldsymbol J}
	\ket{0}^{\otimes l}\,.
\end{align}
Choose a matrix $R\in M_{n-m,n}(\C)$ of rank $n-m$ such that the check matrix condition $RC^T=0$ (cf. Eq.~\eqref{eq: check-matrix}), and set
\begin{align}
	O_\alpha
	&=
	\sum_{\boldsymbol I}
	R_{\alpha,\boldsymbol I}
	F_{\boldsymbol I}^{[l]}\,,
	\quad 
	h_R
	=
	\sum_{\alpha=1}^{n-m}
	O_\alpha^\dagger O_\alpha\,.
\end{align}
The construction of Sec.~\ref{subsec: abstract-parent} then gives
\begin{align}
	\ker(h_0+h_R)
	=
	G_l\,.
\end{align}
Translating this $l$-site term along the chain gives a
frustration-free parent Hamiltonian for the BMPS. The single-mode construction in Appendix~\ref{appendix: parent-details} guarantees the existence of the
required operators $Q$ and $F_I$ for the general BMPS local basis, and
therefore completes the parent-Hamiltonian construction for the BMPS.

We illustrate the multimode construction with two examples.

\begin{example}[GHZ-like bosonic MPS]
    \label{example: N-cats}

    Consider the multimode analogue of Example~\ref{example: cats}. Let $K=0$ and take
    \begin{align}
        L
        &=
        \diag(\ell_1,\ldots,\ell_D)\,,\quad 
        V
        =
        B
        =
        \openone\,,
    \end{align}
    where the $\ell_j$ are distinct. The corresponding MPS matrices are
    \begin{align}
        \mathsf{A}^j
        =
        \ketbra{j}{j}\,,
        \qquad
        j=1,\ldots,D,
    \end{align}
    and hence
    \begin{align}
        \ket{\Psi_N}
        =
        \sum_{j=1}^{D}
        \ket{\ell_j}^{\otimes N}\,.
    \end{align}
    Here
    $\ket{\ell_j}=e^{\ell_j a^\dagger}\ket{0}$ is unnormalized. The two-site ambient space is
    \begin{align}
        \mathcal{V}_2
        =
        \Span_\C
        \left\{
            \ket{\ell_j\ell_k}
            :
            j,k=1,\ldots,D
        \right\},
    \end{align}
    with $\dim\mathcal{V}_2=D^2$. Taking
    $X_i=E_{ii}$ gives
    \begin{align}
        \ket{\Psi_2(X_i)}
        =
        \ket{\ell_i\ell_i}\,,
    \end{align}
    so that
    \begin{align}
        G_2
        =
        \Span_\C
        \left\{
            \ket{\ell_i\ell_i}
            :
            i=1,\ldots,D
        \right\},
    \end{align}
    and $\dim G_2=D$. Thus $G_2$ is a proper subspace of
    $\mathcal{V}_2$.

    For the ambient space, we define
    \begin{align}
        Q_i
        &\coloneqq
        \prod_{j=1}^{D}
        (a_i-\ell_j)\,,
        \quad 
        h_0
        \coloneqq
        \sum_{i=1}^{2}
        Q_i^\dagger Q_i\,.
    \end{align}
    Then $\ker h_0 = \mathcal{V}_2$. As in Example~\ref{example: cats}, let
    \begin{align}
        F_j
        &\coloneqq
        e^{-\ell_j a^\dagger}\mathsf{P}_j(a)\,,
        &
        \mathsf{P}_j(z)
        &\coloneqq
        \prod_{k\neq j}
        \frac{z-\ell_k}{\ell_j-\ell_k}\,.
    \end{align}
    These operators satisfy $F_j\ket{\ell_k}=\delta_{jk}\ket{0}$, thus we take
    \begin{align}
        F_{jk}^{[2]}
        \coloneqq
        F_j\otimes F_k\,.
    \end{align}
    With columns indexed by $(j,k)$, the coefficient matrix $C$ for the
    basis $\{\ket{\Psi_2(X_i)}\}_{i=1}^{D}$ is
    \begin{align}
        C_{i,(j,k)}
        =
        \delta_{ij}\delta_{ik}\,.
    \end{align}
    Choose the rows of $R$ to be indexed by the ordered pairs
    $(r,s)$ with $r\neq s$, and set
    \begin{align}
        R_{(r,s),(j,k)}
        =
        \delta_{rj}\delta_{sk}\,.
    \end{align}
    Then $RC^T=0$ with $\rank R=D^2-D$. The corresponding check operators are
    \begin{align}
        O_{(r,s)}=F_{rs}^{[2]}\,, \quad h_R
        =
        \sum_{\substack{r,s=1\\r\neq s}}^{D}
        \left(F_{rs}^{[2]}\right)^\dagger
        F_{rs}^{[2]}\,,
    \end{align}
    noting that the sum only runs over $r\neq s$. This gives $\ker(h_0+h_R)=G_2$ as required. 

    \exampleqed
\end{example}

\begin{example}[Injective MPS, $D=2$]
    \label{example: generic-D2-diagonal-L}

    Let $K=0$ and take
    \begin{align}
        L
        &=
        \diag(\ell_1,\ell_2)\,,
        &
        V
        &=
        \begin{bmatrix}
            v_{11} & v_{12}\\
            v_{21} & v_{22}
        \end{bmatrix},
    \end{align}
    where $\ell_1,\ell_2$ are distinct, $V$ is invertible, and
    $v_{rs}\neq0$. The one-site states are
    $\ket{\ell_1},\ket{\ell_2}$, and the corresponding MPS matrices are
    \begin{align}
        \mathsf{A}^r
        =
        V E_{rr}\,,
        \qquad
        r=1,2\,,
    \end{align}
    where $E_{rs}=\ketbra{r}{s}$ are matrix units. Their two-site products satisfy
    \begin{align}
        \mathsf{A}^r\mathsf{A}^s
        =
        v_{rs}V E_{rs}\,.
    \end{align}
    Since $V$ is invertible and $v_{rs}\neq0$, these four products span
    $M_2(\C)$. The tensor is therefore injective after blocking two
    sites. In particular,
    \begin{align}
        \dim G_2
        =
        4
        =
        \dim\mathcal{V}_2\,,
    \end{align}
    and hence $G_2=\mathcal{V}_2$. Consequently, there is no nontrivial
    two-site check term within the ambient space and we need to consider blocking $l=3$ sites.

    For $l=3$ the ambient space is
    \begin{align}
        \mathcal{V}_3
        =
        \Span_\C
        \left\{
            \ket{\ell_r\ell_s\ell_t}
            :
            r,s,t=1,2
        \right\},
    \end{align}
    with $\dim\mathcal{V}_3=8$, whereas $\dim G_3=4$: indeed, we have
    \begin{align}
        \mathsf{A}^r\mathsf{A}^s\mathsf{A}^t
        =
        v_{rs}v_{st}V E_{rt}\,.
    \end{align}
    Choose $X_{rt}=E_{tr}V^{-1}$ so that the corresponding states spanning $G_3$ are given by
    \begin{align}
        \ket{\Psi_{rt}}
        \coloneqq
        \ket{\Psi_3(X_{rt})}
        =
        \sum_{s=1}^{2}
        v_{rs}v_{st}
        \ket{\ell_r\ell_s\ell_t}\,,
    \end{align}
    and form a basis of $G_3$.

    With columns indexed by $(p,s,u)$, the coefficient matrix $C$ is given by
    \begin{align}
        C_{(r,t),(p,s,u)}
        =
        \delta_{rp}\delta_{tu}
        v_{rs}v_{st}\,.
    \end{align}
    A rank-four check matrix $R$ satisfying $RC^T=0$ is
    \begin{align}
        R_{(r,t),(p,s,u)}
        =
        \delta_{rp}\delta_{tu}
        \left(
            \delta_{s1}v_{r2}v_{2t}
            -
            \delta_{s2}v_{r1}v_{1t}
        \right).
    \end{align}

    Similar to Example~\ref{example: cats}, define
    \begin{align}
        Q_i
        &\coloneqq
        (a_i-\ell_1)(a_i-\ell_2)\,,
        \quad 
        h_0
        \coloneqq
        \sum_{i=1}^{3}Q_i^\dagger Q_i\,.
    \end{align}
    which gives $\ker h_0=\mathcal{V}_3$. Using the single-site operators $ F_j=e^{-\ell_j a^\dagger}\mathsf{P}_j(a)$, set
    \begin{align}
        F_{rst}^{[3]}
        \coloneqq
        F_r\otimes F_s\otimes F_t\,.
    \end{align}
    The four check operators obtained from the rows of $R$ are
    \begin{align}
        O_{rt}
        =
        v_{r2}v_{2t}F_{r1t}^{[3]}
        -
        v_{r1}v_{1t}F_{r2t}^{[3]}\,,
        \quad
        r,t=1,2\,.
    \end{align}
    Thus the check Hamiltonian is given by
    \begin{align}
        h_R
        =
        \sum_{r,t=1}^{2}
        O_{rt}^\dagger O_{rt}
    \end{align}
    and we have $\ker(h_0+h_R)=G_3$.
    \exampleqed
\end{example}

In Sec.~\ref{sec: variational}, we use this parent Hamiltonian to test the GA-BMPS ansatz variationally.

\section{Variational tests}
\label{sec: variational}

In this section we provide some numerical demonstration that the GA-BMPS \eqref{eq: exponential-ansatz} in Def.~\ref{def: CV-MPS} can be used in practice for calculations without introducing a hard local Fock space cutoff\footnote{Early attempt at non-Gaussian truncation-free variational calculations was done in \cite{frenzel2013matrix} in the context of spin-boson models, which corresponds to the case where the matrices are chosen to be $K=0$ (no squeezing), $V_j=\openone,L_j\neq 0$ but the $L_j$'s are not uniform. Since dense matrices $L_j$ were used, the calculations in \cite{frenzel2013matrix} were restricted to small bond dimensions $D$.}. We do not attempt to optimize the variational method for speed or performance and focus primarily on the physical results that can be extracted from the ansatz. The results shown here should be viewed as the baseline that is expected to improve with better schemes. Below we present two small examples.

\subsection{Quartic interactions}

The $(1+1)$-dimensional real scalar field theory with quartic
interactions, conventionally denoted by $\phi^4_2$, has the formal
Hamiltonian
\begin{align}
    H_{\phi^4}
    \coloneqq
    \int\dd x\,
    \left[
        \frac{1}{2}\pi^2
        +
        \frac{1}{2}(\partial_x\phi)^2
        +
        \frac{1}{2}m^2\phi^2
        +
        g\,{:}\phi^4{:}
    \right],
\end{align}
where $g$ is the quartic coupling, $m>0$ is the bare mass
parameter, and
\begin{align}
    [\phi(x),\pi(y)]
    =
    i\delta(x-y)\,.
\end{align}
Here the quartic interaction is Wick ordered with respect to the
vacuum of the free theory of mass $m$. The $\phi^4_2$ model is one of the standard examples of an interacting relativistic quantum field theory that admits a rigorous construction \cite{GlimmJaffee1968phi4-1,GlimmJaffee1968phi4-2,GlimmJaffee1968phi4-3}. It has also been studied numerically using Hamiltonian truncation, lattice MPS, Monte Carlo methods, and tensor-network
renormalization \cite{Rychkov2015phi4,Vanhecke2019qft, Milsted2013criticalQFT,Bosetti2015montecarlophi4,Bronzin2015montecarlophi4,delcamp2020rgphi4} (see also \cite{tilloy2026reviewRCMPS} for a recent review). 

Our aim here is not to obtain a precise determination of the critical properties of $\phi^4_2$, but to use it as a test of the GA-BMPS ansatz for a Hamiltonian without $U(1)$ symmetry. We discretize space while retaining the full local Fock space and work directly in the thermodynamic limit. The dimensionless lattice Hamiltonian is
\begin{align}
    aH
    =
    \sum_j
    \left[
        \frac{1}{2}\pi_j^2
        +
        \frac{1}{2}(\phi_{j+1}-\phi_j)^2
        +
        \frac{1}{2}(am)^2\phi_j^2
        +
        a^2g\,{:}\phi_j^4{:}
    \right],
\end{align}
where the discretized field operators are
\begin{equation*}
    \begin{aligned}
        \phi_j
        &\coloneqq
        \phi(aj)
        =
        \frac{a_j+a_j^\dagger}{\sqrt{2}}\,,\quad 
        \pi_j
        \coloneqq
        a\pi(aj)
        =
        \frac{i(a_j^\dagger-a_j)}{\sqrt{2}}\,.
    \end{aligned}
\end{equation*}
These operators satisfy $[\phi_j,\pi_k]=i\delta_{jk}$ and the lattice Wick-ordering prescription is \cite{Rychkov2015phi4}
\begin{align}
    {:}\phi_j^4{:}=\phi_j^4-6G\phi_j^2+3G^2\,.
\end{align}
Here the lattice dispersion $\omega(q)=\sqrt{(am)^2+4\sin^2(q/2)}$ and $G$ is the coincident two-point function of the free massive lattice vacuum in the thermodynamic limit, namely
\begin{align}
    G
    \coloneqq
    \langle\phi_j^2\rangle_0
    =
    \int_{-\pi}^{\pi}
    \frac{\dd q}
    {4\pi\sqrt{(am)^2+4\sin^2(q/2)}}
\end{align}
The same lattice dispersion and coincident propagator appear in Hamiltonian lattice studies of $\phi^4_2$ \cite{Milsted2013criticalQFT}. 

\begin{figure*}[tp]
    \centering
    \includegraphics[scale=0.33]{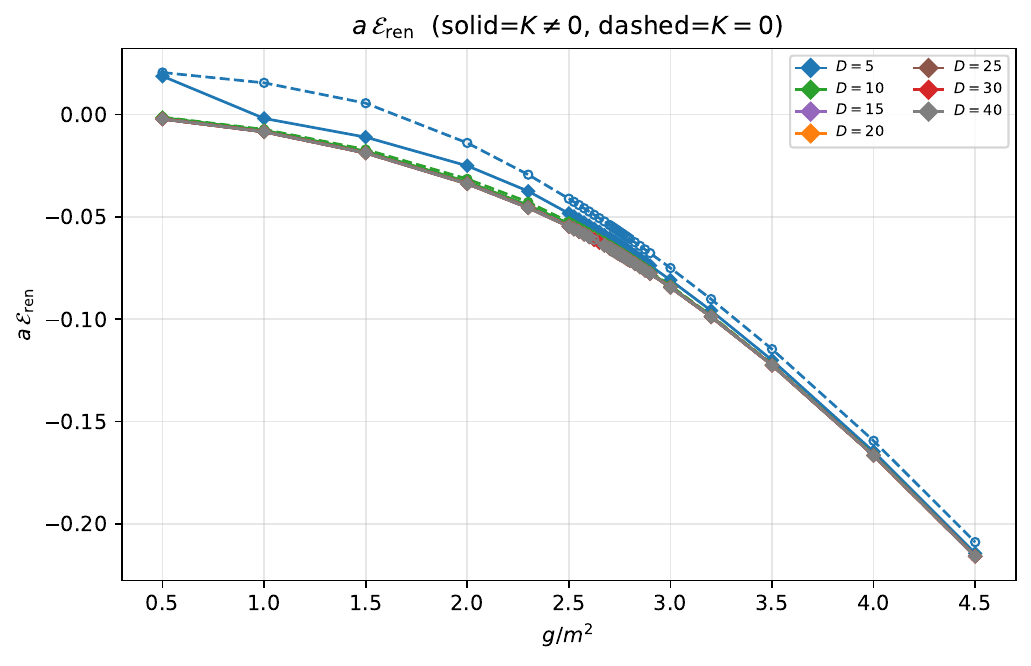}
    \includegraphics[scale=0.33]{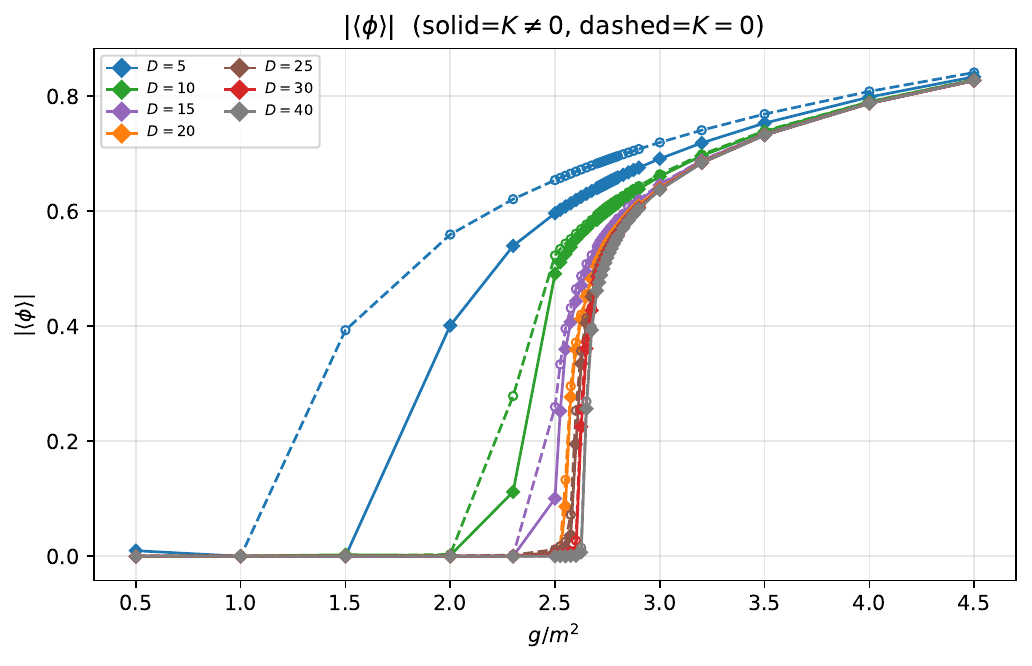}
    \includegraphics[scale=0.33]{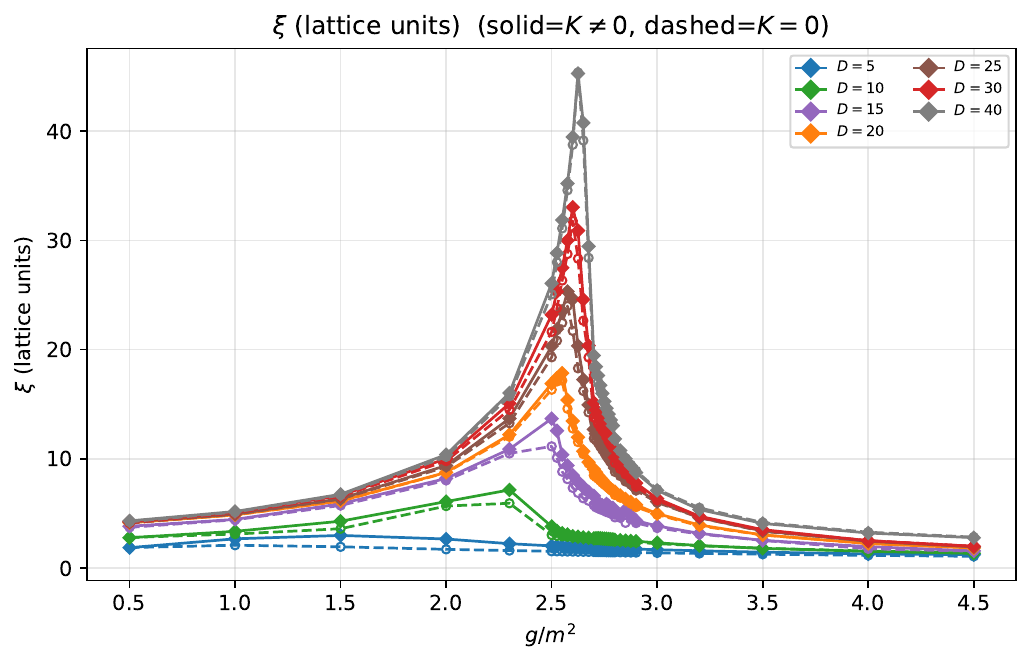}
    \caption{
    Thermodynamic-limit BMPS results for the lattice $\phi^4_2$ model at $a=0.2$ and $m=1$. Solid diamonds show the on-site-squeezed ansatz with $K\neq0$, while dashed open circles show the $K=0$ ansatz. \textit{Left:} renormalized energy $a\mathcal E_{\rm ren}$; \textit{Middle:} the magnitude of the order parameter; \textit{Right:}
    the correlation length in lattice units. For the parameters and bond dimensions shown, including on-site squeezing lowers the variational energy. The order parameter and correlation length display a finite-$D$ crossover consistent with the Ising transition.
    }
    \label{fig: phi4-plots}
\end{figure*}

For completeness, we also specify the energy density shown below. Writing
\begin{align}
    aH=\sum_j h_j\,,
\end{align}
the zero-point energy per site of the free lattice theory is
\begin{align}
    \epsilon_{0,\infty}
    =
    \bra{0_m}h_j\ket{0_m}\big|_{g=0}
    =
    \int_{-\pi}^{\pi}
    \frac{\dd q}{4\pi}\,\omega(q)\,.
\end{align}
The rescaled renormalized energy density is given by
\begin{align}
    a\mathcal E_{\mathrm{ren}}
    \coloneqq
    \frac{1}{a}
    \left(
        \langle h_j\rangle-\epsilon_{0,\infty}
    \right).
    \label{eq: renormalized-energy-density}
\end{align}
Thus the Wick-ordering prescription supplies the quadratic
counterterm in the Hamiltonian, whereas the subtraction of
$\epsilon_{0,\infty}$ fixes the free-vacuum energy to zero. No
continuum extrapolation is implied by the notation
$\mathcal E_{\mathrm{ren}}$. We use this lattice Hamiltonian as our
testbed. 

The numerical results are shown in Fig.~\ref{fig: phi4-plots}. We
restrict to the commuting, simultaneously diagonalizable subfamily and
use the gauge freedom to represent $K$ and $L$ in diagonal form, with
$V$ absorbing the corresponding similarity transformation. Bulk
expectation values are computed from the dominant left and right fixed
points of the transfer operator. The variational energy was minimized
in double precision using L-BFGS-B, with gradients obtained by automatic
differentiation through the fixed-point calculation. For each coupling, the $K=0$ optimization was performed both from seeded random initializations and, where available, from the optimized parameters at the preceding value of $g$ and we retain the lowest-energy candidate. The $K\neq0$ optimization was initialized from the selected $K=0$ state together with additional seeded squeezing initializations, and the lowest-energy candidate was retained. At every value of $D$ and $g$, the optimized ansatz with $K\neq0$ has lower variational energy than the corresponding $K=0$ ansatz.

\subsection{Ground state of parent Hamiltonian}

In Section~\ref{sec: parent-hamiltonian} we provided the parent Hamiltonian for the BMPS family. It is therefore expected that if one starts from such a parent Hamiltonian $H$, the BMPS should be able to variationally find the ground state of $H$. Furthermore, for injective BMPS the uniqueness of the ground state should allow us to recover the matrices that specify the BMPS. We demonstrate this using Example~\ref{example: generic-D2-diagonal-L}, which gives a generic three-site parent Hamiltonian with $D=2$ and no squeezing.

We considered the BMPS with parameters
\begin{align*}
    V
    &=
    \begin{pmatrix}
        0.696 & 0.287 \\
        -0.164+0.235i & 0.592
    \end{pmatrix}\,,\\
    K&=0\,, \qquad L=
    \diag(-1.000,1.300)
\end{align*}
and used the parent Hamiltonian constructed in Example~\ref{example: generic-D2-diagonal-L}. Here $E_\rho$ denotes its thermodynamic-limit energy density, evaluated from the dominant fixed points of the BMPS transfer map. We minimized $E_\rho$ using L-BFGS-B, with gradients obtained by automatic differentiation. For each bond dimension $D=1,2,3$, we retained the lowest-energy result among six seeded random initializations, none of which used the target parameters. We define $D^\ast$ as the smallest bond dimension for which
$E_\rho<10^{-7}$. As shown in Table~\ref{tab:parent-recovery}, this
occurs at $D^\ast=2$. Increasing the bond dimension to $D=3$ produces
only a numerically vanishing third Schmidt weight, while the two leading
weights remain essentially unchanged, consistent with a ground state of
bond dimension $D=2$.

\begin{table}[t]
    \centering
    \small
    \begin{tabular}{@{}ccc@{}}
        \toprule
        $D$ & $E_\rho$ & Schmidt weights $\{s_i\}$ \\
        \midrule
        1 & $3.9\times10^{-6}$
          & $\{1.000\}$ \\
        2 & $4.7\times10^{-9}$
          & $\{0.868,\,0.132\}$ \\
        3 & $3.6\times10^{-9}$
          & $\{0.870,\,0.130,\,4.0\times10^{-16}\}$ \\
        \bottomrule
    \end{tabular}
    \caption{Variational recovery of the injective MPS in Example~\ref{example: generic-D2-diagonal-L}. The Schmidt weights correspond to the eigenvalues of the reduced density matrix across half the chain.}
    \label{tab:parent-recovery}
\end{table}

As shown in Example~\ref{example: generic-D2-diagonal-L}, the BMPS
tensor becomes injective after blocking two sites, while its nontrivial
parent Hamiltonian is 3-local. Consequently, we expect that the corresponding infinite-volume parent Hamiltonian has a unique ground state. In this case, one should be able to recover the matrices $V,L$ accurately when the ground state energy is found. Indeed, at $D^\ast=2$ we show that the optimized state also recovers the target parameters after gauge fixing. After gauge fixing, the recovered parameters agree with the prescribed parameters with maximum entrywise errors
\begin{equation}
\begin{aligned}
    \max|V_{\mathrm{rec}}-V_{\mathrm{target}}|
    &=
    1.1\times10^{-7}\,,
    \\
    \max|\ell_{\mathrm{rec}}
             -\ell_{\mathrm{target}}|
    &=
    9.2\times10^{-10}\,.
\end{aligned}
\end{equation}

\section{Discussion and outlook}
\label{sec: discussions}

In this paper we propose and analyze the structure of a rich family of bosonic many-body states called Gaussian-augmented bosonic matrix-product states (GA-BMPS) that have the following features: (i) they include all pure Gaussian states and truncated (finite-dimensional) MPS; (ii) expectation values can be efficiently computed and hence, among other things, they can be used for variational calculations; (iii) they have exact parent Hamiltonians that are simple functions of the bosonic creation and annihilation operators. The GA-BMPS family contains important families of states such as the Gaussian states and the finite-dimensional MPS (by embedding them into Fock space). 

The structure of the ansatz opens up to several future directions. The first  is higher-dimensions: while this does not solve the hardness of PEPS contraction in general, such a bosonic PEPS ansatz may become relevant when Fock space truncation is preferably avoided. Second, just like the standard MPS, since the ansatz enables species mixtures of bosons, fermions and spins, it would be interesting to see if they can be applied for studying (non-)equilibrium properties of mixed-species problems. {Last but not least, it would be interesting to develop a more sophisticated algorithms that exploit the natural structure of the ansatz for both static and dynamical problems. While we have shown that for static problems the GA-BMPS family can be effective, we expect that much work needs to be done before it can be used for dynamical problems beyond standard approaches based on truncated MPS or Gaussian states.} We leave these open for future work.

\paragraph*{Note added.}
{While this manuscript was being completed, we became aware of ~\cite{brauer2026symmetry}, which constructs oscillator parent Hamiltonians for multimode GHZ-, cluster-, and W-type cat resource states using coherent-branch, correlation, and symmetry constraints. The constructions overlap for coherent-state cat subfamilies (e.g., their branch operator $a^2-\alpha^2$ corresponds to our $Q$), whereas our construction applies systematically to general BMPS whose local basis states include photon-added squeezed coherent states.} 

\section*{Acknowledgment}

The authors are grateful to Mari Carmen Bañuls, Marco Rigobello, Jutho Haegeman, Wei Tang, and Daniel Malz for insightful discussions.  E.T. acknowledges support from the Alexander von Humboldt Foundation.  J.I.C acknowledges funding by THEQUCO as part of the Munich Quantum Valley,
which is supported by the Bavarian state government with funds from the Hightech Agenda Bayern Plus. This work has been partially supported by
Klaus Tschira Foundation.

\section*{AI disclosure}

All numerical calculations were performed using AI-generated Python
code developed with GPT-5.5 and Claude Code Opus-4.8 under iterative human supervision. The implementations were subsequently audited using GPT-5.6 Sol (Codex). All scientific results are checked by the authors and the authors take responsibility for them.

\bibliography{ref}

\appendix

\section{From the symplectic Bloch-Messiah decomposition to Gaussian unitaries}
\label{appendix: Bloch-Messiah}

Here we review the decomposition of Gaussian unitaries used in Sec.~\ref{sec: GA-BMPS} from the symplectic Bloch-Messiah decomposition. 

We use the quadrature ordering introduced in Sec.~\ref{sec: GA-BMPS},
\begin{align}
    \mathbf R
    \coloneqq
    \begin{bmatrix}
        \bm{x}\\
        \bm{p}
    \end{bmatrix}
    =
    (x_1,\ldots,x_N,p_1,\ldots,p_N)^T\,,
\end{align}
for which
\begin{align}
    [R_j,R_k]
    &=
    i\Omega_{jk}\openone\,,
    &
    \Omega
    &=
    \begin{bmatrix}
        0&\openone_N\\
        -\openone_N&0
    \end{bmatrix}\,.
\end{align}
An $N$-mode Gaussian unitary acts affinely on the quadratures
\cite{Weedbrook2012GaussianQI},
\begin{align}
    U_{\mathsf G}^\dagger
    \mathbf R
    U_{\mathsf G}
    =
    S\mathbf R+\mathbf d\,,
    \label{eq: affine-symplectic-action}
\end{align}
where $\mathbf d\in\R^{2N}$ and $S\in\mathrm{Sp}(2N,\R)$ satisfies $S\Omega S^T=\Omega$.

We first recall the symplectic factorization of $S$. Let
\begin{align}
    P
    &\coloneqq
    (SS^T)^{1/2}\,,
    \quad
    O
    \coloneqq
    P^{-1}S\,.
    \label{eq: symplectic-polar-decomposition}
\end{align}
Polar decomposition $S=PO$ gives a positive symplectic matrix $P>0$
and an orthogonal symplectic matrix $O$, i.e., 
\begin{align}
    P\Omega P
    &=
    \Omega\,,
    \quad
    O\Omega O^T
    =
    \Omega\,,
    \quad
    OO^T
    =
    \openone_{2N}\,.
    \label{eq: symplectic-polar-properties}
\end{align}
Since $P=P^T>0$, the first relation is equivalently
\begin{align}
    P\Omega
    =
    \Omega P^{-1}\,.
    \label{eq: positive-symplectic-intertwining}
\end{align}
If $Pv=\lambda v$ with $\lambda>0$ for some nonzero vector $v\in \R^{2N}$, then
\begin{align}
    P(\Omega v)
    = \Omega P^{-1}v
    = \lambda^{-1}\Omega v\,.
    \label{eq: reciprocal-symplectic-eigenvalues}
\end{align}
Therefore, the eigenvalues of $P$ occur in reciprocal pairs (if $\lambda$ is a nonzero eigenvalue then so is $\lambda^{-1}$). Since $P$ is real and symmetric, these pairs can be arranged into an
orthonormal symplectic eigenbasis: that is, there exists an orthogonal symplectic matrix
\begin{align}
    O_{\mathsf{U}}
    \in
    \mathrm{Sp}(2N,\R)\cap\mathrm O(2N)\,,
\end{align}
such that
\begin{align}
    P
    &=
    O_{\mathsf{U}}
    \Sigma
    O_{\mathsf{U}}^T\,,
    \label{eq: positive-symplectic-diagonalization}
    \\
    \Sigma
    &=
    \begin{bmatrix}
        e^{-\bm r}&0\\
        0&e^{\bm r}
    \end{bmatrix}\,,
    &
    \bm r
    &=
    \diag(r_1,\ldots,r_N)\,,
    \label{eq: diagonal-symplectic-squeezing}
\end{align}
where $r_j \geq 0$. Since $O_{\mathsf W}\coloneqq O_{\mathsf{U}}^T O$ 
also belongs to
$\mathrm{Sp}(2N,\R)\cap\mathrm O(2N)$, it follows that $S$ admits a symplectic singular value decomposition (the so-called Bloch-Messiah decomposition) \cite{Braunstein2005bloch-messiah}:
\begin{align}
    S
    =
    O_{\mathsf{U}}
    \Sigma
    O_{\mathsf W}\,.
    \label{eq: symplectic-Bloch-Messiah}
\end{align}
The quantities $e^{\pm r_j}$ (squeezing) are the symplectic singular values of
$S$. Like the standard singular-value decompositions, the above decomposition is not unique. 

\begin{proposition}[Fock-space form of a Gaussian unitary]
\label{prop: Bloch-Messiah}
Let $U_{\mathsf G}$ be an $N$-mode Gaussian unitary satisfying
Eq.~\eqref{eq: affine-symplectic-action}. Then, up to a global phase,
\begin{align}
    U_{\mathsf G}
    =
    \mathsf{U}
    \left[
        \prod_{j=1}^N
        D_j(\alpha_j)S_j(\zeta_j)
    \right]
    \mathsf W\,,
    \label{eq: Gaussian-unitary-Bloch-Messiah}
\end{align}
where $\mathsf{U}$ and $\mathsf W$ are passive linear-optical
unitaries, while $D_j(\alpha_j)$ and $S_j(\zeta_j)$ are displacement
and squeezing operators acting on mode $j$, respectively.
\end{proposition}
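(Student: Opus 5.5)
The plan is to lift the symplectic Bloch--Messiah decomposition \eqref{eq: symplectic-Bloch-Messiah}, $S=O_{\mathsf U}\Sigma O_{\mathsf W}$, to the Fock space. We use the fact that the metaplectic correspondence $S\mapsto$ Gaussian unitary is a group homomorphism up to a phase, and that a unitary implementing a given affine symplectic map is unique up to a phase. The latter holds by irreducibility of the CCR representation: if $U^\dagger\mathbf R U=V^\dagger\mathbf R V$, then $UV^\dagger$ commutes with all $R_i$, hence is a scalar by Schur's lemma.

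\textbf{Orthogonal factors.} First we show that each orthogonal symplectic matrix $O\in\mathrm{Sp}(2N,\R)\cap\mathrm O(2N)$ is implemented by a passive unitary \eqref{eq: PLO}. Writing $O=\begin{bmatrix}X&Y\\-Y&X\end{bmatrix}$, the two constraints are equivalent to $\mathcal U\coloneqq X-iY$ being unitary. Indeed, in the complex variables $\mathbf a=(\bm x+i\bm p)/\sqrt2$, the map $\mathbf R\mapsto O\mathbf R$ reads $\mathbf a\mapsto \mathcal U\mathbf a$ (up to conjugation convention) with no mixing of $\mathbf a$ and $\mathbf a^\dagger$. Write $\mathcal U=e^{i\mathsf M}$ with $\mathsf M$ Hermitian. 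The operator $e^{i\sum\mathsf M_{jk}a_j^\dagger a_k}$ satisfies $\mathsf U^\dagger \mathbf a\mathsf U=e^{i\mathsf M}\mathbf a$; this follows from $[\mathbf a^\dagger\mathsf M\mathbf a,a_j]=-(\mathsf M\mathbf a)_j$ and the BCH series. Up to fixing signs and transposes, this realizes $O_{\mathsf U}$ and $O_{\mathsf W}$ as passive unitaries $\mathsf U,\mathsf W$, as in Eq.~\eqref{eq: passive-observable-transformation}.

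\textbf{Squeezing factor.} $\Sigma$ is a direct sum of single-mode maps $x_j\mapsto e^{-r_j}x_j$, $p_j\mapsto e^{r_j}p_j$. A direct single-mode computation shows that $S_j(r_j)$ implements this map: with real $\zeta=r$, $S^\dagger a S=a\cosh r-a^\dagger\sinh r$, which gives exactly this action on $x,p$. So $\prod_j S_j(\zeta_j)$ with $\zeta_j=r_j$ implements $\Sigma$.

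\textbf{Displacement and assembly.} Displacements implement pure translations, $D(\alpha)^\dagger\mathbf R D(\alpha)=\mathbf R+\mathbf d(\alpha)$, with $\alpha_j$ ranging over all of $\C$ and hence $\mathbf d$ over all of $\R^{2N}$. Consider the composition $\mathsf U[\prod D_jS_j]\mathsf W$. In the Heisenberg picture the order of the maps reverses, giving $\mathbf R\mapsto O_{\mathsf U}\Sigma O_{\mathsf W}\mathbf R+\mathbf d'$, where $\mathbf d'$ is the displacement vector pushed through $O_{\mathsf U}$; the individual factors may need reordering or transposition to match the Heisenberg order. We choose the vector $(\alpha_j)$ so that $\mathbf d'=\mathbf d$, which is always possible since $O_{\mathsf U}$ is invertible. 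The product then implements the same affine map as $U_{\mathsf G}$, and by the Schur uniqueness argument the two agree up to a global phase.

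\textbf{Main obstacle.} The main difficulty is bookkeeping rather than conceptual: making the Heisenberg-picture order reversal and the sign/conjugation conventions of $\mathcal U$ versus $O$ consistent, so that the product appears exactly as $\mathsf U[\prod D_jS_j]\mathsf W$ and not in some permuted form. The first check is a single mode, tracking how $O_{\mathsf U}$, $\Sigma$, $O_{\mathsf W}$ compose. There is also a minor technical point: the unbounded generators must be treated on the dense domain of finite-particle vectors to justify the BCH and Schur steps.
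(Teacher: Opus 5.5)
Your proposal follows the paper's proof essentially step for step. You lift $S=O_{\mathsf U}\Sigma O_{\mathsf W}$ factor by factor: passive unitaries via $\mathcal U=e^{i\mathsf M}$ and the BCH series, real single-mode squeezers for $\Sigma$, and displacements with $\bm\delta=O_{\mathsf U}^T\mathbf d$. You then conclude equality up to a global phase from irreducibility of the Fock representation, exactly as the paper does.

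One correction to your bookkeeping concern: with the convention $U^\dagger\mathbf R U=S\mathbf R$, the correspondence preserves order rather than reversing it. Indeed $(U_1U_2)^\dagger\mathbf R\,U_1U_2=U_2^\dagger(S_1\mathbf R)U_2=S_1S_2\mathbf R$. So $\mathsf U D(\bm\alpha)\mathsf S(\bm r)\mathsf W$ directly gives $\mathbf R\mapsto O_{\mathsf U}(\Sigma O_{\mathsf W}\mathbf R+\bm\delta)$, and no reordering or transposition is needed.
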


\begin{proof}
We lift the three factors in Eq.~\eqref{eq: symplectic-Bloch-Messiah} to Fock space as Gaussian unitaries.

First consider an arbitrary orthogonal symplectic matrix
\begin{align}
    O
    \in
    \mathrm{Sp}(2N,\R)\cap\mathrm O(2N)\,.
\end{align}
Orthogonality and symplecticity imply that $O\Omega=\Omega O$, hence $O$ can be expressed as blocks of $N\times N$ matrices
\begin{align}
    O
    =
    \begin{bmatrix}
        X&-Y\\
        Y&X
    \end{bmatrix}
    \label{eq: orthogonal-symplectic-block}
\end{align}
where the orthogonality condition $O^TO=\openone_{2N}$ implies that
\begin{align}
    X^TX+Y^TY
    &=
    \openone_N\,,
    &
    X^TY-Y^TX
    &=
    0\,.
    \label{eq: orthogonal-symplectic-block-relations}
\end{align}
If we define $\mathcal{U}\coloneqq X+iY$, we obtain
\begin{align}
    \mathcal{U}^\dagger \mathcal{U}
    &=
    (X^T-iY^T)(X+iY)
    =
    \openone_N\,,
\end{align}
hence $\mathcal{U}$ is unitary.

To identify the required transformation on the annihilation
operators, consider the action of $O$ on the quadratures $\mathbf{R}\equiv [\bm{x},\bm{p}]^T$:
\begin{align}
    U_{O}^\dagger
    \begin{bmatrix}
        \bm{x}\\
        \bm{p}
    \end{bmatrix}
    U_{O}
    &=
    \begin{bmatrix}
        X&-Y\\
        Y&X
    \end{bmatrix}
    \begin{bmatrix}
        \bm{x}\\
        \bm{p}
    \end{bmatrix}
    =
    \begin{bmatrix}
        X\bm{x}-Y\bm{p}\\
        Y\bm{x}+X\bm{p}
    \end{bmatrix}\,.
    \label{eq: passive-quadrature-action}
\end{align}
Since
\begin{align}
    \bm{a}
    \coloneqq
    (a_1,\ldots,a_N)^T
    =
    \frac{\bm{x}+i\bm{p}}{\sqrt2}\,,
\end{align}
we obtain
\begin{align}
    U_{O}^\dagger
    \bm{a}
    U_{O}
    &=
    \frac{
    X\bm{x}-Y\bm{p}
        +
        i(Y\bm{x}+X\bm{p})       
    }{\sqrt2}
    \notag\\
    &=
    (X+iY)
    \frac{\bm{x}+i\bm{p}}{\sqrt2}
    \notag\\
    &=
    \mathcal{U}\bm{a}\,.
    \label{eq: passive-mode-transformation}
\end{align}
Applying this to $O_{\mathsf{U}}$ and
$O_{\mathsf W}$, we can write the corresponding unitary $N\times N$ matrices 
\begin{align}
    \mathcal{U}_{\mathsf{U}}
    &=
    e^{i\mathsf{M}}\,,\quad
    \mathcal{U}_{\mathsf W}
    =
    e^{i\mathsf{N}}\,,
\end{align}
where $\mathsf M$ and $\mathsf N$ are Hermitian. Define
\begin{align}
    \mathsf{U}
    &\coloneqq
    \exp\rr{
        i\sum_{j,k=1}^N
        \mathsf M_{jk}
        a_j^\dagger a_k
    }\,,
    \label{eq: passive-unitary-U}
    \\
    \mathsf W
    &\coloneqq
    \exp\rr{
        i\sum_{j,k=1}^N
        \mathsf N_{jk}
        a_j^\dagger a_k
    }\,.
    \label{eq: passive-unitary-W}
\end{align}
For a Hermitian matrix $\mathsf{H}$, the quadratic operator
\begin{align}
    G_{\mathsf{H}}
    \coloneqq
    \sum_{j,k=1}^N
    \mathsf{H}_{jk}a_j^\dagger a_k
\end{align}
satisfies
\begin{align}
    [G_{\mathsf{H}},a_l]
    &=
    \sum_{j,k=1}^N
    \mathsf{H}_{jk}
    [a_j^\dagger a_k,a_l]
    =
    -\sum_{k=1}^N
    \mathsf{H}_{l k}a_k\,.
\end{align}
The BCH formula therefore gives
\begin{align}
    e^{-iG_{\mathsf{H}}}
    \bm{a}
    e^{iG_{\mathsf{H}}}
    =
    e^{i\mathsf{H}}\bm{a}\,.
\end{align}
It follows that
\begin{align}
    \mathsf{U}^\dagger a_j \mathsf{U}
    &=
    \sum_{k=1}^N [\mathcal{U}_{\mathsf{U}}]_{jk}a_k\,,
    &
    \mathsf W^\dagger a_j\mathsf W
    &=
    \sum_{k=1}^N [\mathcal{U}_{\mathsf{W}}]_{jk}a_k\,,
\end{align}
hence the Gaussian unitaries $\mathsf{U}$ and $\mathsf W$ implement orthogonal symplectic transformation $O_{\mathsf{U}}$ and $O_{\mathsf W}$, respectively. 
These unitaries preserve the total number operator:
\begin{align*}
    \mathsf{U}^\dagger
    n_{\mathsf{tot}}
    \mathsf{U}
    &=
    n_{\mathsf{tot}}\,,\quad 
    n_{\mathsf{tot}}
    \coloneqq
    \sum_{j=1}^N a_j^\dagger a_j\,,
\end{align*}
and similarly for $\mathsf W$, so both are \textit{passive linear-optical unitaries}.

The diagonal symplectic matrix $\Sigma$ is implemented by
\begin{align}
    \mathsf S(\bm r)
    &\coloneqq
    \prod_{j=1}^N S_j(r_j)\,,
    \quad 
    S_j(r_j)
    \coloneqq
    e^{\frac{1}{2}r_j
        \left(
            a_j^2-(a_j^\dagger)^2
        \right)}\,.
    \label{eq: product-squeezing}
\end{align}
The single-mode transformations are
\begin{equation}
    \begin{aligned}
        S_j(r_j)^\dagger x_kS_j(r_j)
        &=
        \begin{cases}
            e^{-r_j}x_j\,,&k=j\,,\\
            x_k\,,&k\neq j\,,
        \end{cases}
        \\
        S_j(r_j)^\dagger p_kS_j(r_j)
        &=
        \begin{cases}
            e^{r_j}p_j\,,&k=j\,,\\
            p_k\,,&k\neq j\,.
        \end{cases}
    \end{aligned}
\end{equation}
Therefore,
\begin{align}
    \mathsf S(\bm r)^\dagger
    \begin{bmatrix}
        \bm{x}\\
        \bm{p}
    \end{bmatrix}
    \mathsf S(\bm r)
    &=
    \begin{bmatrix}
        e^{-\bm r}\bm{x}\\
        e^{\bm r}\bm{p}
    \end{bmatrix}
    =
    \Sigma\mathbf R\,.
    \label{eq: squeezing-symplectic-lift}
\end{align}

Finally, we need to account for the displacement vector $\mathbf d$. Define
\begin{align}
    D(\bm\alpha)
    &\coloneqq
    \prod_{j=1}^N D_j(\alpha_j)\,,
    \quad 
    D_j(\alpha_j)
    \coloneqq
    e^{\alpha_j a_j^\dagger
        -
        \bar\alpha_j a_j}\,.
\end{align}
Since $D(\bm\alpha)^\dagger a_jD(\bm\alpha)=a_j+\alpha_j$, we have
\begin{equation}
    \begin{aligned}
        D(\bm\alpha)^\dagger x_jD(\bm\alpha)
        &=
        x_j+\sqrt2\,\Re(\alpha_j)\,,
        \\
        D(\bm\alpha)^\dagger p_jD(\bm\alpha)
        &=
        p_j+\sqrt2\,\Im(\alpha_j)\,.
    \end{aligned}
\end{equation}
Thus
\begin{align}
    D(\bm\alpha)^\dagger
    \mathbf R
    D(\bm\alpha)
    =
    \mathbf R+\bm\delta\,,
\end{align}
where $\bm\delta = [ \bm\delta_{\bm{x}},\bm\delta_{\bm{p}}]^T$ and
\begin{align}
    \delta_{x,j}
    &=
    \sqrt2\,\Re(\alpha_j)\,,
    \quad
    \delta_{p,j}
    =
    \sqrt2\,\Im(\alpha_j)\,.
    \label{eq: displacement-quadrature-relation}
\end{align}

Now set
\begin{align}
    \widetilde U_{\mathsf G}
    \coloneqq
    \mathsf{U}
    D(\bm\alpha)
    \mathsf S(\bm r)
    \mathsf W\,.
    \label{eq: candidate-Bloch-Messiah}
\end{align}
Its Heisenberg action can be evaluated in the order in which the
operators occur:
\begin{align}
    \widetilde U_{\mathsf G}^\dagger
    \mathbf R
    \widetilde U_{\mathsf G}
    &=
    \mathsf W^\dagger
    \mathsf S(\bm r)^\dagger
    D(\bm\alpha)^\dagger
    \mathsf{U}^\dagger
    \mathbf R
    \mathsf{U}
    D(\bm\alpha)
    \mathsf S(\bm r)
    \mathsf W
    \notag\\
    &=
    \mathsf W^\dagger
    \mathsf S(\bm r)^\dagger
    D(\bm\alpha)^\dagger
    \left(
        O_{\mathsf{U}}\mathbf R
    \right)
    D(\bm\alpha)
    \mathsf S(\bm r)
    \mathsf W
    \notag\\
    &=
    \mathsf W^\dagger
    \mathsf S(\bm r)^\dagger
    O_{\mathsf{U}}
    \left(
        \mathbf R+\bm\delta
    \right)
    \mathsf S(\bm r)
    \mathsf W
    \notag\\
    &=
    \mathsf W^\dagger
    O_{\mathsf{U}}
    \left(
        \Sigma\mathbf R+\bm\delta
    \right)
    \mathsf W
    \notag\\
    &=
    O_{\mathsf{U}}
    \left(
        \Sigma O_{\mathsf W}\mathbf R+\bm\delta
    \right)
    \notag\\
    &=
    O_{\mathsf{U}}\Sigma O_{\mathsf W}\mathbf R
    +
    O_{\mathsf{U}}\bm\delta
    \notag\\
    &=
    S\mathbf R+O_{\mathsf{U}}\bm\delta\,.
    \label{eq: verify-Bloch-Messiah}
\end{align}
If we choose
\begin{align}
    \bm\delta
    =
    O_{\mathsf{U}}^T\mathbf d\,,
\end{align}
then $\tilde{U}_\mathsf{G}$ reproduces the affine action
in Eq.~\eqref{eq: affine-symplectic-action}. Consequently, the unitaries $U_{\mathsf G}$ and $\widetilde U_{\mathsf G}$ have the same action on all
$a_j$ and $a_j^\dagger$, which implies in particular that 
$U_{\mathsf G}\widetilde U_{\mathsf G}^\dagger$ commutes with every
canonical operator. The Fock representation of the canonical
commutation relations is irreducible, so
\begin{align}
    U_{\mathsf G}
    =
    e^{i\chi}
    \widetilde U_{\mathsf G}
\end{align}
for some $\chi\in\R$.

The decomposition above uses real squeezing parameters $r_j$. The
more general single-mode squeezing operator
\begin{align}
    S_j(\zeta_j)
    \coloneqq
    e^{\frac{1}{2}(\bar\zeta_j a_j^2-\zeta_j(a_j^\dagger)^2)}
\end{align}
with $\zeta_j=r_je^{i\phi_j}$ differs from $S_j(r_j)$ by passive
phase rotations. These rotations can be absorbed into
$\mathsf{U}$ and $\mathsf W$, giving
Eq.~\eqref{eq: Gaussian-unitary-Bloch-Messiah}.
\end{proof}

Since operators acting on distinct modes commute,
\begin{align}
    D(\bm\alpha)\mathsf S(\bm r)
    =
    \prod_{j=1}^N
    D_j(\alpha_j)S_j(r_j)\,,
\end{align}
which gives the product appearing in
Eq.~\eqref{eq: Gaussian-unitary-Bloch-Messiah}.

For the multimode vacuum, the rightmost passive unitary contributes
only a global phase. Indeed, preservation of the total number operator
implies
\begin{align}
    n_{\mathsf{tot}}
    \mathsf W\ket{0}^{\otimes N}
    &=
    \mathsf W
    n_{\mathsf{tot}}\ket{0}^{\otimes N}
    =
    0\,.
\end{align}
The zero-particle subspace is one-dimensional, and hence
\begin{align}
    \mathsf W\ket{0}^{\otimes N}
    =
    \ket{0}^{\otimes N}\,.
\end{align}
It follows that
\begin{align}
    U_{\mathsf G}\ket{0}^{\otimes N}
    =
    \mathsf{U}
    \bigotimes_{j=1}^N
    \left[
        D_j(\alpha_j)S_j(\zeta_j)\ket{0}_j
    \right]
\end{align}
up to a global phase, as used in
Eq.~\eqref{eq: bloch-messiah-decomposition-states}. The remaining
passive unitary $\mathsf{U}$ is generally nontrivial.

\section{Absolute convergence of the transfer matrix}
\label{appendix: ansatz-absolute-convergence}

\begin{lemma}
    Consider a bosonic MPS with tensor
    \begin{align}
        A^n
        =
        \sqrt{n!}\,V
        \sum_{m=0}^{\lfloor n/2\rfloor}
        \frac{K^mL^{n-2m}}{m!(n-2m)!}\,,
        \qquad
        n\in\mathbb N_0\,.
    \end{align}
    If $\rho(K)<1/2$, then the transfer-matrix series
    \begin{align}
        E
        \coloneqq
        \sum_{n=0}^{\infty}
        \bar{A^n}\otimes A^n
    \end{align}
    converges absolutely.
\end{lemma}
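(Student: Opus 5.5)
The plan is to bound $\|A^n\|$ directly and show that $\sum_n \|A^n\|^2<\infty$. Since $\|\bar{A^n}\otimes A^n\|=\|A^n\|^2$ in any submultiplicative (e.g.\ operator) norm on the tensor product, this gives absolute convergence of $E$.

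First, since $\rho(K)<1/2$, fix $r$ with $\rho(K)<r<1/2$. By Gelfand's formula there is a constant $c_K$ with $\|K^m\|\le c_K r^m$ for all $m\ge0$. Set $\lambda\coloneqq\|L\|$. Then
\begin{align*}
    \|A^n\|\le \|V\|\sqrt{n!}\,c_K\sum_{m=0}^{\lfloor n/2\rfloor}\frac{r^m\lambda^{n-2m}}{m!(n-2m)!}
    = \|V\|c_K\,\frac{h_n}{\sqrt{n!}}\,,
\end{align*}
where $h_n\coloneqq n!\sum_m \frac{r^m\lambda^{n-2m}}{m!(n-2m)!}$. The point is that $h_n/n!$ is exactly the $n$-th Taylor coefficient of the scalar generating function $e^{rz^2+\lambda z}$. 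Equivalently, $h_n/\sqrt{n!}=\langle n|r,\lambda\rangle$ is the Fock amplitude of the unnormalized squeezed coherent state $|\kappa,\ell\rangle$ of Eq.~\eqref{eq: squeezed-coherent-state} with $\kappa=r$, $\ell=\lambda$, because $e^{r(a^\dagger)^2+\lambda a^\dagger}|0\rangle=\sum_n \frac{h_n}{n!}(a^\dagger)^n|0\rangle$ and $(a^\dagger)^n|0\rangle=\sqrt{n!}|n\rangle$.

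Hence
\begin{align*}
    \sum_n\|A^n\|^2\le \|V\|^2c_K^2\sum_n\frac{h_n^2}{n!}=\|V\|^2c_K^2\,\braket{r,\lambda|r,\lambda}\,.
\end{align*}
This is finite by the norm formula Eq.~\eqref{eq: normalization-squeezed-coherent-state}, since $|r|<1/2$. If one prefers not to rely on that formula, the same conclusion follows from a direct estimate. Bound $h_n$ using Cauchy estimates for the entire function $e^{rz^2+\lambda z}$ on a circle of radius $R_n=\sqrt{n/(2r')}$ with $r<r'<1/2$: this gives $h_n/n!\lesssim e^{n/2}(2r')^{n/2}n^{-n/2}e^{O(\sqrt n)}$. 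By Stirling, $h_n^2/n!\lesssim (2r')^n e^{O(\sqrt n)}$, which is summable because $2r'<1$.

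The main obstacle is precisely this sharp growth control. A crude bound such as $\sum_m\le (\text{number of terms})\times(\text{max term})$ loses too much unless one keeps track of the optimal $m\approx n/2$, where the $r^m/m!$ factor balances against $\sqrt{n!}$. That is why I would use the generating-function/Gram identification, which packages the estimate exactly, rather than term-by-term bounds. The conditions $\rho(K)<1/2$ (rather than $\|K\|<1/2$) are handled entirely by the Gelfand step, at the cost of the constant $c_K$.
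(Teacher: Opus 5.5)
Your proof is correct and is essentially the paper's argument. Both use Gelfand's formula to get $\lVert K^m\rVert\le c_K r^m$, dominate $\lVert A^n\rVert$ by the scalar Taylor coefficients of $e^{rz^2+\lambda z}$, and show $\sum_n n!\,c_n^2<\infty$ because it is the squared norm of that Gaussian; the paper evaluates this as a Segal--Bargmann integral, while you read it off as $\braket{r,\lambda|r,\lambda}$, which is the same quantity under the Bargmann isometry. Your Cauchy-estimate fallback is also valid, and the only nit is that $\lVert \bar X\otimes X\rVert=\lVert X\rVert^2$ holds for the spectral or Frobenius norm rather than for every submultiplicative norm, which is harmless since all norms on $M_{D^2}(\C)$ are equivalent.
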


\begin{proof}
    Let $\lVert K\rVert_2=\sqrt{\lambda_{\max}(K^\dagger K)}$ denote the spectral norm of $K$. Choose $k$ such that
    \begin{align}
        \rho(K)<k<\frac{1}{2}\,.
    \end{align}
    Using Gelfand's formula for spectral radius
    \begin{align}
        \rho(K) = \lim_{n\to\infty}||K^n||^{1/n}
    \end{align}
    where $||\cdot||$ is any matrix norm, there exists $n_0\in \N$ such that 
    \begin{align}
        ||K^n||_2 \leq k^n\qquad \forall n\geq n_0\,.
    \end{align}
    Consequently, there exists some constant $\mathcal{C}_{K,k}>0$ such that 
    \begin{align}
        ||K^n||_2\leq \mathcal{C}_{K,k} k^n\qquad \forall n\in \mathbb{N}_0\,.
        \label{eq: bound-powers-K}
    \end{align}
    Writing
    \begin{align}
        A^n
        &=
        \sqrt{n!}\,VC_n\,,
        \quad
        C_n
        =
        \sum_{m=0}^{\lfloor n/2\rfloor}
        \frac{K^mL^{n-2m}}{m!(n-2m)!}\,,
    \end{align}
    and setting $l\coloneqq\lVert L\rVert_2$, sub-multiplicativity of the spectral norm gives
    \begin{align}
        \lVert {C}_n\rVert_2
        &\leq
        \sum_{m=0}^{\lfloor n/2\rfloor}
        \frac{
            \lVert K^m\rVert_2
            \lVert L^{n-2m}\rVert_2
        }{
            m!(n-2m)!
        }
        \leq 
        \mathcal{C}_{K,k}c_n\,,
    \end{align}
    where
    \begin{align}
        c_n
        \coloneqq
        \sum_{m=0}^{\lfloor n/2\rfloor}
        \frac{
            k^m l^{n-2m}
        }{
            m!(n-2m)!
        }\,.
        \label{eq: c_n}
    \end{align}

    Next, we use some basic concepts in Segal-Bargmann space \cite{Hall2013quantum}. The Segal-Bargmann space $\mathcal{H}L^2(\C^n,\mu)$ is a Hilbert space with inner product
    \begin{align}
        \braket{f,g}_\mu &= \frac{1}{\pi^n} \int_{\C^n} \!\dd^{2n}\bm{z}\,e^{-|\bm{z}|^2} \,\bar{f}(\bm{z})g(\bm{z})
    \end{align}
    For our purposes we only need $n=1$. Let
    \begin{align}
        f(z)\coloneqq e^{kz^2+lz}\qquad k,l\geq 0\,.
    \end{align}
    Then its squared-norm is given by
    \begin{align}
        ||f||_\mu^2 &= \frac{1}{\pi} \int_{\C} \!\dd^2z\,e^{-|z|^2}e^{k\bar{z}^2+l\bar{z}}e^{kz^2+lz}\notag\\
        &=  \frac{1}{\pi} \int_{\C} \!\dd^2z\,e^{-|z|^2}e^{k(z^2+\bar{z}^2)+2l \Re(z)}\notag\\
        &= \frac{1}{\pi}\int_{\R^2}\dd x\,\dd y\, e^{-x^2-y^2} e^{2 x (k x+l)-2 k y^2}\notag\\ 
        &= \frac{1}{\sqrt{1-4 k^2}} \exp\rr{\frac{l^2}{1-2 k}}\,,\quad k< \frac{1}{2}\,.
    \end{align}
    The last equality is valid for $k<1/2$. 

    Since $f(z)$ is holomorphic on $\C$, it can be written as a power series
    \begin{align}
        f(z) = \sum_{n=0}^\infty \sqrt{n!}c_n e_n(z)\,,\quad  
        e_n(z)\coloneqq \frac{z^n}{\sqrt{n!}}
    \end{align}
    where $c_n$ is given in Eq.~\eqref{eq: c_n} and $e_n(z)$ forms an orthonormal basis of the Segal--Bargmann space. Then for $k<1/2$ we have
    \begin{align}
        \sum_{n=0}^{\infty}
        n!|c_n|^2 = \lVert f\rVert_\mu^2
        <\infty\,.
    \end{align}
    Finally, the spectral norm satisfies
    \begin{align}
        \lVert\bar X\otimes Y\rVert_2
        =
        \lVert X\rVert_2\lVert Y\rVert_2\,.
    \end{align}
    Therefore
    \begin{align}
        \sum_{n=0}^{\infty}
        \lVert\bar{A^n}\otimes A^n\rVert_2
        &=
        \sum_{n=0}^{\infty}
        \lVert A^n\rVert_2^2
        \leq
        \mathcal{C}_{K,k}^2
        \lVert V\rVert_2^2
        \sum_{n=0}^{\infty}
        n!c_n^2
        <\infty
    \end{align}
    and the series defining the transfer matrix $E$ converges absolutely.
\end{proof}

\section{Bosonic transfer matrix calculus}
\label{appendix: derivation-transfer}

In this section we detail some derivations involved in the computation of transfer matrices in the main text. For convenience, we first recall some simple facts about bosonic ladder operators.
\begin{lemma}
    \label{lemma: CCR-calculus}
    Let $f$ be any analytic function. Then for any constant $\alpha\in \C$, we have
    \begin{align}
        e^{\alpha a}f(a^\dagger) = f(a^\dagger + \alpha) e^{\alpha a}\,.
    \end{align}
\end{lemma}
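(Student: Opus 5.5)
The identity $e^{\alpha a}f(a^\dagger)=f(a^\dagger+\alpha)e^{\alpha a}$ is a conjugation statement. I will first establish it for the generator $a^\dagger$, then extend to polynomials by multiplicativity, and finally to analytic $f$ by a power-series limit.

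\textbf{Step 1: conjugation of $a^\dagger$.} From the CCR, $[a,a^\dagger]=\openone$, so $[\alpha a,a^\dagger]=\alpha\openone$. All higher nested commutators vanish. The Hadamard (BCH) expansion
\[
e^{X}Ye^{-X}=Y+[X,Y]+\tfrac{1}{2!}[X,[X,Y]]+\cdots
\]
with $X=\alpha a$ and $Y=a^\dagger$ therefore truncates after the first commutator, giving
\[
e^{\alpha a}a^\dagger e^{-\alpha a}=a^\dagger+\alpha .
\]
To avoid convergence issues with the unbounded operator $e^{\alpha a}$, I will justify this identity directly instead. Define $g(t)=e^{t\alpha a}a^\dagger e^{-t\alpha a}$. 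Then $g'(t)=e^{t\alpha a}[\alpha a,a^\dagger]e^{-t\alpha a}=\alpha$, so $g(t)=a^\dagger+t\alpha$. Alternatively, one can check the intertwining $e^{\alpha a}a^\dagger=(a^\dagger+\alpha)e^{\alpha a}$ on Fock vectors using $[a^m,a^\dagger]=ma^{m-1}$ termwise in the series $\sum_m \alpha^m a^m/m!$. This finite-sum check on each Fock state is what I will actually use.

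\textbf{Step 2: polynomials.} Conjugation is an algebra homomorphism, so
\[
e^{\alpha a}(a^\dagger)^n e^{-\alpha a}=\bigl(e^{\alpha a}a^\dagger e^{-\alpha a}\bigr)^n=(a^\dagger+\alpha)^n .
\]
In intertwining form this reads $e^{\alpha a}(a^\dagger)^n=(a^\dagger+\alpha)^ne^{\alpha a}$, which follows by induction on $n$ from Step 1. By linearity the identity then holds for every polynomial $f$.

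\textbf{Step 3: analytic $f$.} Write $f(z)=\sum_n f_nz^n$ with infinite radius of convergence, or at least a radius sufficient for the vectors the identity will be applied to. Apply both sides to a vector of the form $\ket{\phi}$ in the common dense domain used in the paper, namely finite linear combinations of Fock states, or vectors such as $e^{L\otimes a^\dagger}\ket{0}$. On such vectors $e^{\alpha a}$ acts as a finite sum on each Fock component. Summing the polynomial identity term by term then gives the result.

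The \textbf{main obstacle} is this interchange of limits: it must be shown that $\sum_n f_n e^{\alpha a}(a^\dagger)^n\ket{\phi}$ converges and equals $e^{\alpha a}f(a^\dagger)\ket{\phi}$. I would handle it by working in the Segal--Bargmann representation already used in the paper. There $a^\dagger$ acts as multiplication by $z$ and $a$ acts as $\partial_z$, so $e^{\alpha a}$ is the translation $g(z)\mapsto g(z+\alpha)$ on entire functions. The identity then becomes the trivial statement
\[
f(z+\alpha)g(z+\alpha)=\bigl[f(\cdot+\alpha)\bigr](z)\,g(z+\alpha),
\]
valid for entire $f$ and $g$, with domain questions reduced to the growth conditions already implicit in the paper's use of Gaussian-type vectors.
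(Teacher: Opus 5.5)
Your proposal is correct and follows essentially the paper's route: conjugate $a^\dagger$ by $e^{\alpha a}$ via the BCH formula to get $a^\dagger+\alpha$, then extend to analytic $f$ through its power series, with your Step 2 spelling out the multiplicativity that the paper uses implicitly in $\sum_m c_m(e^{\alpha a}a^\dagger e^{-\alpha a})^m$. Your Segal--Bargmann translation argument for the interchange of limits is extra rigor that the paper omits, since the paper simply treats $e^{\alpha a}$ as invertible and sums termwise. One small correction: on vectors such as $e^{L\otimes a^\dagger}\ket{0}$, each Fock component of $e^{\alpha a}\ket{\phi}$ is an infinite sum, not a finite one, which is exactly why the Bargmann picture is the right tool there.
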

\begin{proof}
    We use the fact that $e^{\alpha a}$ is invertible and using the Baker-Campbell-Hausdorff (BCH) formula we get
    \begin{align*}
        e^{\alpha a}a^\dagger e^{-\alpha a} &= a^\dagger+ \alpha[a, a^\dagger] = a^\dagger+\alpha
    \end{align*}
    where we have used the CCR $[a,a^\dagger]=\openone$. Since $f$ is assumed to be analytic, we can use its power series expansion to obtain
    \begin{align*}
        e^{\alpha a}f(a^\dagger) e^{-\alpha a} 
        &= \sum_{m=0}^\infty c_m(e^{\alpha a}a^\dagger e^{-\alpha a})^m
        = f(a^\dagger+\alpha)
    \end{align*}
    and the result follows.
\end{proof}

\begin{corollary}
    \label{cor: push-through-operators}
    Let $f$ be an analytic matrix-valued function defined through its power series so that
    \begin{align}
        f(\openone\otimes a^\dagger) &\coloneqq \sum_{m=0}^\infty c_m\otimes (a^\dagger)^m 
    \end{align}
    where $c_m\in M_n(\C)$. If $[A,c_m]=0$ for all $m$ then
    \begin{align}
        e^{A\otimes a}f(\openone \otimes a^\dagger) = f(\openone \otimes a^\dagger + A\otimes \openone)e^{A\otimes a}\,.
    \end{align}
    
\end{corollary}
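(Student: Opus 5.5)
The plan mirrors the proof of Lemma~\ref{lemma: CCR-calculus}, lifted to the matrix-valued setting. First, establish the conjugation identity
\begin{align}
    e^{A\otimes a}\,(\openone\otimes a^\dagger)\,e^{-A\otimes a}
    =
    \openone\otimes a^\dagger + A\otimes \openone\,.
\end{align}
This follows from the BCH (Hadamard) series: the commutator is $[A\otimes a,\openone\otimes a^\dagger]=A\otimes[a,a^\dagger]=A\otimes\openone$. The next commutator vanishes, since $A\otimes\openone$ commutes with $A\otimes a$. Hence the series truncates after the linear term.

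Next, I extend this to powers and to the matrix coefficients. Conjugation is multiplicative, so
\begin{align}
    e^{A\otimes a}\,(\openone\otimes (a^\dagger)^m)\,e^{-A\otimes a}=(\openone\otimes a^\dagger+A\otimes\openone)^m .
\end{align}
For the coefficients, the hypothesis $[A,c_m]=0$ is what is needed. It gives $[A\otimes a,c_m\otimes\openone]=[A,c_m]\otimes a=0$, so $e^{A\otimes a}$ commutes with $c_m\otimes\openone$. Writing $c_m\otimes(a^\dagger)^m=(c_m\otimes\openone)(\openone\otimes a^\dagger)^m$, each term of $f$ therefore transforms as
\begin{align}
    c_m\otimes(a^\dagger)^m\;\mapsto\;(c_m\otimes\openone)(\openone\otimes a^\dagger+A\otimes\openone)^m .
\end{align}
This is the $m$th term of $f(\openone\otimes a^\dagger+A\otimes\openone)$, with the power series interpreted with $c_m$ placed on the left, as in the definition of $f$. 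The interpretation is unambiguous because $\openone\otimes a^\dagger$ and $A\otimes\openone$ commute with each other. Summing over $m$ and multiplying on the right by $e^{A\otimes a}$ then gives the claim.

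The main obstacle is analytic rather than algebraic: the operators are unbounded, so termwise manipulation of the power series needs justification. I would argue on the dense domain $\C^n\otimes\Span\{e^{\beta a^\dagger}\ket{0}\}$ (or finite-particle vectors, as appropriate). On this domain, $e^{A\otimes a}$ acts as a finite-dimensional matrix exponential on coherent vectors, since $a$ has them as eigenvectors. The series for $f$ is then understood, as elsewhere in the paper, on a suitable common dense domain where it converges. Consistent with the paper's convention that identities among unbounded operators hold on such a domain, I would state the result as a formal/domain identity rather than prove operator-norm convergence.
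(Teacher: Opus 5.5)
Your proposal is correct and follows essentially the same route as the paper: compute $e^{A\otimes a}(\openone\otimes a^\dagger)e^{-A\otimes a}=\openone\otimes a^\dagger+A\otimes\openone$ via the terminating BCH series, raise it to the $m$th power, pull $c_m\otimes\openone$ through the exponential using $[A,c_m]=0$, and resum term by term. Your explicit check that $[A\otimes a,\,c_m\otimes\openone]=[A,c_m]\otimes a=0$ and your remark on domains spell out what the paper's proof uses implicitly; the argument itself is the same.
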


\begin{proof}
    We have
    \begin{align*}
        e^{A\otimes a}(\openone \otimes a^\dagger)e^{-A\otimes a} = \openone \otimes a^\dagger+ A\otimes \openone\,,
    \end{align*}
    so that the same steps as before leads to
    \begin{align*}
         e^{A\otimes a}f(\openone\otimes a^\dagger)e^{-A\otimes a} 
         &= \sum_{m=0}^\infty (c_m\otimes\openone)e^{A\otimes  a}(\openone\otimes a^\dagger)^m e^{-A\otimes a}\notag\\
         &= \sum_{m=0}^\infty (c_m\otimes \openone)(\openone \otimes a^\dagger+A\otimes \openone)^m\notag\\
         &\equiv f(\openone \otimes a^\dagger + A\otimes \openone)
    \end{align*}
    and the result follows.
\end{proof}
\noindent The corollary allows us to use a shorthand 
\begin{align*}
     e^{A\otimes a}f(a^\dagger) = f(a^\dagger + A)e^{A\otimes a}
\end{align*}
where no confusion should arise with regards to the tensor product factor.

Consider pairwise commuting matrices $P,Q,R,S\in M_n(\C)$ and define
\begin{align}
    \mathcal{E}
    \coloneqq
    \braket{
        0|
        e^{Pa}e^{Qa^2}
        e^{R(a^\dagger)^2}e^{Sa^\dagger}
        |0
    }
    \in M_n(\C)\,,
\end{align}
where we use the shorthand $Pa\equiv P\otimes a$, and similarly for
the other terms. We assume that the matrix-valued Gaussian integral
below converges. A sufficient condition is that there exists a
submultiplicative matrix norm for which
\begin{align}
    \lVert Q\rVert+\lVert R\rVert<1\,.
    \label{eq: Gaussian-sufficient-convergence}
\end{align}
Indeed, using $\lVert e^X\rVert\leq e^{\lVert X\rVert}$, the norm of
the integrand is bounded by
\begin{align*}
    \left|\left|e^{-|\alpha|^2}
    e^{
        Q\alpha^2
        +R\bar\alpha^2
        +(S+2RP)\bar\alpha
    }\right|\right|\leq e^{
        -\rr{1-\lVert Q\rVert-\lVert R\rVert}|\alpha|^2
        +C|\alpha|
    }
\end{align*}
for some constant $C>0$, which is integrable whenever
$\lVert Q\rVert+\lVert R\rVert<1$.

Using Corollary~\ref{cor: push-through-operators}, we obtain
\begin{align}
    \mathcal{E}
    &=
    \braket{
        0|
        e^{Qa^2}e^{Pa}
        e^{R(a^\dagger)^2}e^{Sa^\dagger}
        |0
    }
    \notag\\
    &=
    \braket{
        0|
        e^{Qa^2}
        e^{R(a^\dagger+P)^2}
        e^{S(a^\dagger+P)}
        |0
    }
    \notag\\
    &=
    e^{RP^2+SP}
    \braket{
        0|
        e^{Qa^2}
        e^{R(a^\dagger)^2}
        e^{(S+2RP)a^\dagger}
        |0
    }\,.
    \label{eq: Gaussian-shift-P}
\end{align}
Using the resolution of the identity in terms of normalized coherent
states,
\begin{align}
    \openone
    =
    \int_{\C}\frac{\dd^2\alpha}{\pi}
    \ketbra{\alpha}{\alpha}\,,
\end{align}
gives
\begin{align}
    &\braket{
        0|
        e^{Qa^2}
        e^{R(a^\dagger)^2}
        e^{(S+2RP)a^\dagger}
        |0
    }
    \notag\\
    &\qquad =
    \int_{\C}\frac{\dd^2\alpha}{\pi}
    e^{-|\alpha|^2}
    e^{
        Q\alpha^2
        +R\bar\alpha^2
        +(S+2RP)\bar\alpha
    }
    \notag\\
    &\qquad =
    \Delta^{\frac{1}{2}}
    e^{\Delta Q(S+2RP)^2}\,,
    \qquad
    \Delta\coloneqq(\openone-4QR)^{-1}\,.
\end{align}
Here $\Delta^{1/2}$ is the principal square root of $\Delta$. Therefore,
\begin{align}
    \mathcal{E} \equiv \mathcal{E}(P,Q,R,S) 
    &=
    \Delta^{\frac{1}{2}}
    e^{RP^2+SP+\Delta Q(S+2RP)^2}
    \notag\\
    &=
    \Delta^{\frac{1}{2}}
    e^{\Delta(RP^2+SP+QS^2)}\,.
    \label{eq: Gaussian-matrix}
\end{align}

To insert anti-normal-ordered monomials, we introduce two scalar
sources $s,t\in\C$:
\begin{align}
    \mathcal{E}(s,t)
    &\coloneqq
    \mathcal{E}
    \left(
        P+s\openone,
        Q,
        R,
        S+t\openone
    \right)
    \notag\\
    &=
    \braket{
        0|
        e^{(P+s\openone)a}
        e^{Qa^2}
        e^{R(a^\dagger)^2}
        e^{(S+t\openone)a^\dagger}
        |0
    }.
    \label{eq: source-dependent-Gaussian-kernel}
\end{align}
Since the source terms commute with the other annihilation or creation
operators in their respective exponentials, differentiation gives
\begin{align}
    \left.
    \partial_s^m\partial_t^n
    \mathcal{E}(s,t)
    \right|_{s,t=0}
    &=
    \braket{
        0|
        e^{Pa}e^{Qa^2}
        a^m(a^\dagger)^n
        e^{R(a^\dagger)^2}e^{Sa^\dagger}
        |0
    }\,.
    \label{eq: source-derivatives-Gaussian-kernel}
\end{align}

We can now connect these to the expression for transfer matrices \eqref{eq: transfer-matrix-general}. We claim that if $K,L$ are chosen to commute, then the transfer matrices can be computed explicitly in closed form as functions of $V,K,L$ as given in Proposition~\ref{prop: transfer-matrix}.
\begin{proof}[Proof of Proposition~\ref{prop: transfer-matrix}]
    We start from the MPS tensor
    \begin{align*}
        \sum_{n=0}^{\infty}\bar{A^n}\otimes A^n
        =
        (\bar V\otimes V)
        \sum_{n=0}^{\infty}
        n!\,\bar C_n\otimes C_n.
    \end{align*}
    where 
    \begin{align*}
        C_n = \sum_{m=0}^{\lfloor n/2\rfloor} \frac{K^mL^{n-2m}}{m!(n-2m)!}
    \end{align*}
    Observe that 
    \begin{align*}
        e^{K\otimes (a^\dagger)^2}e^{L\otimes a^\dagger}\ket{0} 
        &= \sum_{m,n=0}^\infty \frac{K^mL^n}{m!n!}\sqrt{(2m+n)!}\ket{2m+n} \notag\\
        &\equiv \sum_{n=0}^\infty\sqrt{n!}\sum_{m=0}^{\lfloor n/2\rfloor} \frac{K^mL^{n-2m}}{m!(n-2m)!}\ket{n} \notag\\
        &\equiv \sum_{n=0}^\infty \sqrt{n!}C_n\otimes \ket{n}\,.
    \end{align*}
    Consequently, we identify the MPS tensor as
    \begin{align*}
        A = Ve^{K\otimes (a^\dagger)^2}e^{L\otimes a^\dagger}\ket{0} \,.
    \end{align*}

    Define a `star' product as a shorthand for  
    \begin{align}
        (A\otimes B)\star(C\otimes D) \coloneqq (A\otimes C)\otimes BD\,,
    \end{align}
    i.e., tensor product on first tensor factor and standard operator product on the second tensor factor: this occurs, for instance, when one multiplies two matrix-product operators. The transfer matrix can then be written as
    \begin{align*}
        E &=  \bar V \otimes{V}\bra{0}e^{\bar L\otimes a}e^{\bar K\otimes a^2}\star e^{{K}\otimes (a^\dagger)^2}e^{{L}\otimes a^\dagger}\ket{0} \notag\\
        &=  \bar V \otimes{V} \mathcal{E}  \in M_{D^2}(\C)\,,
    \end{align*}
    where the matrices $P,Q,R,S$ in $\mathcal{E}$ are chosen to be
    \begin{equation}
        \begin{aligned}
            P &=  \bar{L}\otimes \openone\,,\quad Q = \bar K\otimes \openone\,,\\
            R &= {\openone}\otimes {K}\,,\quad S =\openone \otimes {L}\,.
        \end{aligned}
        \label{eq: PQRS}
    \end{equation}
    For these choices, $P,Q,R,S$ commute pairwise and Eq.~\eqref{eq: Gaussian-matrix} gives the required closed-form expression.
\end{proof}
For the BMPS specialization~\eqref{eq: PQRS}, convergence follows
directly from Appendix~\ref{appendix: ansatz-absolute-convergence}.
Indeed, when $\rho(K)<1/2$, the transfer-matrix series converges
absolutely, and the coherent-state resolution of the identity used above
is justified entrywise by the Cauchy--Schwarz inequality. Moreover,
\begin{align}
    \rho(4QR) 
    = 4\rho(\bar K\otimes K) 
    = 4\rho(K)^2 < 1\,,
\end{align}
so that $\Delta=(\openone-4QR)^{-1}$ and its principal square root are
well-defined and Eq.~\eqref{eq: Gaussian-matrix} is valid.

\begin{proof}[Proof of Proposition~\ref{prop: differentiate-E}]
By Eq.~\eqref{eq: source-dependent-Gaussian-kernel}, the scalar
sources enter only through the first and fourth exponentials. Since
$a$ commutes with $e^{\bar L a}$ and $e^{\bar K a^2}$, while
$a^\dagger$ commutes with $e^{K(a^\dagger)^2}$ and
$e^{L a^\dagger}$, differentiation gives
\begin{align}
    &
    \partial_s^m\partial_t^n E(s,t)
    \bigr|_{s,t=0}\notag\\
    &=
    (\bar V\otimes V)
    \braket{
        0|
        e^{\bar L a}
        e^{\bar K a^2}
        a^m(a^\dagger)^n
        e^{K(a^\dagger)^2}
        e^{L a^\dagger}
        |0
    }
    \notag\\
    &=
    E_{a^m(a^\dagger)^n}.
\end{align}
The derivatives are justified by the absolute convergence of the
source-dependent expression in a neighbourhood of $s=t=0$. The
sources modify only the linear terms and hence do not change the
condition $\rho(K)<1/2$.
\end{proof}

The transfer matrix can also be computed efficiently for certain non-polynomial functions of $a,a^\dagger$, notably when the local operator $O$ is a Gaussian unitary, which is the statement of Proposition~\ref{prop: transfer-matrix-exponential-operators}.

\begin{proof}[Proof of Proposition~\ref{prop: transfer-matrix-exponential-operators}]
Write $\zeta=re^{i\phi}$ and define
\begin{align}
    \tau\coloneqq e^{i\phi}\tanh r,
    \qquad
    c\coloneqq\cosh r.
\end{align}
The displacement and squeezing operators admit the anti-normal-ordered decompositions
\begin{subequations}
\begin{align}
    D(\alpha)
    &=
    e^{|\alpha|^2/2}
    e^{-\bar\alpha a}
    e^{\alpha a^\dagger}\,,
    \label{eq: displacement-antinormal}
    \\
    S(\zeta)
    &=
    \sqrt{c}\,
    e^{\bar\tau a^2/2}
    c^n
    e^{-\tau(a^\dagger)^2/2}\,,
    \qquad
    n\coloneqq a^\dagger a\,.
    \label{eq: squeezing-antinormal}
\end{align}
\end{subequations}
We will use the identities
\begin{subequations}
\begin{align}
    e^{\alpha a^\dagger}f(a)
    &=
    f(a-\alpha)e^{\alpha a^\dagger}\,,
    \label{eq: creation-push-annihilation}
    \\
    e^{\alpha a^\dagger}c^n
    &=
    c^n e^{\alpha a^\dagger/c}\,,
    \label{eq: creation-push-number}
    \\
    f(a^\dagger)e^{-i\theta n}
    &=
    e^{-i\theta n}f(e^{i\theta}a^\dagger)\,,
    \label{eq: creation-push-rotation}
\end{align}
\end{subequations}
which follow from
\begin{align}
    e^{\alpha a^\dagger}ae^{-\alpha a^\dagger}
    &=
    a-\alpha\,,
    &
    c^na^\dagger c^{-n}
    &=
    ca^\dagger\,.
\end{align}

Using Eqs.~\eqref{eq: displacement-antinormal}--\eqref{eq: creation-push-rotation}, the Gaussian unitary
\begin{align}
    U_{\mathsf G}
    =
    D(\alpha)S(\zeta)e^{-i\theta n}
\end{align}
can be written as
\begin{align}
    U_{\mathsf G}
    &=
    \sqrt{c}\,
    e^{\frac{1}{2}
        \rr{
            |\alpha|^2+\bar\tau\alpha^2
        }}
    e^{
        \frac{1}{2}\bar\tau a^2
        -
        (\bar\alpha+\bar\tau\alpha)a
    }
    \rr{ce^{-i\theta}}^n
    \notag\\
    &\quad\times
    e^{
        -\frac{1}{2}\tau e^{2i\theta}(a^\dagger)^2
        +
        \frac{\alpha e^{i\theta}}{c}a^\dagger
    }.
    \label{eq: general-Gaussian-antinormal}
\end{align}

Recall the matrix-valued function $\mathcal{E}$ introduced in
Eq.~\eqref{eq: Gaussian-matrix},
\begin{align}
    \mathcal{E}(P,Q,R,S)
    \coloneqq
    \braket{
        0|
        e^{Pa}e^{Qa^2}
        e^{R(a^\dagger)^2}e^{Sa^\dagger}
        |0
    },
    \label{eq: Gaussian-kernel-shorthand}
\end{align}
where $P,Q,R,S$ commute pairwise. Observe that if we insert $U_{\mathsf G}$ in the middle, i.e.,
\begin{align}
    \mathcal{E}_{U_{\mathsf G}}
    \coloneqq
    \braket{
        0|
        e^{Pa}e^{Qa^2}
        U_{\mathsf G}
        e^{R(a^\dagger)^2}e^{Sa^\dagger}
        |0
    }\,,
\end{align}
the first exponential in Eq.~\eqref{eq: general-Gaussian-antinormal} can be absorbed by the replacement
\begin{align}
    P
    &\mapsto
    P'
    \coloneqq
    P-(\bar\alpha+\bar\tau\alpha)\openone,
    \notag\\
    Q
    &\mapsto
    Q'
    \coloneqq
    Q+\frac{1}{2}\bar\tau\openone.
    \label{eq: Gaussian-left-substitution}
\end{align}
Next, set $q\coloneqq ce^{-i\theta}$. Then for every analytic function $f$ of $a^\dagger$,
\begin{align}
    q^nf(a^\dagger)\ket{0}
    =
    f(qa^\dagger)\ket{0},
    \label{eq: number-scaling-vacuum}
\end{align}
since $q^na^\dagger q^{-n}=qa^\dagger$ and
$q^n\ket{0}=\ket{0}$. Consequently, the remaining factors are absorbed by the replacement
\begin{align}
    R
    &\mapsto
    R'
    \coloneqq
    c^2\left(
        e^{-2i\theta}R
        -
        \frac{1}{2}\tau\openone
    \right),
    \notag\\
    S
    &\mapsto
    S'
    \coloneqq
    ce^{-i\theta}S+\alpha\openone.
    \label{eq: Gaussian-right-substitution}
\end{align}
We therefore obtain
\begin{align}
    \mathcal{E}_{U_{\mathsf G}}
    &=
    \sqrt{c}\,
    e^{\frac{1}{2}
        \rr{
            |\alpha|^2+\bar\tau\alpha^2
        }}
    \mathcal{E}(P',Q',R',S')\,.
    \label{eq: Gaussian-unitary-kernel}
\end{align}
Substituting $P,Q,R,S$ with Eq.~\eqref{eq: PQRS} into Eqs.~\eqref{eq: Gaussian-left-substitution} and \eqref{eq: Gaussian-right-substitution}, we get
\begin{align}
    \bar L\otimes\openone
    &\mapsto
    \left[
        \bar L
        -
        (\bar\alpha+\bar\tau\alpha)\openone
    \right]\otimes\openone,
    \notag\\
    \bar K\otimes\openone
    &\mapsto
    \left[
        \bar K+\frac{1}{2}\bar\tau\openone
    \right]\otimes\openone,
    \notag\\
    \openone\otimes K
    &\mapsto
    \openone\otimes
    c^2\left(
        e^{-2i\theta}K
        -
        \frac{1}{2}\tau\openone
    \right),
    \notag\\
    \openone\otimes L
    &\mapsto
    \openone\otimes
    \left(
        ce^{-i\theta}L+\alpha\openone
    \right).
\end{align}
Substituting these expressions into Eq.~\eqref{eq: transfer-matrix-bosons} gives the result.
\end{proof}

\section{Equivalent sets of BMPS}
	\label{appendix: set-inclusion}

    Here we quote the following statement from Proposition~\ref{prop: CV-MPS-equivalence} for convenience: we want to show that
    \begin{enumerate}[leftmargin=*,label={(\roman*)}]
        \item Every $\ket{\Psi_N^{\exp}(D)}$ can be expressed as $\ket{\Psi_N^{\mathsf{PAG}}(\mathsf{d},D)}$ with $\mathsf{d} \leq 2D^2-D$. 
			
        \item Conversely, every $\ket{\Psi_N^{\mathsf{PAG}}(\mathsf{d},D)}$ can be expressed as $\ket{\Psi_N^{\exp}(D\mathsf{d})}$ with commuting generators.
    \end{enumerate}
	
	\begin{proof}[Proof of Proposition~\ref{prop: CV-MPS-equivalence}]
		(i) We first show that the exponential ansatz is an MPS of photon-added
		Gaussian states. Let
		\begin{align}
			K
			&=
			\sum_{r=1}^{r_{\max}}
			\bigl(\kappa_r P_r+\mathcal R_r\bigr),
			&
			L
			&=
			\sum_{s=1}^{s_{\max}}
			\bigl(\ell_s Q_s+\mathcal N_s\bigr)
		\end{align}
		be the Jordan decompositions of $K$ and $L$, where $P_r$ and $Q_s$
		are the respective spectral projectors and
		\begin{align}
			\mathcal R_r
			&=
			(K-\kappa_r\openone)P_r,
			&
			\mathcal N_s
			&=
			(L-\ell_s\openone)Q_s
		\end{align}
		are their nilpotent parts. Then 
		\begin{equation}
			\begin{aligned}
				e^{K\otimes (a^\dagger)^2}
				&=
				\sum_{r=1}^{r_{\max}}
				\sum_{p=0}^{\rank{P_r}-1}
				\frac{\mathcal R_r^pP_r}{p!}\,
				\otimes (a^\dagger)^{2p}e^{\kappa_r (a^\dagger)^2}\,,
				\\
				e^{L\otimes a^\dagger}
				&=
				\sum_{s=1}^{s_{\max}}
				\sum_{q=0}^{\rank{Q_s}-1}
				\frac{\mathcal N_s^qQ_s}{q!}\,
				\otimes (a^\dagger)^q e^{\ell_s a^\dagger}\,.
			\end{aligned}
		\end{equation}
		Hence we can write
		\begin{align}
			V e^{K(a^\dagger)^2}e^{La^\dagger}\ket{0}
			&=
			\sum_{n,r,s}
			\mathsf{A}^{n,r,s}
			\ket{n,\kappa_r,\ell_s}\,,
			\label{eq:exp-local-photon-expansion}
		\end{align}
		where $\ket{n,\kappa_r,\ell_s}$ is the photon-added Gaussian state \eqref{eq: photon-added-Gaussian-states} and 
		\begin{align}
			\mathsf{A}^{n,r,s} &\coloneqq
			V\sum_{p,q}
			\frac{
				\mathcal R_r^pP_r
				\mathcal N_s^qQ_s
			}{p!q!} \in M_D(\C)
		\end{align}
		and the summation is over $0\leq p\leq \rank(P_r)-1, 0\leq q\leq \rank(Q_s)-1$, and $2p+q=n$. For each pair $(r,s)$, at most $(2\rank{P_r}+\rank{Q_s}-2)$ local states occur.
		Therefore, using the fact that $\sum_r \rank(P_r) = \sum_s \rank(Q_s) = D$ and that $1\leq r_{\max},s_{\max}\leq D$, we obtain
		\begin{align*}
			\mathsf d
			&\leq
			\sum_{r=1}^{r_{\max}}
			\sum_{s=1}^{s_{\max}}
			\bigl(2\rank{P_r}+\rank{Q_s}-2\bigr)
			\notag\\
            &=
			2Ds_{\max}+Dr_{\max}-2r_{\max}s_{\max}
			\notag\\
            &\leq 2D^2-D\,.
		\end{align*}
		Substituting Eq.~\eqref{eq:exp-local-photon-expansion} at every site
		and contracting the virtual indices gives
		\begin{align}
			\ket{\Psi_N^{\exp}(D)}
			=
			\sum_{I_1,\ldots,I_N=1}^{\mathsf d}
			\Tr_D\rr{
				B\mathsf{A}^{I_1}\cdots\mathsf{A}^{I_N}
			}
			\ket{I_1\cdots I_N},
		\end{align}
		where $I=(n,r,s)$ and terms for which
		$\mathsf{A}^{n,r,s}=0$ may be omitted. Hence every
		$\ket{\Psi^{\exp}_N(D)}$ can be expressed as
		$\ket{\Psi^{\mathsf{PAG}}_N(\mathsf{d},D)}$ with
		$\mathsf d\leq 2D^2-D$. 
		
		(ii) Next, we show that the MPS of photon-added Gaussian states can be expressed in the form of a commuting-generator exponential ansatz. We are given $\ket{\Psi^{\mathsf{PAG}}_N(\mathsf{d},D)}$ with tensors $\mathsf{A}^I$, where $I=(q,\lambda)$ labels the photon-added Gaussian states $\ket{q,\kappa_\lambda,\ell_\lambda}$. First we define a $\mathsf{d}$-dimensional auxiliary space $\mathcal{H}_{\mathsf{aux}}$ given by
		\begin{align}
			\mathcal H_{\mathrm{aux}}
			=
			\bigoplus_\lambda
			\Span\{
			\ket{e_{0,\lambda}},\ldots,
			\ket{e_{\nu_\lambda,\lambda}}
			\}\cong \C^{\mathsf{d}} \,,
		\end{align}
		where the vectors $\{\ket{e_{q,\lambda}}\}$ form an orthonormal
		basis and we write $\ket{e_I}=\ket{e_{q,\lambda}}$. Then
		consider the weighted shift operator on $\mathcal{H}_{\mathsf{aux}}$
		\begin{align}
			\mathfrak N_\lambda
			=
			\sum_{q=1}^{\nu_\lambda}
			q\ketbra{e_{q,\lambda}}{e_{q-1,\lambda}}\,,\quad \mathfrak N_\lambda^q\ket{e_{0,\lambda}}
			=
			q!\ket{e_{q,\lambda}}\,.
		\end{align}
        This will account for the photon addition, which arises from the nilpotent part of the matrix in the exponential ansatz. 
        
		Now consider a virtual space $\C^D\otimes \mathcal H_{\mathrm{aux}}$ for the exponential ansatz $\ket{\Psi_{N}^{\exp}(D\mathsf{d})}$ and set
		\begin{equation}
			\begin{aligned}
				\mathsf{B}
				&= B\otimes\openone_{\mathsf d}\,,\\
				\mathsf{V}
				&= \sum_I \mathsf{A}^I\otimes \ketbra{\sigma}{e_I}\,, \qquad 
				\ket{\sigma}
				=\sum_\lambda\ket{e_{0,\lambda}}\,,\\
				\mathsf{K}
				&=\openone_D\otimes  \mathsf{K}_{\mathsf{aux}}\,,\quad  \mathsf{K}_{\mathsf{aux}}\coloneqq \bigoplus_\lambda
				\kappa_\lambda\openone_{\nu_\lambda+1}\,,\\
				\mathsf{L}
				&=\openone_D\otimes  \mathsf{L}_{\mathsf{aux}}\,,\quad  \mathsf{L}_{\mathsf{aux}}\coloneqq \bigoplus_\lambda
				\left(
				\ell_\lambda\openone_{\nu_\lambda+1}
				+\mathfrak N_\lambda
				\right)\,.
			\end{aligned}
            \label{eq: exponential-embedding}
		\end{equation}
        Observe that $[\mathsf{K},\mathsf{L}]=0$ because both operators are
		block diagonal in $\lambda$ and $\mathsf{K}$ is scalar on each
		block. Furthermore, for $I=(q,\lambda)$,
        \begin{align}
            \bra{e_I}
            e^{\mathsf{K}_{\mathsf{aux}}\otimes(a^\dagger)^2}
            e^{\mathsf{L}_{\mathsf{aux}}\otimes a^\dagger}
            \ket{\sigma}\ket{0}
            =
            \ket{I},
            \label{eq: auxiliary-embedding-identity}
        \end{align}
        Here $\ket I=\ket{q,\kappa_\lambda,\ell_\lambda}$. To apply this identity to the full ansatz, let
        \begin{align}
            G_n
            =
            e^{\mathsf{K}_{\mathsf{aux}}\otimes(a_n^\dagger)^2}
            e^{\mathsf{L}_{\mathsf{aux}}\otimes a_n^\dagger}\,.
        \end{align}
        To verify that indeed this works, we expand $\mathsf V$ given in Eq.~\eqref{eq: exponential-embedding}:
        \begin{align}
            &\ket{\Psi_N^{\exp}(D\mathsf d)}
            \notag\\
            &=
            \sum_{J_1,\ldots,J_N=1}^{\mathsf d}\!\!\!\!
            \Tr_D\rr{
                B\mathsf{A}^{J_1}\cdots\mathsf{A}^{J_N}
            }
            \Tr_{\mathsf d}\rr{
                \prod_{n=1}^N
                \ketbra{\sigma}{e_{J_n}}\mathsf G_n
            }
            \ket{0}^{\otimes N}
            \notag\\
            &=
            \sum_{J_1,\ldots,J_N=1}^{\mathsf d}\!\!\!\!
            \Tr_D\rr{
                B\mathsf{A}^{J_1}\cdots\mathsf{A}^{J_N}
            }
            \bigotimes_{n=1}^N
                \rr{\bra{e_{J_n}}\mathsf G_n\ket{\sigma}}\ket{0}
            \notag\\
            &=
            \sum_{J_1,\ldots,J_N=1}^{\mathsf d}\!\!\!\!
            \Tr_D\rr{
                B\mathsf{A}^{J_1}\cdots\mathsf{A}^{J_N}
            }
            \ket{J_1\cdots J_N}
            \notag\\
            &=
            \ket{\Psi_N^{\mathsf{PAG}}(\mathsf d,D)}.
        \end{align}
        The third follows from Eq.~\eqref{eq: auxiliary-embedding-identity}.
		
	\end{proof}

    \section{Parent Hamiltonians: general constructions}
    \label{appendix: parent-details}
    
    We give a single-mode construction for any finite family of
    photon-added squeezed coherent states. In the MPS application, the
    relevant single-mode vectors have the form
    \begin{align}
        \ket{\psi_\mu}
        =
        \sum_{j=1}^{n}
        \Tr_D\left[
            X_\mu\mathsf A^j
        \right]
        \ket{e_j}\,,
    \end{align}
    for suitable matrices $X_\mu\in M_D(\C)$. We consider the slightly more
    general setting where for all $j=1,2,\ldots,n$ we define
    \begin{align}
        \ket{e_j}
        &=
        f_j(a^\dagger)\ket{0}\,,
        \quad 
        f_j(z)
        \coloneqq
        u_j(z)e^{\kappa_jz^2+\ell_jz}\,,
        \label{eq: general-squeezed-basis}
    \end{align}
    where $u_j$ is a polynomial and $|\kappa_j|<1/2$. We assume that
    $\ket{e_1},\ldots,\ket{e_n}$ are linearly independent and define
    \begin{align}
        \mathcal V
        \coloneqq
        \Span_\C
        \left\{
            \ket{e_j}:j=1,\ldots,n
        \right\}.
    \end{align}
    
    Let
    \begin{align}
        \ket{\psi_\mu}
        =
        \sum_{j=1}^{n}
        C_{\mu j}\ket{e_j}\,,
        \qquad
        \mu=1,\ldots,m\,,
    \end{align}
    and suppose that the coefficient matrix
    $C\in M_{m,n}(\C)$ has rank $m$. The target space is
    \begin{align}
        \mathcal W
        \coloneqq
        \Span_\C
        \left\{
            \ket{\psi_\mu}:
            \mu=1,\ldots,m
        \right\}
        \subseteq
        \mathcal V\,.
    \end{align}
    Choose a matrix $R\in M_{n-m,n}(\C)$ of rank $n-m$ such that $RC^T=0$. Our task is to construct a parent Hamiltonian $h$ with $\ker h=\mathcal{W}$ (cf. Sec.~\ref{subsec: abstract-parent}). 
    
    To set things up, we first define the polynomials $p_{rj}$ by
    \begin{align}
        f_j^{(r)}(z)
        =
        p_{rj}(z)e^{\kappa_jz^2+\ell_jz}\,.
        \label{eq: squeezed-derivative-polynomials}
    \end{align}
    For the functions $f_j$ above, repeated use of the CCR gives
    \begin{align}
        a^r\ket{e_j}
        =
        p_{rj}(a^\dagger)
        e^{\kappa_j(a^\dagger)^2+\ell_j a^\dagger}
        \ket{0}\,.
        \label{eq: squeezed-derivative-action}
    \end{align}

    We first prove that we can construct the positive operator $h_0=Q^\dagger Q$ whose kernel is the ambient space $\mathcal{V}$. At a high level, the idea behind this is to view $Q\ket{e_j}=0$ as an ordinary differential equations (ODEs): 
    \begin{align}
        Q \ket{e_j} = 0\quad  \Longrightarrow \quad \hat{Q}f_j(z) = 0\,,
    \end{align}
    i.e., we are looking for a differential operator $\hat{Q}(z,\partial_z)$ such that it annihilates every linearly independent basis functions $f_j$. Then by replacing $z\mapsto a^\dagger$ and $\partial_z\mapsto a$, we obtain the desired operator $Q$. For this reason, standard techniques from ODE such as the Wronskian will be useful to construct the differential operator for which the basis functions $f_j$ span its solution space. 
    \begin{lemma}[Ambient space annihilator]
    \label{lemma: squeezed-Wronskian-annihilator}

    Define
    \begin{equation}
        \begin{aligned}
            \mathsf{P}(z)
            &\coloneqq
            \left[
                p_{rj}(z)
            \right]_{
                \substack{
                    r=0,\ldots,n-1\\
                    j=1,\ldots,n
                }
            }\,,\\ 
            \widetilde{\mathsf{P}}(z)
            & \coloneqq
            \left[
                p_{rj}(z)
            \right]_{
                \substack{
                    r=0,\ldots,n\\
                    j=1,\ldots,n
                }
            }\,.
        \label{eq: squeezed-Wronskian-polynomial}
        \end{aligned}
    \end{equation}
    and $\widetilde{\mathsf{P}}_{[r]}(z)$ is obtained by deleting
    the $r$-th row of $\tilde{P}(z)$. Then $\Delta(z)\coloneqq\det\mathsf{P}(z)$ is non-zero polynomial and the operator
    \begin{align}
        Q
        &\coloneqq
        \sum_{r=0}^{n}
        q_r(a^\dagger)a^r\,,
        &
        q_r(z)
        &\coloneqq
        (-1)^{n+r}
        \det\widetilde{\mathsf{P}}_{[r]}(z)
        \label{eq: squeezed-annihilator}
    \end{align}
    satisfies $\ker Q=\mathcal V$. 
\end{lemma}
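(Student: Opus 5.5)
Via the Bargmann representation $\ket{\phi}=g(a^\dagger)\ket{0}\leftrightarrow g(z)$, the operator $a$ becomes $\partial_z$ and $a^\dagger$ becomes multiplication by $z$. The action formula \eqref{eq: squeezed-derivative-action} is exactly this dictionary applied to $f_j$. Since $Q=\sum_r q_r(a^\dagger)a^r$ is written with $a$ to the right, it corresponds to the linear ODE operator $\hat Q=\sum_{r=0}^n q_r(z)\partial_z^r$ acting on entire functions in Segal--Bargmann space. So $\ker Q$ corresponds to the solutions of $\hat Q g=0$ lying in the Fock space, and the proof splits into three steps: (a) $\Delta\not\equiv0$, (b) $\hat Q f_j=0$ for every $j$, and (c) the kernel has dimension at most $n$.

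For (a), note that $\mathsf P(z)$ factors as
\[
\mathsf P(z)=\bigl[f_j^{(r)}(z)\bigr]_{r,j}\,\diag\bigl(e^{-\kappa_j z^2-\ell_j z}\bigr).
\]
Hence $\Delta(z)=W(f_1,\dots,f_n)(z)\,\prod_j e^{-\kappa_j z^2-\ell_j z}$, where $W$ is the Wronskian. Linear independence of the vectors $\ket{e_j}$ is the same as linear independence of the entire functions $f_j$, since the Bargmann map is injective. For analytic functions, the Wronskian vanishes identically only if the functions are linearly dependent. So $W\not\equiv0$, and therefore $\Delta\not\equiv 0$. The same factorization shows that $\Delta$ is a polynomial, because the $p_{rj}$ are.

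For (b), expand the $(n+1)\times(n+1)$ determinant whose first $n$ columns are $\widetilde{\mathsf P}(z)$ and whose last column is $(g,g',\dots,g^{(n)})^T$ along that last column. With the sign convention $(-1)^{n+r}$ this gives $\hat Q g$, up to an overall factor $\prod_j e^{-\kappa_jz^2-\ell_jz}$ that I will absorb carefully. Taking $g=f_j$ makes the last column proportional to column $j$, so the determinant vanishes and $\hat Q f_j=0$. In operator language, $Q\ket{e_j}=\sum_r q_r(a^\dagger)p_{rj}(a^\dagger)e^{\kappa_j(a^\dagger)^2+\ell_ja^\dagger}\ket0=0$. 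This is a cofactor (Laplace) identity, so $\mathcal V\subseteq\ker Q$.

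For (c), the leading coefficient of $\hat Q$ is $q_n=\det\widetilde{\mathsf P}_{[n]}=\Delta$, which is a nonzero polynomial. Let $\ket\phi\in\ker Q$ in the common domain, with Bargmann function $g$. Then $g$ is entire and satisfies $\hat Q g=0$. On any simply connected domain avoiding the finitely many zeros of $\Delta$, the ODE is regular and has a solution space of dimension exactly $n$, spanned there by $f_1,\dots,f_n$. So $g=\sum_j x_jf_j$ on that domain, and by the identity theorem this holds on all of $\C$. Hence $\ket\phi\in\mathcal V$ and $\ker Q=\mathcal V$.

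The main obstacle is making (c) rigorous for the unbounded operator $Q$. Specifically, one must show that every vector of the chosen dense domain in $\ker Q$ has an entire Bargmann function on which $a$ acts as $\partial_z$. It suffices to work on the Segal--Bargmann space with its maximal domain, where this holds. The only other point needing care is bookkeeping of the sign and the exponential prefactor in the cofactor expansion, which does not affect the kernel.
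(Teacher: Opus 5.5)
Your proposal is correct and follows essentially the same route as the paper. You use the Wronskian factorization $W(f_1,\dots,f_n)(z)=e^{\sum_j(\kappa_jz^2+\ell_jz)}\Delta(z)$ to get $\Delta\not\equiv0$, a cofactor expansion against a duplicated column to get $\sum_r q_r(z)p_{rj}(z)=0$ and hence $\mathcal V\subseteq\ker Q$, and local ODE uniqueness near a point $z_0$ with $\Delta(z_0)\neq0$ to get $\dim\ker Q\le n$. Your explicit appeal to the identity theorem, to pass from the local solution space to all of $\C$, fills in a step the paper leaves implicit. One bookkeeping slip: no exponential prefactor appears in your determinant expansion, because the first $n$ columns are the polynomial entries of $\widetilde{\mathsf P}$; the factor $e^{\kappa_jz^2+\ell_jz}$ enters only as the proportionality constant between the last column and column $j$ when $g=f_j$.
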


\begin{proof}
    We seek a normally ordered operator
    \begin{align}
        Q
        &\coloneqq
        \sum_{r=0}^{n}
        q_r(a^\dagger)a^r
    \end{align}
    that annihilates each $\ket{e_j}$. Using Eq.~\eqref{eq: squeezed-derivative-action} we find that $Q\ket{e_j}=0$ provided
    \begin{align}
        \sum_{r=0}^{n}
        q_r(z)p_{rj}(z)
        =
        0\,,
        \qquad
        j=1,\ldots,n\,.
        \label{eq: squeezed-left-kernel}
    \end{align}
    Thus the row vector $\bm{q}(z)\coloneqq (q_0(z),\ldots,q_n(z))$ must lie in the left kernel of $\widetilde{\mathsf{P}}(z)$. 

    For each $j$, let
    \begin{align}
        \bm v^{[j]}(z)
        \coloneqq
        \bigl(
        p_{0j}(z),\ldots,p_{nj}(z)
        \bigr)^T
    \end{align}
    be the $j$th column of $\widetilde{\mathsf P}(z)$, and define
    \begin{align}
        \mathsf M_j(z)
        \coloneqq
        \left[
        \widetilde{\mathsf P}(z)
        \,\middle|\,
        \bm v^{[j]}(z)
        \right].
    \end{align}
    Since the last column of $\mathsf M_j(z)$ duplicates its $j$th
    column, $\det\mathsf M_j(z)=0$. Using cofactor expansion along the last column of $\mathsf{M}_j(z)$, we have
    \begin{align}
        0
        &=
        \det\mathsf M_j(z)
        \notag\\
        &=
        \sum_{r=0}^{n}
        (-1)^{n+r}
        p_{rj}(z)
        \det\widetilde{\mathsf P}_{[r]}(z)
        \notag\\
        &=
        \sum_{r=0}^{n}
        q_r(z)p_{rj}(z)\,.
        \label{eq: squeezed-cofactor-identity}
    \end{align}
               
    For $r=n$, deleting the last row of
    $\widetilde{\mathsf{P}}(z)$ gives $\mathsf{P}(z)$, and hence
    \begin{align}
        q_n(z)
        =
        \det\mathsf{P}(z)
        =
        \Delta(z)\,.
    \end{align}
    Furthermore, the Wronskian of $f_1,\ldots,f_n$ is
    \begin{align}
        \det
        \left[
            f_j^{(r)}(z)
        \right]_{
            \substack{
                r=0,\ldots,n-1\\
                j=1,\ldots,n
            }
        }
        =
        e^{\sum_{j=1}^{n}(\kappa_jz^2+\ell_jz)}
        \Delta(z)\,.
    \end{align}
    Since the $f_j$ are linearly independent analytic functions, their Wronskian is not identically zero. Therefore, $\Delta\not\equiv 0$ (not a zero polynomial) and in particular $Q$ contains powers of $a$ up to $a^n$. Eq.~\eqref{eq: squeezed-cofactor-identity} gives
    \begin{align}
        Q\ket{e_j}
        &=
        e^{\kappa_j(a^\dagger)^2+\ell_j a^\dagger}
        \sum_{r=0}^{n}
        q_r(a^\dagger)p_{rj}(a^\dagger)\ket{0}
        =
        0\,,
    \end{align}
    Hence $\mathcal V\subseteq\ker Q$. 
    
    Now choose $z_0$ such that $\Delta(z_0)\neq0$. In the Bargmann
    representation, the differential equation associated with $Q$ (replacing $a\mapsto \partial_z$ and $a^\dagger\mapsto z$) has order
    $n$ in a neighbourhood of $z_0$, and each solution is uniquely determined
    there by
    \begin{align}
        g(z_0),g'(z_0),\ldots,g^{(n-1)}(z_0)\,.
    \end{align}
    Consequently, $\dim\ker Q\leq n$ and since $\dim\mathcal V=n$, we obtain $\ker Q=\mathcal V$, as required.
\end{proof}

Next, we prove that the operator $F_j$ such that $F_j\ket{e_k} = \delta_{jk}\ket{0}$ can be systematically constructed. At a high level, the idea is that if we want $F_j\ket{e_k}=\delta_{jk}\ket{0}$, then we expect $F_j$ to take the form
\begin{align}
    F_j \propto a^d e^{-\kappa_j(a^\dagger)^2-\ell_j a^\dagger}G_j\,,
\end{align}
where $G_j$ is some auxiliary factor that isolates the $j$-th Gaussian components $(\kappa_j,\ell_j)$. More precisely, we require
\begin{align}
    G_j\ket{e_k}
    =
    \delta_{jk}\,
    \varphi(a^\dagger)
    e^{\kappa_k(a^\dagger)^2+\ell_k a^\dagger}
    \ket{0}\,,
    \label{eq: auxiliary-G-action}
\end{align}
for some polynomial $\varphi$ independent of $j$ and $k$. For this we adopt an analogous ansatz as $Q$, namely
\begin{align}
    G_j
    =
    \sum_{r=0}^{n-1}
    g_{jr}(a^\dagger)a^r
\end{align}
for some $g_{jr}(z)$ to be determined. Let $\mathsf G(z)=[g_{jr}(z)]$. Using
Eq.~\eqref{eq: squeezed-derivative-action}, we get 
\begin{align}
    G_j\ket{e_k} &= \sum_{r}g_{jr}(a^\dagger)p_{rk}(a^\dagger)e^{\kappa_k (a^\dagger)^2+\ell_k a^\dagger}\ket{0} \notag\\ 
    &\equiv [\mathsf{G}(a^\dagger)\mathsf{P}(a^\dagger)]_{jk}e^{\kappa_k (a^\dagger)^2+\ell_k a^\dagger}\ket{0} \,,
\end{align}
hence it boils down to the question of whether there exists $\mathsf{G}(z)$ such that $\mathsf{G}(z)\mathsf{P}(z) = \varphi(z)\openone_n$ where $\varphi(z)$ is some scalar function. The positive answer gives the required check operator.

\begin{lemma}
    \label{lemma: squeezed-extraction-operators}
    Write
    \begin{align}
        \Delta(z)
        =
        \gamma z^d+\text{lower-order terms},
        \qquad
        \gamma\neq0\,,
    \end{align}
    where $d$ is the degree of $\Delta(z)$ and define
    \begin{align}
        G_j
        &\coloneqq
        \sum_{r=0}^{n-1}
        \left[
            \operatorname{adj}\mathsf{P}(a^\dagger)
        \right]_{j,r+1}
        a^r\,,
        \label{eq: squeezed-intermediate-extractor}
        \\
        F_j
        &\coloneqq
        \frac{1}{\gamma d!}\,
        a^d
        e^{-\kappa_j(a^\dagger)^2-\ell_j a^\dagger}
        G_j\,.
        \label{eq: squeezed-extractor}
    \end{align}
    where $\operatorname{adj}(A)$ is the adjugate matrix of $A$, i.e., $A\operatorname{adj}(A)=\det(A)\openone$. Then $F_j\ket{e_k}=\delta_{jk}\ket{0}$.
\end{lemma}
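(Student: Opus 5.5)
The plan is a direct computation on each basis vector $\ket{e_k}$, in two stages. First I show that $G_j$ acts on $\mathcal V$ as an isolating operator: it kills $\ket{e_k}$ for $k\neq j$, and on $\ket{e_j}$ it multiplies the Gaussian factor by the polynomial $\Delta(a^\dagger)$. Then I show that the prefactor $\frac{1}{\gamma d!}a^d e^{-\kappa_j(a^\dagger)^2-\ell_j a^\dagger}$ turns $\Delta(a^\dagger)e^{\kappa_j(a^\dagger)^2+\ell_j a^\dagger}\ket{0}$ into exactly $\ket{0}$.

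\emph{Step 1 (action of $G_j$).} I apply Eq.~\eqref{eq: squeezed-derivative-action} term by term. For each $r$,
\begin{align*}
    g_{jr}(a^\dagger)a^r\ket{e_k}
    =
    g_{jr}(a^\dagger)p_{rk}(a^\dagger)e^{\kappa_k(a^\dagger)^2+\ell_k a^\dagger}\ket{0},
\end{align*}
where $g_{jr}=[\operatorname{adj}\mathsf P]_{j,r+1}$. All polynomials in $a^\dagger$ commute with each other, so summing over $r=0,\ldots,n-1$ gives
\begin{align*}
    G_j\ket{e_k}
    =
    [\operatorname{adj}\mathsf P(a^\dagger)\,\mathsf P(a^\dagger)]_{jk}\,e^{\kappa_k(a^\dagger)^2+\ell_k a^\dagger}\ket{0}.
\end{align*}
Over the commutative ring $\C[z]$ the adjugate satisfies $\operatorname{adj}\mathsf P\,\mathsf P=\det\mathsf P\,\openone_n=\Delta\,\openone_n$ on both sides. (The statement quotes only $A\operatorname{adj}A=\det A\,\openone$, but the left-sided identity holds equally.) Hence
\begin{align*}
    G_j\ket{e_k}=\delta_{jk}\,\Delta(a^\dagger)e^{\kappa_j(a^\dagger)^2+\ell_j a^\dagger}\ket{0},
\end{align*}
which is Eq.~\eqref{eq: auxiliary-G-action} with $\varphi=\Delta$. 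Here I use that $\Delta\not\equiv0$, which was established in Lemma~\ref{lemma: squeezed-Wronskian-annihilator}; this guarantees that the degree $d$ and the leading coefficient $\gamma$ are well defined.

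\emph{Step 2 (extraction).} For $k\neq j$ there is nothing further to prove, since $G_j\ket{e_k}=0$. For $k=j$, the operators $e^{-\kappa_j(a^\dagger)^2-\ell_j a^\dagger}$, $\Delta(a^\dagger)$ and $e^{\kappa_j(a^\dagger)^2+\ell_j a^\dagger}$ are all functions of $a^\dagger$ alone and so commute. As power series in $z$ the two exponentials cancel, leaving $\Delta(a^\dagger)\ket{0}$. Writing $\Delta(z)=\sum_{m\le d}\delta_m z^m$ with $\delta_d=\gamma$, we have $(a^\dagger)^m\ket0=\sqrt{m!}\ket m$. The operator $a^d$ annihilates $\ket m$ for $m<d$ and sends $(a^\dagger)^d\ket0$ to $d!\ket0$. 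Therefore $a^d\Delta(a^\dagger)\ket0=\gamma d!\ket0$, and the normalisation $\frac{1}{\gamma d!}$ gives $F_j\ket{e_j}=\ket0$.

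\emph{Main obstacle.} I expect the delicate point to be justifying the cancellation $e^{-\kappa_j(a^\dagger)^2-\ell_j a^\dagger}e^{\kappa_j(a^\dagger)^2+\ell_j a^\dagger}=\openone$ on the relevant vectors. The operator $e^{-\kappa_j(a^\dagger)^2-\ell_j a^\dagger}$ is unbounded and is not defined on all of Fock space. I would handle this in the Bargmann representation, where $a^\dagger\mapsto z$ and $a\mapsto\partial_z$. There $G_j\ket{e_j}$ is the entire function $\Delta(z)e^{\kappa_j z^2+\ell_j z}$, and multiplication by $e^{-\kappa_j z^2-\ell_j z}$ yields the polynomial $\Delta(z)$, which is manifestly in the Segal--Bargmann space. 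Accordingly, I would define $F_j$ on the dense domain $\mathcal V+\Span\{\text{polynomials}\}$, as permitted by the remark at the end of Sec.~\ref{subsec: abstract-parent}. On $\mathcal V$ itself the composite operator acts through exactly the finite-dimensional computation above.
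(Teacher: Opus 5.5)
Your proposal is correct and follows essentially the same route as the paper. The left adjugate identity $\operatorname{adj}\mathsf P\,\mathsf P=\Delta\,\openone_n$ gives $G_j\ket{e_k}=\delta_{jk}\Delta(a^\dagger)e^{\kappa_k(a^\dagger)^2+\ell_k a^\dagger}\ket{0}$, and after cancelling the Gaussian factor, $a^d\Delta(a^\dagger)\ket{0}=\gamma d!\ket{0}$ fixes the normalization; your Bargmann-space treatment of the domain of $e^{-\kappa_j(a^\dagger)^2-\ell_j a^\dagger}$ is extra care that the paper leaves to its blanket ``suitable dense domain'' convention.
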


\begin{proof}
    The required matrix $\mathsf{G}$ is in fact the adjugate matrix of $\mathsf{P}$ \cite{Horn1985matrixanalysis}. Indeed, the adjugate identity
    \begin{align}
        \operatorname{adj}\mathsf{P}(z)\mathsf{P}(z)
        =
        \Delta(z)\openone_n
    \end{align}
    gives
    \begin{align}
        G_j\ket{e_k}
        &=
        \sum_{r=0}^{n-1}
        \left[
        \operatorname{adj}\mathsf P(a^\dagger)
        \right]_{j,r+1}
        p_{rk}(a^\dagger)
        e^{\kappa_k(a^\dagger)^2+\ell_k a^\dagger}
        \ket{0}
        \notag\\
        &=
        \left[
        \operatorname{adj}\mathsf P(a^\dagger)
        \mathsf P(a^\dagger)
        \right]_{jk}
        e^{\kappa_k(a^\dagger)^2+\ell_k a^\dagger}
        \ket{0}
        \notag\\
        &=
        \delta_{jk}\Delta(a^\dagger)
        e^{\kappa_k(a^\dagger)^2+\ell_k a^\dagger}
        \ket{0}\,.
    \end{align}
    After canceling the squeezed-coherent part by $e^{-\kappa_j(a^\dagger)^2-\ell_j a^\dagger}$, the annihilation operator $a^d$ acts on $\Delta(a^\dagger)\ket{0}$ to give
    \begin{align}
        a^d\Delta(a^\dagger)\ket{0} = \gamma d!\ket{0}\,,
    \end{align}
    which fixes the normalization of $F_j$, hence $F_j\ket{e_k} = \delta_{jk}\ket{0}$ as required.
\end{proof}

Using the abstract construction in Sec.~\ref{subsec: abstract-parent}, we obtain the parent Hamiltonian.
\begin{proposition}[Single-mode parent term for photon-added squeezed states]
    \label{prop: squeezed-parent-construction}

    Let $Q$ and $F_j$ be the operators constructed in
    Lemmas~\ref{lemma: squeezed-Wronskian-annihilator}
    and~\ref{lemma: squeezed-extraction-operators}. Define
    \begin{equation}
        \begin{aligned}
            h_0
            &\coloneqq
            Q^\dagger Q\,,
            \\
            h_R
            &\coloneqq
            \sum_{\rho=1}^{n-m}
            O_\rho^\dagger O_\rho\,,\quad 
            O_\rho
            \coloneqq
            \sum_{j=1}^{n}
            R_{\rho j}F_j\,.
            \label{eq: squeezed-check-term}
        \end{aligned}
    \end{equation}
    Then $\ker h_0 = \mathcal V$ and $\ker(h_0+h_R)=\mathcal{W}$. 
\end{proposition}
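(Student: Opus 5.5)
The proof reduces to the abstract construction of Sec.~\ref{subsec: abstract-parent}. The two lemmas supply exactly the ingredients that construction assumes, so almost nothing new has to be established. I will check the two kernel identities in turn.

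\textbf{First, $\ker h_0=\mathcal V$.} For any $\ket{\phi}$ in a suitable dense domain,
\[
\braket{\phi|h_0|\phi}=\lVert Q\phi\rVert^2 .
\]
Hence $\ker h_0=\ker Q$, where $h_0=Q^\dagger Q$ is understood as the positive self-adjoint operator associated with the closed form. Lemma~\ref{lemma: squeezed-Wronskian-annihilator} gives $\ker Q=\mathcal V$. Since $\mathcal V$ is spanned by vectors of the form $u_j(a^\dagger)e^{\kappa_j(a^\dagger)^2+\ell_ja^\dagger}\ket{0}$ with $|\kappa_j|<1/2$, these vectors have finite norm. Every finite polynomial in $a,a^\dagger$ acts on them with finite norm, because $z^p e^{\kappa z^2+\ell z}$ still lies in the Segal--Bargmann space, just as in Appendix~\ref{appendix: ansatz-absolute-convergence}. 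So $\mathcal V$ lies in the domain and the identity $\ker h_0=\mathcal V$ is meaningful.

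\textbf{Second, $\ker(h_0+h_R)=\mathcal W$.} Both summands are positive, so
\[
\ker(h_0+h_R)=\ker h_0\cap\ker h_R=\mathcal V\cap\bigcap_\rho\ker O_\rho .
\]
Take $\ket{\phi}=\sum_j x_j\ket{e_j}\in\mathcal V$. Lemma~\ref{lemma: squeezed-extraction-operators} gives $O_\rho\ket{\phi}=(Rx)_\rho\ket{0}$. Therefore $\ket{\phi}\in\ker h_R$ exactly when $x\in\ker R$. Because $\rank R=n-m$ and $RC^T=0$, we have $\ker R=\im C^T$. The vectors with $x\in\im C^T$, i.e.\ $x=C^Ty$, are precisely $\sum_\mu y_\mu\ket{\psi_\mu}$, so this set is $\mathcal W$. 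Linear independence of the $\ket{e_j}$ makes the correspondence $x\mapsto\ket{\phi}$ bijective, so no spurious kernel vectors arise.

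The main obstacle is domain and well-definedness, in particular for $F_j$, which contains the unbounded non-polynomial factor $e^{-\kappa_j(a^\dagger)^2-\ell_ja^\dagger}$. I would handle it in the Bargmann representation. There $F_j$ acts on vectors of the form $p(z)e^{\kappa z^2+\ell z}$ by multiplication and differentiation, producing entire functions. It suffices to define $h_R$ through the quadratic form $\sum_\rho\lVert O_\rho\phi\rVert^2$ on the domain of vectors for which all $O_\rho\phi$ are normalizable, and then take the Friedrichs extension. This domain contains $\mathcal V$, where each $O_\rho$ lands in $\C\ket{0}$. With this convention the kernel computation above is rigorous, consistent with the paper's stated convention that unbounded expressions are understood on a common dense domain.
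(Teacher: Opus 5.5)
Your proof is correct and follows the paper's argument. The paper likewise takes $\ker h_0=\ker Q=\mathcal V$ from the Wronskian lemma and $O_\rho\ket{\phi}=(Rx)_\rho\ket{0}$ on $\mathcal V$ from the extraction lemma. It then uses $\ker R=\im C^T$ to get $\ker h_R\cap\mathcal V=\mathcal W$, and positivity to write $\ker(h_0+h_R)$ as the intersection of the two kernels.

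Your Bargmann-space domain discussion adds rigor that the paper leaves to its blanket ``common dense domain'' convention. It would be more accurate to speak of the self-adjoint operator associated with the closed form rather than a Friedrichs extension. The forms are closed on the maximal domains because Bargmann-norm convergence implies locally uniform convergence of entire functions.
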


\begin{proof}
    Lemma~\ref{lemma: squeezed-Wronskian-annihilator} gives $\ker h_0=\ker Q=\mathcal V$. For
    \begin{align}
        \ket{\phi}
        =
        \sum_{j=1}^{n}x_j\ket{e_j}
        \in\mathcal V\,,
    \end{align}
    Lemma~\ref{lemma: squeezed-extraction-operators} gives
    \begin{align}
        O_\rho\ket{\phi}
        =
        (Rx)_\rho\ket{0}\,.
    \end{align}
    Since $R$ has rank $n-m$ and satisfies $RC^T=0$, we have $\ker R=\im C^T$ and hence $\ker h_R\cap\mathcal V=\mathcal W$. Positivity of $h_0$ and $h_R$ gives $\ker(h_0+h_R)=\ker h_0\cap\ker h_R=\mathcal W$.
\end{proof}

\end{document}